\title{Nonholonomic Hamilton--Jacobi Theory via Chaplygin Hamiltonization}
\date{\today}
\author{Tomoki Ohsawa}
\address{Department of Mathematics, University of California, San Diego, 9500 Gilman Drive, La Jolla, CA 92093-0112}
\email{tohsawa@ucsd.edu}
\author{Oscar E. Fernandez}
\address{Institute for Mathematics and its Applications, University of Minnesota, 207 Church Street~SE, Minneapolis, MN 55455}
\email{oscarum@umich.edu, ferna007@ima.umn.edu}
\author{Anthony M. Bloch}
\address{Department of Mathematics, University of Michigan, 530 Church Street, Ann Arbor, MI 48109-1043}
\email{abloch@umich.edu}
\author{Dmitry V. Zenkov}
\address{Department of Mathematics, North Carolina State University, Raleigh, NC 27695}
\email{dvzenkov@ncsu.edu}
\numberwithin{equation}{section}
\theoremstyle{plain}
\newtheorem{theorem}{Theorem}[section]
\newtheorem{corollary}[theorem]{Corollary}
\newtheorem{lemma}[theorem]{Lemma}
\newtheorem{proposition}[theorem]{Proposition}
\theoremstyle{definition}
\newtheorem{definition}[theorem]{Definition}
\newtheorem{example}[theorem]{Example}
\theoremstyle{remark}
\newtheorem{remark}[theorem]{Remark}
\def\od#1#2{\dfrac{d#1}{d#2}}
\def\pd#1#2{\dfrac{\partial #1}{\partial #2}}
\def\tpd#1#2{\partial #1/\partial #2}
\def\tod#1#2{d#1/d#2}
\def\parentheses#1{\!\left(#1\right)}
\def\brackets#1{\!\left[#1\right]}
\def\braces#1{\!\left\{#1\right\}}
\def\R{\mathbb{R}}
\def\defeq{\mathrel{\mathop:}=}
\def\eqdef{=\mathrel{\mathop:}}
\def\setdef#1#2{ \left\{ #1 \ |\ #2 \right\} }
\def\ip#1#2{\left\langle#1,#2\right\rangle}
\def\divergence{\mathop{\mathrm{div}}\nolimits}
\def\hor{\mathop{\mathrm{hor}}\nolimits}
\def\ver{\mathop{\mathrm{ver}}\nolimits}
\def\hl{\mathop{\mathrm{hl}}\nolimits}
\def\id{\mathop{\mathrm{id}}\nolimits}
\def\FL{\mathbb{F}L}
\def\F{\mathbb{F}}
\def\Ad{\mathop{\mathrm{Ad}}\nolimits}
\begin{document}

\footskip=.6in

\begin{abstract}
  We develop Hamilton--Jacobi theory for Chaplygin systems, a certain class of nonholonomic mechanical systems with symmetries, using a technique called Hamiltonization, which transforms nonholonomic systems into Hamiltonian systems.
  We give a geometric account of the Hamiltonization, identify necessary and sufficient conditions for Hamiltonization, and apply the conventional Hamilton--Jacobi theory to the Hamiltonized systems.
  We show, under a certain sufficient condition for Hamiltonization, that the solutions to the Hamilton--Jacobi equation associated with the Hamiltonized system also solve the nonholonomic Hamilton--Jacobi equation associated with the original Chaplygin system.
  The results are illustrated through several examples.
\end{abstract}

\maketitle


\section{Introduction}
\subsection{Background and Motivation}
In 1911 S.A. Chaplygin published a paper (re-published in English in \cite{Ch2008})  introducing his theory of the ``reducing multiplier'' into the study of nonholonomically constrained mechanical systems.
In his paper, Chaplygin showed that a two degree of freedom nonholonomic system possessing an invariant measure became Hamiltonian after a suitable reparameterization of time, a process we would like to refer to as {\em Chaplygin Hamiltonization}.
Since then, Chaplygin's result has generated considerable interest and been extended \cite{St1989,KoRiEh2002,FeJo2004,HoGa2009,FeMeBl2009} to more general settings.

However, a second contribution contained in Chaplygin's paper has been left undeveloped.
In Section~5 of his paper, Chaplygin integrates the nonholonomic system now known as the Chaplygin Sleigh~\cite{Bl2003} by using the Hamilton--Jacobi equation for the Hamiltonized system.
The aim of this paper is to develop this idea further to establish a link with the nonholonomic Hamilton--Jacobi equation in \citet{IgLeMa2008} and \citet{OhBl2009}.

Specifically, we first employ the technique called Chaplygin Hamiltonization to transform Chaplygin systems into Hamiltonian systems, and then apply the conventional Hamilton--Jacobi theory to the resulting Hamiltonian systems to obtain what we would like to call the {\em Chaplygin Hamilton--Jacobi equation}.
This is an indirect approach towards Hamilton--Jacobi theory for nonholonomic systems, compared to the direct approach of extending Hamilton--Jacobi theory to nonholonomic systems, as in \citet{IgLeMa2008}, \citet{LeMaMa2010}, \citet{OhBl2009}, and \citet{CaGrMaMaMuRo2010}.

\subsection{Direct vs.~Indirect Approaches}
The indirect approach to nonholonomic Hamilton--Jacobi theory via Chaplygin Hamiltonization has both advantages and disadvantages.
The main advantage is that we have a conventional Hamilton--Jacobi equation and thus the separation of variables argument applies in a rather straightforward manner compared to the direct approach in \citet{OhBl2009}.
A disadvantage is that the Chaplygin Hamiltonization works only for certain nonholonomic systems; even if it does, the relationship between the Hamilton--Jacobi equation and the original nonholonomic system is not transparent, since one has to inverse-transform the information in the Hamiltonized systems.
Nevertheless, Hamiltonization is known to be a powerful technique for integration of nonholonomic systems~\cite{Ch2008,FeJo2004,FeJo2009,BlFeMe2009,FeMeBl2009}, and hence it is interesting to establish a connection with the direct approach.

Let us briefly summarize the differences between two approaches.
Recall from \citet{OhBl2009} that the {\em nonholonomic Hamilton--Jacobi equation} is an equation for a one-form $\gamma$ on the original configuration manifold $Q$:
\begin{equation}
  \label{eq:NHHJ}
  H \circ \gamma = E,
\end{equation}
along with the condition that $\gamma$, seen as a map from $Q$ to $T^{*}Q$, takes values in the constrained momentum space $\mathcal{M} \subset T^{*}Q$ (see Eq.~\eqref{eq:mathcalM} below), i.e., $\gamma: Q \to \mathcal{M}$, and also that
\begin{equation}
  \label{eq:dgamma}
  d\gamma|_{\mathcal{D}\times\mathcal{D}} = 0, \text{ i.e., } d\gamma(v,w) = 0 \text{ for any } v, w \in \mathcal{D},
\end{equation}
where $\mathcal{D} \subset TQ$ is the distribution defined by nonholonomic constraints, and $H: T^{*}Q \to \R$ the Hamiltonian.

On the other hand, the Chaplygin Hamiltonization first reduces the system by identifying it as a so-called Chaplygin system with a symmetry group $G$, and then Hamiltonizes the system on the cotangent bundle $T^{*}(Q/G)$ of the reduced configuration space $Q/G$.
The resulting system is a (strictly) Hamiltonian system on $T^{*}(Q/G)$ with another Hamiltonian $\bar{H}_{\rm C}: T^{*}(Q/G) \to \R$; so we may apply the conventional Hamilton--Jacobi theory to the Hamiltonized system to obtain the {\em Chaplygin Hamilton--Jacobi equation}
\begin{equation*}
  \bar{H}_{\rm C} \circ d\bar{W} = E,
\end{equation*}
which is a partial differential equation for a function $\bar{W}: Q/G \to \R$.
Therefore, the difference lies not only in the forms of the equations (the former involves the one-form $\gamma$, which is not even closed, whereas the latter invokes the exact one-form $d\bar{W}$), but also in the spaces on which the equations are defined.
Furthermore, the Chaplygin Hamilton--Jacobi equation corresponds to the Hamiltonized dynamics and is related to the original nonholonomic one in a rather indirect way.
Therefore, on the surface, there does not seem to be an apparent relationship between the two approaches.

\subsection{Main Results}
The main goal of this paper is to establish a link between the two distinct approaches towards Hamilton--Jacobi theory for nonholonomic systems.
To that end, we first formulate the Chaplygin Hamiltonization in an intrinsic manner to elucidate the geometry involved in the Hamiltonization.
This gives a slight generalization of the Chaplygin Hamiltonization by \citet{FeJo2004} and also an intrinsic account of the necessary and sufficient condition for Hamiltonizing a Chaplygin system presented in \cite{FeMeBl2009}.
These results are also related to the existence of an invariant measure in nonholonomic systems~(see, e.g., \citet{Ko2002}, \citet{ZeBl2003}, and \citet{FeJo2004}).

We also identify a sufficient condition for the Chaplygin Hamiltonization, which turns out to be identical to one of those for another kind of Hamiltonization (which renders the systems ``conformal symplectic''~\cite{HoGa2009}) obtained by \citet{St1989} and \citet{CaCoLeMa2002}.
We then give an explicit formula that transforms the solutions of the Chaplygin Hamilton--Jacobi equation into those of the nonholonomic Hamilton--Jacobi equation~(see Fig.~\ref{fig:RelatingHJs}).
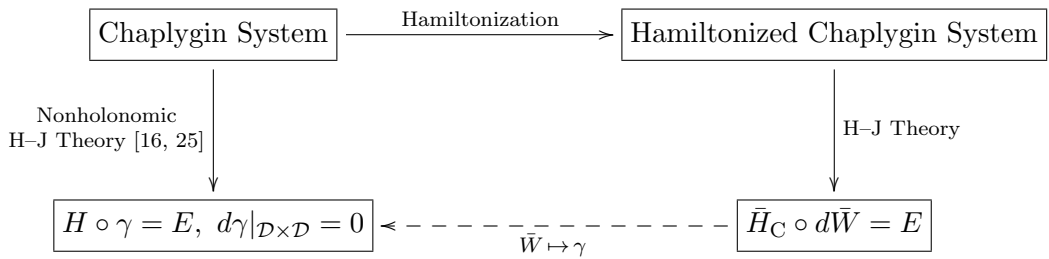
\begin{figure}[h!]
  \centerline{
    \xymatrix@!0@R=1in@C=3.25in{
      {\framebox{\strut Chaplygin System}} \ar[r]^{\text{Hamiltonization}\qquad\quad} \ar[d]_{\substack{\text{Nonholonomic}\smallskip\\\text{H--J Theory~\cite{IgLeMa2008,OhBl2009}}}} & \framebox{\strut Hamiltonized Chaplygin System} \ar[d]^{\text{H--J Theory}}
      \\
      \framebox{\strut $H \circ \gamma = E,\ d\gamma|_{\mathcal{D} \times \mathcal{D}} = 0$} \ar@{<--}[r]_{\qquad \bar{W}\,\mapsto\,\gamma} & \framebox{\strut $\bar{H}_{\text{C}} \circ d\bar{W} = E$}
    }
  }
  \caption{Relationship between the nonholonomic H--J equation applied to a Chaplygin system and the H--J equation applied to the Hamiltonized Chaplygin system. Explicit formulas for the correspondence $\bar{W} \mapsto \gamma$ are given in Theorems~\ref{thm:ChaplyginHJ-NHHJ} and \ref{thm:ChaplyginHJ2-NHHJ}.}
  \label{fig:RelatingHJs}
\end{figure}
Interestingly, it turns out that the sufficient condition plays an important role here as well.
We also present an extension of these results to a class of systems that are Hamiltonizable after reduction by two stages, following the idea of \citet{HoGa2009}.
We illustrate, through several examples, that the Chaplygin Hamilton--Jacobi equation may be solved by separation of variables, and that the solutions are identical to those obtained by \citet{OhBl2009} after the transformation mentioned above.

\subsection{Outline}
We begin with an overview of nonholonomic mechanical systems in Section~\ref{hfnm}, specializing to Chaplygin systems in Section~\ref{css}.
After discussing the relationship between the Hamiltonizability of a nonholonomic system and the existence of an invariant measure for it in Section~\ref{chsc}, we derive necessary and sufficient conditions for a Chaplygin system to be Hamiltonizable in an intrinsic manner in Section~\ref{chsc2}.
This result then leads to the development of Hamilton--Jacobi theory for Hamiltonizable Chaplygin systems in Section~\ref{hjnhsec}.
Specifically, we relate the Chaplygin Hamilton--Jacobi equation for the Hamiltonized system with the nonholonomic Hamilton--Jacobi equation for the original system.
A couple of examples are presented in Section~\ref{exsec} to illustrate the theoretical results.
In Section~\ref{sec:FurtherReduction}, we introduce a further reduction of the reduced Chaplygin systems under certain conditions; the second reduction is employed to Hamiltonize those systems that are not Hamiltonizable after the first reduction.
Then, in Section~\ref{sec:NHHJafter2ndReduction}, we relate the Chaplygin Hamilton--Jacobi equation for such systems with the nonholonomic Hamilton--Jacobi equation.
We then illustrate the theory in the Snakeboard example.


\section{Chaplygin Systems}
\subsection{Hamiltonian Formulation of Nonholonomic Mechanics}\label{hfnm}
Consider a nonholonomic system on an $n$-dimensional configuration manifold $Q$ with a constraint distribution $\mathcal{D} \subset TQ$ defined by the constraint one-forms $\{ \omega^{s} \}_{s=1}^{m}$ as
\begin{equation*}
  \mathcal{D} = \setdef{ v \in TQ }{ \omega^{s}(v) = 0,\, s = 1, \dots, m }
\end{equation*}
and also with the Lagrangian $L: TQ \to \R$ of the form
\begin{equation}
  \label{eq:SimpleLagrangian}
  L(v_{q}) = \frac{1}{2}g_{q}(v_{q}, v_{q}) - V(q)
\end{equation}
with the kinetic energy metric $g$ defined on $Q$.
Define the Legendre transform $\FL: TQ \to T^{*}Q$ by
\begin{equation*}
  \ip{ \FL(v_{q}) }{ w_{q} } = \left.\od{}{\varepsilon} L(v_{q} + \varepsilon\,w_{q}) \right|_{\varepsilon=0}
  = g_{q}(v_{q}, w_{q})
  = \ip{ g^{\flat}_{q}(v_{q}) }{w_{q}},
\end{equation*}
where the last equality defines $g^{\flat}: TQ \to T^{*}Q$; hence we have $\FL = g^{\flat}$.
Also define the Hamiltonian $H: T^{*}Q \to \R$ by
\begin{equation*}
  H(p_{q}) \defeq \ip{p_{q}}{v_{q}} - L(v_{q}),
\end{equation*}
where $v_{q} = (\FL)^{-1}(p_{q})$ on the right-hand side.
Then, Hamilton's equations for nonholonomic systems are written as follows:
\begin{equation}
  \label{eq:NHHam}
  i_{X} \Omega = dH - \lambda_{s} \pi_{Q}^{*} \omega^{s},
\end{equation}
along with
\begin{equation}
  \label{eq:NHHam-condition}
  T\pi_{Q}(X) \in \mathcal{D}
  \quad\text{or}\quad
  \omega^{s}(T\pi_{Q}(X)) = 0
  \quad \text{for}\quad s = 1, \dots, m,
\end{equation}
where $\pi_{Q}: T^{*}Q \to Q$ is the cotangent bundle projection.
Introducing the {\em constrained momentum space}
\begin{equation}
  \label{eq:mathcalM}
  \mathcal{M} \defeq \FL(\mathcal{D}) \subset T^{*}Q,
\end{equation}
the above constraints may be replaced by $p \in \mathcal{M}$.

\subsection{Chaplygin Systems}\label{css}
\begin{definition}[Chaplygin Systems]
  \label{def:ChaplyginSystems}
  A nonholonomic system with Hamiltonian $H$ and distribution $\mathcal{D}$ is called a {\em Chaplygin system} if there exists a Lie group $G$ and a free and proper group action of it on $Q$, i.e., $\Phi: G \times Q \to Q$ or $\Phi_{h}: Q \to Q$ for any $h \in G$, such that
  \begin{enumerate}[(i)]
  \item the Hamiltonian $H$ and the distribution $\mathcal{D}$ are invariant under the $G$-action;
  \item for each $q \in Q$, the tangent space $T_{q}Q$ is the direct sum of the constraint distribution and the tangent space to the orbit of the group action, i.e.,
    \begin{equation*}
      T_{q}Q = \mathcal{D}_{q} \oplus T_{q}\mathcal{O}_{q},
    \end{equation*}
    where $\mathcal{O}_{q}$ is the orbit through $q$ of the $G$-action on $Q$, i.e.,
    \begin{equation*}
      \mathcal{O}_{q} \defeq \setdef{ \Phi_{h}(q) \in Q }{ h \in G }.
    \end{equation*}
  \end{enumerate}
\end{definition}
This setup gives rise to the principal bundle
\begin{equation*}
  \pi: Q \to Q/G \eqdef \bar{Q}
\end{equation*}
and the connection 
\begin{equation*}
  \mathcal{A}: TQ \to \mathfrak{g},
\end{equation*}
with $\mathfrak{g}$ being the Lie algebra of $G$ such that $\ker\mathcal{A} = \mathcal{D}$.
So the above decomposition may be written as
\begin{equation*}
  T_{q}Q = \ker\mathcal{A}_{q} \oplus \ker T_{q}\pi.
\end{equation*}
Furthermore, for any $q \in Q$ and $\bar{q} \defeq \pi(q) \in \bar{Q}$, the map $T_{q}\pi|_{\mathcal{D}_{q}}: \mathcal{D}_{q} \to T_{\bar{q}}\bar{Q}$ is a linear isomorphism, and hence we have the horizontal lift
\begin{equation*}
  \hl^{\mathcal{D}}_{q}: T_{\bar{q}}\bar{Q} \to \mathcal{D}_{q};
  \quad
  v_{\bar{q}} \mapsto (T_{q}\pi|_{\mathcal{D}_{q}})^{-1}(v_{\bar{q}}).
\end{equation*}
We will occasionally use the following shorthand notation for horizontal lifts:
\begin{equation*}
  v^{\rm h}_{q} \defeq \hl^{\mathcal{D}}_{q}(v_{\bar{q}}).
\end{equation*}
Therefore, any vector $W_{q} \in T_{q}Q$ can be decomposed into the horizontal and vertical parts as follows:
\begin{subequations}
  \label{eq:hor-ver_decomposition}
  \begin{equation}
    W_{q} = \hor(W_{q}) + \ver(W_{q}),
  \end{equation}
  with
  \begin{equation}
    \hor(W_{q}) = \hl^{\mathcal{D}}_{q}(w_{\bar{q}}),
    \qquad
    \ver(W_{q}) = (\mathcal{A}_{q}(W_{q}))_{Q}(q),
  \end{equation}
\end{subequations}
where $w_{\bar{q}} \defeq T_{q}\pi(W_{q})$ and $\xi_{Q} \in \mathfrak{X}(Q)$ is the infinitesimal generator of $\xi \in \mathfrak{g}$.

We may then define the reduced Lagrangian
\begin{subequations}
  \label{eq:ReducedLagrangian}
  \begin{equation}
    \bar{L} \defeq L \circ \hl^{\mathcal{D}},
  \end{equation}
  or more explicitly,
  \begin{equation}
    \bar{L}: T\bar{Q} \to \R;
    \quad
    v_{\bar{q}} \mapsto \frac{1}{2} \bar{g}_{\bar{q}}(v_{\bar{q}}, v_{\bar{q}}) - \bar{V}(\bar{q}),
  \end{equation}
\end{subequations}
where $\bar{g}$ is the metric on the reduced space $\bar{Q}$ induced by $g$ as follows:
\begin{equation}
  \label{eq:barg}
  \bar{g}_{\bar{q}}(v_{\bar{q}}, w_{\bar{q}})
  \defeq g_{q}\parentheses{ \hl^{\mathcal{D}}_{q}(v_{\bar{q}}), \hl^{\mathcal{D}}_{q}(w_{\bar{q}}) }
  = g_{q}( v^{\rm h}_{q}, w^{\rm h}_{q}),
\end{equation}
and the reduced potential $\bar{V}: \bar{Q} \to \R$ is defined such that $V = \bar{V} \circ \pi$.

This geometric structure is carried over to the Hamiltonian side~(see \citet{EhKoMoRi2004}).
Specifically, we define the horizontal lift $\hl^{\mathcal{M}}_{q}: T_{\bar{q}}^{*}\bar{Q} \to \mathcal{M}_{q}$ by
\begin{equation}
  \label{eq:hl^M}
  \hl^{\mathcal{M}}_{q}
  \defeq \FL_{q} \circ \hl^{\mathcal{D}}_{q} \circ (\F\bar{L})^{-1}_{\bar{q}}
  = g_{q}^{\flat} \circ \hl^{\mathcal{D}}_{q} \circ (\bar{g}^{\flat})^{-1}_{\bar{q}},
\end{equation}
or the diagram below commutes.
\begin{equation*}
  \vcenter{
    \xymatrix@!0@R=0.75in@C=1.1in{
      \mathcal{D}_{q} \ar[r]^{\FL_{q}} & \mathcal{M}_{q}
      \\
      T_{\bar{q}}\bar{Q} \ar[u]^{\hl^{\mathcal{D}}_{q}} & T_{\bar{q}}^{*}\bar{Q} \ar[l]^{\ (\mathbb{F}\bar{L})^{-1}_{\bar{q}}} \ar@{-->}[u]_{\hl^{\mathcal{M}}_{q}}
    }
  }
\end{equation*}
We will use the shorthand notation
\begin{equation*}
  \alpha^{\rm h}_{q} \defeq \hl^{\mathcal{M}}_{q}(\alpha_{\bar{q}})
\end{equation*}
for any $\alpha_{\bar{q}} \in T^{*}_{\bar{q}}\bar{Q}$.

We also define the reduced Hamiltonian $\bar{H}: T^{*}\bar{Q} \to \R$ by
\begin{equation}
  \label{eq:barH}
  \bar{H} \defeq H \circ \hl^{\mathcal{M}}.
\end{equation}
It is easy to check that this definition coincides with the following one by using the reduced Lagrangian $\bar{L}$:
\begin{equation*}
  \bar{H}(p_{\bar{q}}) \defeq \ip{ p_{\bar{q}} }{ v_{\bar{q}} } - \bar{L}(v_{\bar{q}}),
\end{equation*}
with $v_{\bar{q}} = (\F\bar{L})^{-1}_{\bar{q}}(p_{\bar{q}})$.

Performing the nonholonomic reduction of \citet{Ko1992} (see also \citet{BaSn1993}, \citet{EhKoMoRi2004}, and \citet{HoGa2009}), we obtain the reduced Hamilton's equations for Chaplygin systems defined by
\begin{equation}
  \label{eq:ReducedChaplyginSystem}
  i_{\bar{X}} \bar{\Omega}^{\rm nh} = d\bar{H}
\end{equation}
with the almost symplectic form
\begin{equation}
  \label{eq:barOmega_nh}
  \bar{\Omega}^{\rm nh} \defeq \bar{\Omega} - \Xi,
\end{equation}
where $\bar{X}$ is a vector field on $T^{*}\bar{Q}$ and $\bar{\Omega}$ is the standard symplectic form on $T^{*}\bar{Q}$; the two-form $\Xi$ on $T^{*}\bar{Q}$ is defined as follows:
For any $\alpha_{\bar{q}} \in T_{\bar{q}}^{*}\bar{Q}$ and $\mathcal{Y}_{\alpha_{\bar{q}}}, \mathcal{Z}_{\alpha_{\bar{q}}} \in T_{\alpha_{\bar{q}}}T^{*}\bar{Q}$, let $Y_{\bar{q}} \defeq T\pi_{\bar{Q}}(\mathcal{Y}_{\alpha_{\bar{q}}})$ and $Z_{\bar{q}} \defeq T\pi_{\bar{Q}}(\mathcal{Z}_{\alpha_{\bar{q}}})$ where $\pi_{\bar{Q}}: T^{*}\bar{Q} \to \bar{Q}$ is the cotangent bundle projection, and then set
\begin{align}
  \Xi_{\alpha_{\bar{q}}}(\mathcal{Y}_{\alpha_{\bar{q}}}, \mathcal{Z}_{\alpha_{\bar{q}}})
  &\defeq \ip{{\bf J} \circ \hl^{\mathcal{M}}_{q} (\alpha_{\bar{q}})}{ \mathcal{B}_{q}\parentheses{ \hl^{\mathcal{D}}_{q}(Y_{\bar{q}}), \hl^{\mathcal{D}}_{q}(Z_{\bar{q}}) } }
  \nonumber\\
  &= \ip{{\bf J}(\alpha^{\rm h}_{q})}{\mathcal{B}_{q}(Y^{\rm h}_{q}, Z^{\rm h}_{q})},
  \label{eq:Xi-def}
\end{align}
where ${\bf J}: T^{*}Q \to \mathfrak{g}^{*}$ is the momentum map corresponding to the $G$-action, and $\mathcal{B}$ is the curvature two-form of the connection $\mathcal{A}$.
This is well-defined, since the $\Ad^{*}$-equivariance of the momentum map ${\bf J}$ and the $\Ad$-equivariance of the curvature $\mathcal{B}$ cancel each other~\citep{KoRiEh2002}: Writing $h q \defeq \Phi_{h}(q)$, we have, using Lemma~\ref{lem:G-hl^M} and the $G$-equivariance of the momentum map ${\bf J}$ and the curvature $\mathcal{B}$~(see, e.g., \citet[Corollary~2.1.11]{MaMiOrPeRa2007} for the latter),
\begin{align*}
  \ip{{\bf J}(\alpha^{\rm h}_{h q})}{\mathcal{B}_{h q}(Y^{\rm h}_{h q}, Z^{\rm h}_{h q})}
  &= \ip{ {\bf J}\parentheses{ T^{*}_{q}\Phi_{h^{-1}}(\alpha^{\rm h}_{q}) } }{\Phi_{h}^{*}\mathcal{B}_{q}(Y^{\rm h}_{q}, Z^{\rm h}_{q})}
  \\
  &= \ip{ \Ad^{*}_{h^{-1}} {\bf J}(\alpha^{\rm h}_{q}) }{\Ad_{h} \mathcal{B}_{q}(Y^{\rm h}_{q}, Z^{\rm h}_{q})}
  \\
  &= \ip{ {\bf J}(\alpha^{\rm h}_{q}) }{\mathcal{B}_{q}(Y^{\rm h}_{q}, Z^{\rm h}_{q})}.
\end{align*}


\section{Chaplygin Hamiltonization of Nonholonomic Systems}\label{chsb}
This section discusses the so-called Chaplygin Hamiltonization of the reduced dynamics defined by Eq.~\eqref{eq:ReducedChaplyginSystem}.
The results here are mostly a summary of some of the key results of \citet{St1989}, \citet{CaCoLeMa2002}, \citet{FeJo2004}, and \citet{FeMeBl2009}.
However, our exposition is slightly different from them, and also touches on those aspects that are not found in the above papers.
Furthermore, our intrinsic account of the Hamiltonization provides us with a better understanding of the geometry involved in it, and then leads us to our main results on nonholonomic Hamilton--Jacobi theory in Sections~\ref{sec:NHHJ} and \ref{sec:NHHJafter2ndReduction}.

\subsection{Hamiltonization and Existence of Invariant Measure}\label{chsc}
We first discuss the relationship between Hamiltonization and existence of an invariant measure for nonholonomic systems.
The next subsection will show how to Hamiltonize the reduced system, Eq.~\eqref{eq:ReducedChaplyginSystem}, explicitly.

Let $f: T^{*}\bar{Q} \to \R$ be a smooth nowhere-vanishing function that is constant on each fiber, i.e., $f(\alpha_{\bar{q}}) = f(\beta_{\bar{q}})$ for any $\alpha_{\bar{q}}, \beta_{\bar{q}} \in T_{\bar{q}}^{*}\bar{Q}$.
Therefore, we can write, with a slight abuse of notation, $f(\alpha_{\bar{q}}) = f(\bar{q})$; so $f$ may be seen as a function on $\bar{Q}$ as well.
\begin{remark}
  The above definition of the function $f$ is essentially the same as that of \citet{Ch2008}, where $f$ is defined as a function on $Q$.
  However, in the present work, it is more convenient to formally define $f$ as a function on $T^{*}Q$.
\end{remark}
\begin{remark}
  In the discussion to follow, we derive certain conditions on the function $f$ in order to Hamiltonize the system given by Eq.~\eqref{eq:ReducedChaplyginSystem}.
  It sometimes turns out that such $f$ is nowhere-vanishing only on an open subset $U$ in $\bar{Q}$.
  In such cases, we redefine $\bar{Q} \defeq U$.
\end{remark}
Now, consider the vector field
\begin{equation*}
  \bar{X}/f = \frac{1}{f}\bar{X} \in \mathfrak{X}(T^{*}\bar{Q}),
\end{equation*}
and let $\Phi_{t}^{\bar{X}/f}: T^{*}\bar{Q} \to T^{*}\bar{Q}$ be the flow defined by the corresponding vector field, i.e., for any $\alpha_{\bar{q}} \in T^{*}\bar{Q}$, 
\begin{equation*}
  \left.\od{}{t}\Phi_{t}^{\bar{X}/f}(\alpha_{\bar{q}})\right|_{t=0} = (\bar{X}/f)(\alpha_{\bar{q}}) = \frac{1}{f(\alpha_{\bar{q}})}\,\bar{X}(\alpha_{\bar{q}}).
\end{equation*}
Furthermore, define a map $\Psi_{\!f}: T^{*}\bar{Q} \to T^{*}\bar{Q}$ by
\begin{equation*}
  \Psi_{\!f}: \alpha \mapsto f \alpha,
\end{equation*}
which is clearly a diffeomorphism with the inverse $\Psi_{\!f}^{-1} = \Psi_{1/f}: T^{*}\bar{Q} \to T^{*}\bar{Q};\ \alpha \mapsto \alpha/f$, and define $\Phi_{t}^{\bar{X}_{\rm C}}: T^{*}\bar{Q} \to T^{*}\bar{Q}$ by
\begin{equation*}
  \Phi_{t}^{\bar{X}_{\rm C}}
  \defeq \Psi_{\!f} \circ \Phi_{t}^{\bar{X}/f} \circ \Psi_{\!f}^{-1}
  = \Psi_{\!f} \circ \Phi_{t}^{\bar{X}/f} \circ \Psi_{1/f},
\end{equation*}
or the diagram below commutes.
\begin{equation}
  \vcenter{
    \xymatrix@!0@R=0.75in@C=1in{
      T^{*}\bar{Q} \ar[r]^{\Phi_{t}^{\bar{X}/f}} & T^{*}\bar{Q} \ar[d]^{\Psi_{\!f}}
      \\
      T^{*}\bar{Q} \ar[u]^{\Psi_{\!f}^{-1} = \Psi_{1/f}} \ar@{-->}[r]_{\Phi_{t}^{\bar{X}_{\rm C}}} & T^{*}\bar{Q}
    }
  }
  \qquad
  \vcenter{
    \xymatrix@!0@R=0.75in@C=1in{
      \alpha/f \ar@{|->}[r] & \Phi_{t}^{\bar{X}/f}(\alpha/f) \ar@{|->}[d]
      \\
      \alpha \ar@{|->}[u] \ar@{|-->}[r] & \Phi_{t}^{\bar{X}_{\rm C}}(\alpha)
    }
  }
\end{equation}
Then, we have the vector field $\bar{X}_{\rm C} \in \mathfrak{X}(T^{*}\bar{Q})$ corresponding to the flow $\Phi_{t}^{\bar{X}_{\rm C}}$, which is the pull-back of $\bar{X}/f$ by $\Psi_{\!f}^{-1} = \Psi_{1/f}$: For any $\alpha_{\bar{q}} \in T^{*}\bar{Q}$,
\begin{align}
  \label{eq:barX_C}
  \bar{X}_{\rm C}(\alpha_{\bar{q}})
  &\defeq \left. \od{}{t} \Phi_{t}^{\bar{X}_{\rm C}}(\alpha_{\bar{q}}) \right|_{t=0}
  \nonumber\\
  &= \left. \od{}{t} \Psi_{\!f} \circ \Phi_{t}^{\bar{X}/f} \circ \Psi_{\!f}^{-1}(\alpha_{\bar{q}}) \right|_{t=0}
  \nonumber\\
  &= T\Psi_{\!f} \cdot (\bar{X}/f)(\Psi_{\!f}^{-1}(\alpha_{\bar{q}}))
  \nonumber\\
  &= (\Psi_{\!f}^{-1})^{*} (\bar{X}/f) (\alpha_{\bar{q}})
  \nonumber\\
  &= \Psi_{1/f}^{*} (\bar{X}/f) (\alpha_{\bar{q}}).
\end{align}
In particular, the third line in the above equation shows that $\bar{X}/f$ and $\bar{X}_{\rm C}$ are $\Psi_{\!f}$-related:
\begin{equation}
  \label{eq:Psi_f-related}
  T\Psi_{\!f} \circ (\bar{X}/f) = \bar{X}_{\rm C} \circ \Psi_{\!f}.
\end{equation}

Now, we relate relate the (possible) symplecticity of the vector field $\bar{X}_{\rm C}$ with the existence of an invariant measure for the reduced system, Eq.~\eqref{eq:ReducedChaplyginSystem}:
\begin{theorem}
  \label{thm:Generalized_FeJo2004}
  If $\bar{X}_{\rm C} \in \mathfrak{X}(T^{*}\bar{Q})$ is symplectic, i.e., $\pounds_{\bar{X}_{\rm C}} \bar{\Omega} = 0$, then the reduced system, Eq.~\eqref{eq:ReducedChaplyginSystem}, has the invariant measure $f^{\bar{n}-1} \bar{\Lambda}$, where $\bar{n} \defeq \dim\bar{Q}$ and $\bar{\Lambda}$ is the Liouville volume form
  \begin{equation*}
    \bar{\Lambda} \defeq \frac{(-1)^{\bar{n}(\bar{n}-1)/2}}{\bar{n}!} \underbrace{\bar{\Omega} \wedge \dots \wedge \bar{\Omega}}_{\bar{n}}
    = dq^{1} \wedge \dots \wedge dq^{\bar{n}} \wedge dp_{1} \wedge \dots \wedge dp_{\bar{n}}.
  \end{equation*}
  In other words, we have
  \begin{equation*}
    \pounds_{\bar{X}} (f^{\bar{n}-1} \bar{\Lambda}) = 0.
  \end{equation*}
\end{theorem}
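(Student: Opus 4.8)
The plan is to derive the $\bar{X}$-invariance of $f^{\bar{n}-1}\bar{\Lambda}$ from the symplecticity of the conjugated field $\bar{X}_{\rm C}$ by tracking how the Liouville form transforms under the fiberwise rescaling $\Psi_{\!f}$, and then translating the resulting identity back to $\bar{X}$ by elementary Cartan calculus.

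First, since $\bar{\Lambda}$ is a nonzero constant multiple of the top power $\bar{\Omega} \wedge \dots \wedge \bar{\Omega}$, the hypothesis $\pounds_{\bar{X}_{\rm C}}\bar{\Omega} = 0$ immediately yields $\pounds_{\bar{X}_{\rm C}}\bar{\Lambda} = 0$; that is, $\bar{X}_{\rm C}$ preserves the Liouville volume. Next I would invoke the fact, recorded in \eqref{eq:Psi_f-related}, that $\bar{X}/f$ is $\Psi_{\!f}$-related to $\bar{X}_{\rm C}$, so that by naturality of the Lie derivative under related vector fields one has $\pounds_{\bar{X}/f}(\Psi_{\!f}^{*}\beta) = \Psi_{\!f}^{*}(\pounds_{\bar{X}_{\rm C}}\beta)$ for every differential form $\beta$ on $T^{*}\bar{Q}$. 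Taking $\beta = \bar{\Lambda}$ gives $\pounds_{\bar{X}/f}(\Psi_{\!f}^{*}\bar{\Lambda}) = 0$.

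The crux is then to identify $\Psi_{\!f}^{*}\bar{\Lambda}$. In bundle coordinates $(q^{1},\dots,q^{\bar{n}},p_{1},\dots,p_{\bar{n}})$ on $T^{*}\bar{Q}$ the diffeomorphism $\Psi_{\!f}$ reads $(q,p) \mapsto (q,f(q)p)$, hence $\Psi_{\!f}^{*}dq^{i} = dq^{i}$ and $\Psi_{\!f}^{*}dp_{i} = f\,dp_{i} + p_{i}\,df$. When $\Psi_{\!f}^{*}(dp_{1}\wedge\dots\wedge dp_{\bar{n}})$ is wedged against $dq^{1}\wedge\dots\wedge dq^{\bar{n}}$, every term carrying a factor $df$ is annihilated, since $df$ is a linear combination of the $dq^{j}$; only the term $f^{\bar{n}}\,dp_{1}\wedge\dots\wedge dp_{\bar{n}}$ survives, so
\begin{equation*}
  \Psi_{\!f}^{*}\bar{\Lambda} = f^{\bar{n}}\,\bar{\Lambda},
  \qquad\text{whence}\qquad
  \pounds_{\bar{X}/f}\bigl(f^{\bar{n}}\bar{\Lambda}\bigr) = 0.
\end{equation*}

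Finally I would convert this into the statement of the theorem. Writing $\bar{X} = f\,(\bar{X}/f)$ and using the standard identities $\pounds_{gY}\mu = g\,\pounds_{Y}\mu + dg\wedge i_{Y}\mu$ and $\pounds_{Y}(g\mu) = (\pounds_{Y}g)\,\mu + g\,\pounds_{Y}\mu$, together with the top-form identity $df\wedge i_{\bar{X}}\bar{\Lambda} = \bigl(df(\bar{X})\bigr)\bar{\Lambda}$, one expands $0 = \pounds_{\bar{X}/f}\bigl(f\cdot f^{\bar{n}-1}\bar{\Lambda}\bigr)$ to solve for $f\,\pounds_{\bar{X}/f}(f^{\bar{n}-1}\bar{\Lambda})$ and substitutes it into $\pounds_{\bar{X}}(f^{\bar{n}-1}\bar{\Lambda}) = f\,\pounds_{\bar{X}/f}(f^{\bar{n}-1}\bar{\Lambda}) + df\wedge i_{\bar{X}/f}(f^{\bar{n}-1}\bar{\Lambda})$; the two contributions proportional to $\bar{X}[f]$ cancel, leaving $\pounds_{\bar{X}}(f^{\bar{n}-1}\bar{\Lambda}) = 0$. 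The only place demanding care is keeping the powers of $f$ straight in this last computation — the geometry is entirely contained in the relation $\Psi_{\!f}^{*}\bar{\Lambda} = f^{\bar{n}}\bar{\Lambda}$, and everything else is bookkeeping.
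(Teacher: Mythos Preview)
Your proposal is correct and follows essentially the same route as the paper: use the $\Psi_{\!f}$-relatedness of $\bar{X}/f$ and $\bar{X}_{\rm C}$ to pull the symplecticity back to $\pounds_{\bar{X}/f}(f^{\bar{n}}\bar{\Lambda})=0$, then pass from $\bar{X}/f$ to $\bar{X}$ while dropping one power of $f$. The only cosmetic differences are that the paper establishes $\Psi_{\!f}^{*}\bar{\Lambda}=f^{\bar{n}}\bar{\Lambda}$ via Lemma~\ref{lem:f_Omega} (computing $(\Psi_{\!f}^{*}\bar{\Omega})^{\wedge\bar{n}}$ semi-intrinsically) rather than your direct coordinate computation, and packages your final Cartan-calculus cancellation as the divergence identity $\divergence_{\mu}(fX)=f\,\divergence_{f\mu}(X)$.
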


This theorem is a slight generalization of the following:
\begin{corollary}[\citet{FeJo2004}]
\label{cor:FeJo2004}
  If $\bar{X}_{\rm C} \in \mathfrak{X}(T^{*}\bar{Q})$ is Hamiltonian, i.e.,
  \begin{equation*}
    i_{\bar{X}_{\rm C}} \bar{\Omega} = d\bar{H}_{\rm C}    
  \end{equation*}
  for some $\bar{H}_{\rm C}: T^{*}\bar{Q} \to \R$, then the reduced nonholonomic dynamics, Eq.~\eqref{eq:ReducedChaplyginSystem}, has the invariant measure $f^{\bar{n}-1} \bar{\Lambda}$.
\end{corollary}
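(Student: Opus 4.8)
The plan is to obtain Corollary~\ref{cor:FeJo2004} as an immediate consequence of Theorem~\ref{thm:Generalized_FeJo2004}, the only additional ingredient being the elementary fact that a Hamiltonian vector field on a symplectic manifold is symplectic.

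First, I would recall that $\bar{\Omega}$ is the standard symplectic form on $T^{*}\bar{Q}$, so in particular it is closed: $d\bar{\Omega} = 0$. Under the hypothesis $i_{\bar{X}_{\rm C}} \bar{\Omega} = d\bar{H}_{\rm C}$ for some $\bar{H}_{\rm C}: T^{*}\bar{Q} \to \R$, Cartan's magic formula gives
\[
  \pounds_{\bar{X}_{\rm C}} \bar{\Omega} = i_{\bar{X}_{\rm C}}\, d\bar{\Omega} + d\, i_{\bar{X}_{\rm C}} \bar{\Omega} = 0 + d(d\bar{H}_{\rm C}) = 0,
\]
so that $\bar{X}_{\rm C}$ is symplectic in the sense required by Theorem~\ref{thm:Generalized_FeJo2004}.

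Then, applying Theorem~\ref{thm:Generalized_FeJo2004} directly yields that the reduced nonholonomic dynamics, Eq.~\eqref{eq:ReducedChaplyginSystem}, admits the invariant measure $f^{\bar{n}-1} \bar{\Lambda}$, i.e., $\pounds_{\bar{X}} (f^{\bar{n}-1} \bar{\Lambda}) = 0$, which is the claim.

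There is essentially no obstacle here: all of the substantive work---translating the symplecticity of $\bar{X}_{\rm C}$, via the $\Psi_{\!f}$-relatedness \eqref{eq:Psi_f-related} and the behavior of $\bar{\Lambda}$ under $\Psi_{\!f}$, into an invariant-measure statement for $\bar{X}$---is carried out in the proof of Theorem~\ref{thm:Generalized_FeJo2004}. The only point worth a remark is that ``$\bar{X}_{\rm C}$ Hamiltonian'' is strictly stronger than ``$\bar{X}_{\rm C}$ symplectic'' (globally the two notions differ by $H^{1}(T^{*}\bar{Q})$), which is precisely why Theorem~\ref{thm:Generalized_FeJo2004} is stated as a slight generalization of the corollary.
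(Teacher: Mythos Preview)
Your proof is correct and matches the paper's own argument essentially verbatim: the paper also uses Cartan's formula to verify $\pounds_{\bar{X}_{\rm C}}\bar{\Omega}=0$ and then invokes Theorem~\ref{thm:Generalized_FeJo2004}.
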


\begin{proof}
  Follows easily from Cartan's formula:
  \begin{equation*}
    \pounds_{\bar{X}_{\rm C}} \bar{\Omega} = d(i_{\bar{X}_{\rm C}} \bar{\Omega}) + i_{\bar{X}_{\rm C}} d\bar{\Omega} = dd\bar{H}_{\rm C} = 0. \qedhere
  \end{equation*}
\end{proof}

We state a couple of lemmas before proving Theorem~\ref{thm:Generalized_FeJo2004}.
\begin{lemma}
  \label{lem:f_Omega}
  Let $f: T^{*}\bar{Q} \to \R$ be a smooth function that is constant on each fiber, i.e., $f(\alpha_{\bar{q}}) = f(\beta_{\bar{q}})$ for any $\alpha_{\bar{q}}, \beta_{\bar{q}} \in T_{\bar{q}}^{*}\bar{Q}$.
  Then,
  \begin{equation*}
    \underbrace{(\Psi_{\!f}^{*}\bar{\Omega}) \wedge \dots \wedge (\Psi_{\!f}^{*}\bar{\Omega})}_{\bar{n}} = f^{\bar{n}}\underbrace{\bar{\Omega} \wedge \dots \wedge \bar{\Omega}}_{\bar{n}}.
  \end{equation*}
\end{lemma}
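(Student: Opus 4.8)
The plan is to express $\Psi_{\!f}^{*}\bar{\Omega}$ through the canonical (tautological) one-form $\Theta$ on $T^{*}\bar{Q}$, for which $\bar{\Omega} = -d\Theta$ and $\Theta_{\alpha_{\bar{q}}}(v) = \ip{\alpha_{\bar{q}}}{T\pi_{\bar{Q}}(v)}$. Since $\Psi_{\!f}$ preserves fibers we have $\pi_{\bar{Q}} \circ \Psi_{\!f} = \pi_{\bar{Q}}$, so evaluating $\Psi_{\!f}^{*}\Theta$ on a tangent vector $v$ at $\alpha_{\bar{q}}$ gives
\[
  (\Psi_{\!f}^{*}\Theta)_{\alpha_{\bar{q}}}(v) = \ip{ f(\bar{q})\,\alpha_{\bar{q}} }{ T\pi_{\bar{Q}}(T\Psi_{\!f}(v)) } = f(\bar{q})\, \ip{ \alpha_{\bar{q}} }{ T\pi_{\bar{Q}}(v) },
\]
that is, $\Psi_{\!f}^{*}\Theta = f\,\Theta$, where $f$ on the right denotes the pullback $\pi_{\bar{Q}}^{*}f$; this is exactly where the hypothesis that $f$ is constant on each fiber enters. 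Taking $d$ of both sides,
\[
  \Psi_{\!f}^{*}\bar{\Omega} = -d(f\,\Theta) = f\,\bar{\Omega} - df \wedge \Theta,
\]
where $df = \pi_{\bar{Q}}^{*}(df)$ is \emph{horizontal}, i.e.\ it annihilates $\ker T\pi_{\bar{Q}}$.

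Next I would take the $\bar{n}$-fold wedge power of this identity (writing $\bar{\Omega}^{\wedge k}$ for the $k$-fold wedge power). Since $df\wedge\Theta$ has even degree it commutes with $f\,\bar{\Omega}$, so the binomial expansion applies:
\[
  \parentheses{\Psi_{\!f}^{*}\bar{\Omega}}^{\wedge\bar{n}} = \sum_{k=0}^{\bar{n}} \binom{\bar{n}}{k}\, f^{\bar{n}-k}\, \bar{\Omega}^{\wedge(\bar{n}-k)} \wedge \parentheses{-df \wedge \Theta}^{\wedge k}.
\]
Because $df\wedge df = 0$, we have $(df\wedge\Theta)^{\wedge k} = 0$ for all $k \ge 2$, so only $k=0$ and $k=1$ survive and
\[
  \parentheses{\Psi_{\!f}^{*}\bar{\Omega}}^{\wedge\bar{n}} = f^{\bar{n}}\, \bar{\Omega}^{\wedge\bar{n}} - \bar{n}\, f^{\bar{n}-1}\, \bar{\Omega}^{\wedge(\bar{n}-1)} \wedge df \wedge \Theta.
\]
It then remains only to show the cross term vanishes, i.e.\ $\bar{\Omega}^{\wedge(\bar{n}-1)} \wedge df \wedge \Theta = 0$; everything up to this point is formal, and this is the one step that needs a genuine argument.

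To dispose of the cross term I would pass to Darboux coordinates, in which $\bar{\Omega} = \sum_{i} dq^{i} \wedge dp_{i}$, $\Theta = \sum_{i} p_{i}\, dq^{i}$, and $df = \sum_{i} \parentheses{\tpd{f}{q^{i}}}\, dq^{i}$; hence $\Theta$, $df$, and therefore $df\wedge\Theta$ all lie in the exterior subalgebra generated by $dq^{1}, \dots, dq^{\bar{n}}$ alone. On the other hand every nonzero monomial of $\bar{\Omega}^{\wedge(\bar{n}-1)}$ already contains $\bar{n}-1$ distinct factors among $dq^{1},\dots,dq^{\bar{n}}$, so wedging it with the two extra $dq$-factors contributed by $df\wedge\Theta$ would require $\bar{n}+1$ factors drawn from the $\bar{n}$-element set $\{dq^{1},\dots,dq^{\bar{n}}\}$ and hence vanishes by antisymmetry. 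This yields $\bar{\Omega}^{\wedge(\bar{n}-1)} \wedge df \wedge \Theta = 0$ and therefore $\parentheses{\Psi_{\!f}^{*}\bar{\Omega}}^{\wedge\bar{n}} = f^{\bar{n}}\, \bar{\Omega}^{\wedge\bar{n}}$, as claimed. The main (and only) obstacle is this bookkeeping with wedge factors; once $\Psi_{\!f}^{*}\Theta = f\Theta$ is in hand, the rest is routine. As a shortcut, one can instead note that in Darboux coordinates $\Psi_{\!f}(q,p) = (q, f(q)p)$ has block-triangular Jacobian with determinant $f^{\bar{n}}$, so $\Psi_{\!f}^{*}\bar{\Lambda} = f^{\bar{n}}\bar{\Lambda}$ for the Liouville form $\bar{\Lambda}$; since $\bar{\Lambda}$ is a nonzero constant multiple of $\bar{\Omega}^{\wedge\bar{n}}$ and pullback commutes with $\wedge$, the lemma follows immediately.
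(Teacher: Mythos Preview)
Your proof is correct and follows essentially the same route as the paper: establish $\Psi_{\!f}^{*}\bar{\Theta} = f\bar{\Theta}$, deduce $\Psi_{\!f}^{*}\bar{\Omega} = f\bar{\Omega} - df \wedge \bar{\Theta}$, expand the $\bar{n}$-fold wedge binomially, and kill the cross terms by counting $dq$-factors in local coordinates. Your use of $df \wedge df = 0$ to dispose of the $k \ge 2$ terms up front is a mild streamlining (the paper treats all $k \ge 1$ uniformly by the same counting argument), and your closing Jacobian-determinant shortcut is a clean alternative the paper does not mention.
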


\begin{proof}
  Let $\bar{\Theta}$ be the symplectic one-form on $T^{*}\bar{Q}$, i.e., $\bar{\Omega} = -d\bar{\Theta}$.
  Let us first calculate $\Psi_{\!f}^{*}\bar{\Theta}$: We have, for any $\alpha \in T^{*}\bar{Q}$ and $v \in T_{\alpha}T^{*}\bar{Q}$,
  \begin{align*}
    (\Psi_{\!f}^{*}\bar{\Theta})_{\alpha}(v)
    &= \bar{\Theta}_{\Psi_{\!f}(\alpha)}( T\Psi_{\!f}(v) )
    \\
    &= \ip{ \Psi_{\!f}(\alpha) }{ T\pi_{Q} \circ T\Psi_{\!f}(v) }
    \\
    &=  \ip{ f \alpha }{ T(\pi_{Q} \circ \Psi_{\!f})(v) }
    \\
    &=  f \ip{ \alpha }{ T\pi_{Q}(v) }
    \\
    &= f\, \bar{\Theta}_{\alpha}(v),
  \end{align*}
  where we used the fact that $\Psi_{\!f}$ is fiber-preserving, i.e., $\pi_{Q} \circ \Psi_{\!f} = \pi_{Q}$.
  Hence we have $\Psi_{\!f}^{*}\bar{\Theta} = f\bar{\Theta}$, and thus
  \begin{align}
    \Psi_{\!f}^{*}\bar{\Omega}
    &= \Psi_{\!f}^{*}(-d\bar{\Theta})
    \nonumber\\
    &= -d(\Psi_{\!f}^{*}\bar{\Theta})
    \nonumber\\
    &= -d( f \bar{\Theta} )
    \nonumber\\
    &= -df \wedge \bar{\Theta} - f d\bar{\Theta}
    \nonumber\\
    &= f \bar{\Omega} - df \wedge \bar{\Theta}.
    \label{eq:Psi_f^*barOmega}
  \end{align}
  Therefore, using the fact that $\alpha \wedge \beta = \beta \wedge \alpha$ for any two-forms $\alpha$ and $\beta$, we have
  \begin{align*}
    (\Psi_{\!f}^{*}\bar{\Omega}) \wedge \dots \wedge (\Psi_{\!f}^{*}\bar{\Omega})
    &= f^{\bar{n}}\bar{\Omega} \wedge \dots \wedge \bar{\Omega}
    \\
    &\quad + \sum_{k=1}^{\bar{n}} {\bar{n} \choose k} (-1)^{k} f^{\bar{n}-k} \underbrace{\bar{\Omega} \wedge \dots \wedge \bar{\Omega}}_{\bar{n}-k} \wedge \underbrace{ (df \wedge \bar{\Theta}) \wedge \dots \wedge (df \wedge \bar{\Theta}) }_{k}.
  \end{align*}
  Let us show that the second term vanishes.
  Since $f$ is constant on fibers, we have
  \begin{equation*}
    df = \pd{f}{q^{a}}\,dq^{a}.
  \end{equation*}
  Therefore,
  \begin{equation*}
    df \wedge \bar{\Theta} = p_{b} \pd{f}{q^{a}}\,dq^{a} \wedge dq^{b}
  \end{equation*}
  and thus $df \wedge \bar{\Theta}$ does not contain any term with $dp_{a}$'s.
  On the other hand, $\underbrace{\bar{\Omega} \wedge \dots \wedge \bar{\Omega}}_{\bar{n}-k}$ contains only $\bar{n}-k$ of $dp_{a}$'s.
  Therefore, the $2\bar{n}$-form
  \begin{equation*}
    \underbrace{\bar{\Omega} \wedge \dots \wedge \bar{\Omega}}_{\bar{n}-k} \wedge \underbrace{ (df \wedge \bar{\Theta}) \wedge \dots \wedge (df \wedge \bar{\Theta}) }_{k}
  \end{equation*}
  contains only $\bar{n}-k$ of $dp_{a}$'s, and thus $\bar{n} + k$ of $dq^{a}$'s, which implies that this $2\bar{n}$-form must vanish.
\end{proof}

\begin{definition}
  Let $M$ be an $n$-dimensional orientable manifold, and $\mu$ be a volume form, i.e., a nowhere-vanishing $n$-form.
  Then, the divergence $\divergence_{\mu}(X)$ of a vector field $X$ on $M$ relative to $\mu$ is defined by
  \begin{equation}
    \label{eq:div-def}
    \pounds_{X} \mu = \divergence_{\mu}(X)\,\mu.
  \end{equation}
  Therefore, the flow of $X$ is volume-preserving if and only if $\divergence_{\mu}(X) = 0$.
\end{definition}

\begin{lemma}
  Let $M$ be an orientable differentiable manifold with a volume form $\mu$, $X$ a vector field on $M$, and $f$ a nowhere-vanishing smooth function on $M$.
  Then, the following identity holds:
  \begin{equation}
    \label{eq:div_identity}
    \divergence_{\mu}(f X) = f \divergence_{f \mu}(X).
  \end{equation}
\end{lemma}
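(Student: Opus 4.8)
The plan is to compute both sides of~\eqref{eq:div_identity} directly from the defining relation~\eqref{eq:div-def}, using Cartan's magic formula together with the fact that $\mu$ is a top-degree form on the $n$-manifold $M$ (so $d\mu = 0$, and any $(n+1)$-form vanishes). First I would expand the left-hand side: since $d\mu=0$,
\begin{equation*}
  \pounds_{fX}\mu = d(i_{fX}\mu) + i_{fX}\,d\mu = d(f\, i_{X}\mu) = df \wedge i_{X}\mu + f\, d(i_{X}\mu) = df \wedge i_{X}\mu + f\,\divergence_{\mu}(X)\,\mu.
\end{equation*}
The only step that is not pure bookkeeping is rewriting $df \wedge i_{X}\mu$. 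Because $df \wedge \mu$ is an $(n+1)$-form it vanishes identically; applying $i_{X}$ and using the graded Leibniz rule for the interior product gives $0 = (i_{X}df)\,\mu - df \wedge i_{X}\mu$, hence $df \wedge i_{X}\mu = (i_{X}df)\,\mu = X(f)\,\mu$. Thus $\divergence_{\mu}(fX) = X(f) + f\,\divergence_{\mu}(X)$.

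Next I would treat the right-hand side in the same spirit. From the Leibniz rule for the Lie derivative, $\pounds_{X}(f\mu) = X(f)\,\mu + f\,\pounds_{X}\mu = \bigl(X(f) + f\,\divergence_{\mu}(X)\bigr)\mu$, while by definition $\pounds_{X}(f\mu) = \divergence_{f\mu}(X)\,(f\mu)$. Since $f$ is nowhere vanishing, dividing these two expressions for $\pounds_X(f\mu)$ yields $\divergence_{f\mu}(X) = X(f)/f + \divergence_{\mu}(X)$, and therefore $f\,\divergence_{f\mu}(X) = X(f) + f\,\divergence_{\mu}(X)$. Comparing with the formula for $\divergence_{\mu}(fX)$ obtained above establishes~\eqref{eq:div_identity}.

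As for the difficulty: there is essentially no genuine obstacle here, the identity being a short exercise in the Cartan calculus. The only point deserving an explicit line of justification is the pointwise identity $df \wedge i_{X}\mu = X(f)\,\mu$ valid for a top-degree form $\mu$; everything else is an application of the derivation properties of $d$, $\pounds$, and $i_{X}$, plus the hypothesis that $f$ never vanishes so that the final division is legitimate.
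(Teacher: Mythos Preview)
Your proof is correct and follows essentially the same route as the paper: both arguments establish the two auxiliary identities $\divergence_{\mu}(fX) = f\,\divergence_{\mu}(X) + X[f]$ and $\divergence_{f\mu}(X) = \divergence_{\mu}(X) + \frac{1}{f}X[f]$, then combine them. The only difference is that the paper simply cites these identities from \cite[Proposition~2.5.23]{AbMa1978}, whereas you derive them from scratch via Cartan's formula and the top-degree trick $df \wedge i_{X}\mu = X(f)\,\mu$; your version is thus a bit more self-contained but not a genuinely different approach.
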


\begin{proof}
  We have the identities~\citep[see, e.g.,][Proposition~2.5.23 on p.~130]{AbMa1978}
  \begin{equation*}
    \divergence_{f \mu}(X) = \divergence_{\mu}(X) + \frac{1}{f} X[f],
    \qquad
    \divergence_{\mu}(f X) = f \divergence_{\mu}(X) + X[f].
  \end{equation*}
  Multiplying the first by $f$ and taking the difference of both sides, we have the desired identity.
\end{proof}

\begin{proof}[Proof of Theorem~\ref{thm:Generalized_FeJo2004}]
   As shown in Eq.~\eqref{eq:Psi_f-related}, the vector fields $\bar{X}/f$ and $\bar{X}_{\rm C}$ are $\Psi_{\!f}$-related.
   Therefore,
   \begin{align*}
    \pounds_{\bar{X}/f}(\Psi_{\!f}^{*} \bar{\Omega}) = \Psi_{\!f}^{*} \pounds_{\bar{X}_{\rm C}} \bar{\Omega} = 0,
   \end{align*}
   since $\bar{X}_{\rm C}$ is assumed to be symplectic; thus
   \begin{equation*}
     \pounds_{\bar{X}/f} [ \underbrace{(\Psi_{\!f}^{*}\bar{\Omega}) \wedge \dots \wedge (\Psi_{\!f}^{*}\bar{\Omega})}_{\bar{n}} ] = 0.
   \end{equation*}
   However, by Lemma~\ref{lem:f_Omega}, we have
   \begin{equation*}
     \pounds_{\bar{X}/f} (f^{\bar{n}} \underbrace{\bar{\Omega} \wedge \dots \wedge \bar{\Omega}}_{\bar{n}} ) = 0,
   \end{equation*}
   and hence $\pounds_{\bar{X}/f}(f^{\bar{n}} \bar{\Lambda}) = 0$; this implies $\divergence_{f^{\bar{n}}\bar{\Lambda}}(\bar{X}/f) = 0$.
   Then, the above lemma gives
   \begin{equation*}
     \divergence_{f^{\bar{n}-1}\bar{\Lambda}}(\bar{X})
     = f \divergence_{f^{\bar{n}}\bar{\Lambda}}(\bar{X}/f)
     = 0,
   \end{equation*}
   which implies $\pounds_{\bar{X}}(f^{\bar{n}-1}\bar{\Lambda}) = 0$.
\end{proof}

\subsection{The Chaplygin Hamiltonization}\label{chsc2}
Here we discuss the so-called {\em Chaplygin Hamiltonization} of the reduced system, Eq.~\eqref{eq:ReducedChaplyginSystem}.
Let us first find the equation satisfied by the vector field $\bar{X}_{\rm C}$ defined in Eq.~\eqref{eq:barX_C}.
\begin{lemma}
  \label{lem:barX_C-eq}
  The vector field $\bar{X}_{\rm C} \in \mathfrak{X}(T^{*}\bar{Q})$ satisfies the following equation:
  \begin{equation}
    \label{eq:barX_C-eq}
    i_{\bar{X}_{\rm C}}
    \brackets{
      \bar{\Omega} + \frac{1}{f} \parentheses{ df \wedge \bar{\Theta} - f\,\Xi }
    }
    = d\bar{H}_{\rm C},
  \end{equation}
  where $\bar{H}_{\rm C}: T^{*}\bar{Q} \to \R$ is defined by
  \begin{equation}
    \label{eq:ChaplyginHamiltonian}
    \bar{H}_{\rm C} \defeq \bar{H} \circ \Psi_{1/f}.
  \end{equation}
\end{lemma}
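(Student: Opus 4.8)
The plan is to transport the reduced equation $i_{\bar X}\bar\Omega^{\rm nh} = d\bar H$ of \eqref{eq:ReducedChaplyginSystem} across the diffeomorphism $\Psi_{\!f}$, exploiting the fact from \eqref{eq:Psi_f-related} that $\bar X/f$ and $\bar X_{\rm C}$ are $\Psi_{\!f}$-related. I will use the elementary identity that, whenever two vector fields $X$ on $M$ and $Y$ on $N$ satisfy $T\phi\circ X = Y\circ\phi$ for a diffeomorphism $\phi\colon M\to N$, one has $\phi^{*}(i_{Y}\beta) = i_{X}(\phi^{*}\beta)$ for every differential form $\beta$ on $N$. Applying this with $\phi = \Psi_{\!f}$, $X = \bar X/f$, $Y = \bar X_{\rm C}$, and $\beta$ the two-form on the left of \eqref{eq:barX_C-eq}, and recalling that $\Psi_{\!f}^{*}$ is injective since $f$ is nowhere vanishing, the claimed identity \eqref{eq:barX_C-eq} becomes equivalent to
\begin{equation*}
  i_{\bar X/f}\parentheses{ \Psi_{\!f}^{*}\brackets{ \bar\Omega + \tfrac{1}{f}\parentheses{ df \wedge \bar\Theta - f\,\Xi } } } = \Psi_{\!f}^{*}(d\bar H_{\rm C}).
\end{equation*}
The right-hand side is immediate: $\Psi_{\!f}^{*}(d\bar H_{\rm C}) = d(\bar H_{\rm C}\circ\Psi_{\!f}) = d(\bar H\circ\Psi_{1/f}\circ\Psi_{\!f}) = d\bar H$, using \eqref{eq:ChaplyginHamiltonian} and $\Psi_{1/f} = \Psi_{\!f}^{-1}$.

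So it suffices to show $\Psi_{\!f}^{*}\brackets{ \bar\Omega + \tfrac1f(df\wedge\bar\Theta - f\Xi) } = f\,\bar\Omega^{\rm nh} = f(\bar\Omega - \Xi)$, since then $i_{\bar X/f}(f\bar\Omega^{\rm nh}) = i_{\bar X}\bar\Omega^{\rm nh} = d\bar H$ and we are done. I would compute this pullback term by term. The first term is handled by Eq.~\eqref{eq:Psi_f^*barOmega} from the proof of Lemma~\ref{lem:f_Omega}, namely $\Psi_{\!f}^{*}\bar\Omega = f\bar\Omega - df\wedge\bar\Theta$. For the middle term, $f$ being constant on fibers and $\Psi_{\!f}$ fiber-preserving give $\Psi_{\!f}^{*}f = f$ and $\Psi_{\!f}^{*}df = df$, while $\Psi_{\!f}^{*}\bar\Theta = f\bar\Theta$ was established in that same proof; hence $\Psi_{\!f}^{*}(\tfrac1f df\wedge\bar\Theta) = df\wedge\bar\Theta$, which exactly cancels the $-df\wedge\bar\Theta$ coming from the first term. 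What survives is $f\bar\Omega - \Psi_{\!f}^{*}\Xi$.

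The one genuinely new computation, which I expect to be the main obstacle, is establishing $\Psi_{\!f}^{*}\Xi = f\,\Xi$. The key observation is that, for a fixed base point $\bar q$, the bilinear form $\Xi_{\alpha_{\bar q}}$ depends \emph{linearly} on $\alpha_{\bar q}$: in \eqref{eq:Xi-def} the only occurrence of $\alpha_{\bar q}$ is through ${\bf J}\circ\hl^{\mathcal M}_{q}(\alpha_{\bar q})$, and both $\hl^{\mathcal M}_{q}$ and the restriction of ${\bf J}$ to the fiber are linear. Since $\Psi_{\!f}$ is fiber-preserving, $T\pi_{\bar Q}\circ T\Psi_{\!f} = T\pi_{\bar Q}$, so the base vectors $Y_{\bar q} = T\pi_{\bar Q}(\mathcal Y)$, $Z_{\bar q} = T\pi_{\bar Q}(\mathcal Z)$ appearing in \eqref{eq:Xi-def} are unchanged under $T\Psi_{\!f}$, while the cotangent argument is scaled to $f\alpha_{\bar q}$; therefore $(\Psi_{\!f}^{*}\Xi)_{\alpha_{\bar q}}(\mathcal Y,\mathcal Z) = \Xi_{f\alpha_{\bar q}}(T\Psi_{\!f}\mathcal Y, T\Psi_{\!f}\mathcal Z) = f\,\Xi_{\alpha_{\bar q}}(\mathcal Y,\mathcal Z)$. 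Combining, $\Psi_{\!f}^{*}[\bar\Omega + \tfrac1f(df\wedge\bar\Theta - f\Xi)] = f\bar\Omega - f\Xi = f\bar\Omega^{\rm nh}$, which finishes the proof. A minor point to keep in mind throughout is that $f$ is nowhere vanishing by hypothesis (or after shrinking $\bar Q$ as in the preceding remark), so $\Psi_{\!f}$ and $\Psi_{1/f}$ are honest mutually inverse diffeomorphisms and all the pullback manipulations are legitimate.
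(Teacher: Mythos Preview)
Your proof is correct and uses essentially the same ingredients as the paper's: the $\Psi_{\!f}$-relatedness of $\bar X/f$ and $\bar X_{\rm C}$ from \eqref{eq:Psi_f-related}, the formula $\Psi_{\!f}^{*}\bar\Omega = f\bar\Omega - df\wedge\bar\Theta$ from \eqref{eq:Psi_f^*barOmega}, the identity $\Psi_{\!f}^{*}\bar\Theta = f\bar\Theta$, and the fiber-linearity of $\Xi$ giving $\Psi_{\!f}^{*}\Xi = f\Xi$ (the paper carries out the equivalent computation $\Psi_{1/f}^{*}\Xi = \Xi/f$). The only difference is organizational: the paper starts from $\Psi_{\!f}^{*}(i_{\bar X_{\rm C}}\bar\Omega) = i_{\bar X/f}\Psi_{\!f}^{*}\bar\Omega$, expands using the reduced equation, and then applies $\Psi_{1/f}^{*}$ to arrive at \eqref{eq:barX_C-eq}, whereas you pull back the \emph{entire} target identity by $\Psi_{\!f}^{*}$ in one stroke and reduce it to $i_{\bar X}\bar\Omega^{\rm nh} = d\bar H$. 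Your route is slightly more economical, but the two arguments are otherwise the same.
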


\begin{proof}
  As shown in Eq.~\eqref{eq:Psi_f-related}, the vector fields $\bar{X}/f$ and $\bar{X}_{\rm C}$ are $\Psi_{\!f}$-related.
  Therefore, $\Psi_{\!f}^{*} i_{\bar{X}_{\rm C}} \alpha = i_{\bar{X}/f} \Psi_{\!f}^{*} \alpha$ for any differential form $\alpha$~\citep[see, e.g.,][Proposition~2.4.14]{AbMa1978}; in particular, for $\alpha = \bar{\Omega}$, we have
  \begin{equation*}
    \Psi_{\!f}^{*} i_{\bar{X}_{\rm C}} \bar{\Omega} = i_{\bar{X}/f} \Psi_{\!f}^{*} \bar{\Omega}.
  \end{equation*}
  Using Eqs.~\eqref{eq:Psi_f^*barOmega} and \eqref{eq:ReducedChaplyginSystem} on the right-hand side, we have
  \begin{align*}
    i_{\bar{X}/f} \Psi_{\!f}^{*} \bar{\Omega}
    &= i_{\bar{X}/f} (f \bar{\Omega} - df \wedge \bar{\Theta})
    \\
    &= i_{\bar{X}} \bar{\Omega} - i_{\bar{X}/f} \parentheses{ df \wedge \bar{\Theta} }
    \\
    &= d\bar{H} + i_{\bar{X}}\Xi - i_{\bar{X}/f}\parentheses{ df \wedge \bar{\Theta} }
    \\
    &= d\bar{H} - i_{\bar{X}/f}\parentheses{ df \wedge \bar{\Theta} - f\,\Xi }.
  \end{align*}
  Therefore,
  \begin{equation*}
    \Psi_{\!f}^{*} i_{\bar{X}_{\rm C}} \bar{\Omega} + i_{\bar{X}/f}\parentheses{ df \wedge \bar{\Theta} - f\,\Xi } = d\bar{H},
  \end{equation*}
  and then applying $\Psi_{1/f}^{*}$ to both sides gives
  \begin{equation*}
    i_{\bar{X}_{\rm C}} \bar{\Omega} + \Psi_{1/f}^{*}\, i_{\bar{X}/f}\parentheses{ df \wedge \bar{\Theta} - f\,\Xi } = d\bar{H}_{\rm C}.
  \end{equation*}
  Since the vector fields $\bar{X}_{\rm C}$ and $\bar{X}/f$ are $\Psi_{1/f}$-related, we have $\Psi_{1/f}^{*} i_{\bar{X}/f} \alpha = i_{\bar{X}_{\rm C}} \Psi_{1/f}^{*} \alpha$ for any differential form $\alpha$; hence
  \begin{align*}
    \Psi_{1/f}^{*}\, i_{\bar{X}/f} \parentheses{ df \wedge \bar{\Theta} - f\,\Xi }
    &= i_{\bar{X}_{\rm C}} \Psi_{1/f}^{*} \parentheses{ df \wedge \bar{\Theta} - f\,\Xi }
    \\
    &= i_{\bar{X}_{\rm C}}\brackets{ d(\Psi_{1/f}^{*} f) \wedge (\Psi_{1/f}^{*}\bar{\Theta}) - \Psi_{1/f}^{*}(f\,\Xi) }
    \\
    &= i_{\bar{X}_{\rm C}}\brackets{ df \wedge (\bar{\Theta}/f) - \Xi }
    \\
    &= i_{\bar{X}_{\rm C}} \brackets{ \frac{1}{f}\, \parentheses{ df \wedge \bar{\Theta} - f\,\Xi } },
  \end{align*}
  where $\Psi_{1/f}^{*} f = f$ since $f$ is constant on each fiber; $\Psi_{1/f}^{*}\bar{\Theta} = \bar{\Theta}/f$ as in the proof of Lemma~\ref{lem:f_Omega}; $\Psi_{1/f}^{*}\Xi = \Xi/f$ follows from the following calculation: From the definition of $\Xi$ in Eq.~\eqref{eq:Xi-def}, we have
  \begin{align*}
    (\Psi_{1/f}^{*} \Xi)_{\alpha_{\bar{q}}}(\mathcal{Y}_{\alpha_{\bar{q}}}, \mathcal{Z}_{\alpha_{\bar{q}}})
    &=
    \Xi_{\alpha_{\bar{q}}/f}\parentheses{ T\Psi_{1/f}(\mathcal{Y}_{\alpha_{\bar{q}}}), T\Psi_{1/f}(\mathcal{Z}_{\alpha_{\bar{q}}}) }
    \\
    &= \ip{{\bf J} \circ \hl^{\mathcal{M}}_{q} (\alpha_{\bar{q}}/f)}{ \mathcal{B}_{q}\parentheses{ \hl^{\mathcal{D}}_{q} (Y_{\bar{q}}), \hl^{\mathcal{D}}_{q}(Z_{\bar{q}}) } }
    \\
    &= \frac{1}{f(\bar{q})} \ip{{\bf J} \circ \hl^{\mathcal{M}}_{q} (\alpha_{\bar{q}})}{ \mathcal{B}_{q}\parentheses{ \hl^{\mathcal{D}}_{q} (Y_{\bar{q}}), \hl^{\mathcal{D}}_{q}(Z_{\bar{q}}) } }
    \\
    &= \frac{1}{f(\bar{q})}\, \Xi_{\alpha_{\bar{q}}}(\mathcal{Y}_{\alpha_{\bar{q}}}, \mathcal{Z}_{\alpha_{\bar{q}}}),
  \end{align*}
  where, in the second line, we defined $Y_{\bar{q}}, Z_{\bar{q}} \in T_{\bar{q}}\bar{Q}$ as
  \begin{equation*}
    Y_{\bar{q}} \defeq T\pi_{\bar{Q}} \circ T\Psi_{1/f}(\mathcal{Y}_{\alpha_{\bar{q}}})
    = T(\pi_{\bar{Q}} \circ \Psi_{1/f})(\mathcal{Y}_{\alpha_{\bar{q}}})
    = T\pi_{\bar{Q}}(\mathcal{Y}_{\alpha_{\bar{q}}}),
  \end{equation*}
  and $Z_{\bar{q}}$ in the same way, which coincide the ones introduced earlier when defining $\Xi$; the third line follows from the linearity of $\hl^{\mathcal{M}}$ and also of ${\bf J}$ in the fiber variables.
\end{proof}

\begin{proposition}[Necessary and Sufficient Condition for Hamiltonization]
  \label{prop:NSCondition-Hamiltonization}
  The vector field $\bar{X}_{\rm C} \in \mathfrak{X}(T^{*}\bar{Q})$ satisfies Hamilton's equations
  \begin{equation}
    \label{eq:barX_C-HamiltonEq-0}
    i_{\bar{X}_{\rm C}} \bar{\Omega} = d\bar{H}_{\rm C}
  \end{equation}
   if and only if the one-form $i_{\bar{X}_{\rm C}}\parentheses{ df \wedge \bar{\Theta} - f\,\Xi }$ vanishes.
\end{proposition}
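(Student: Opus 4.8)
The plan is to obtain the statement as an immediate consequence of Lemma~\ref{lem:barX_C-eq}. That lemma already establishes the identity
\begin{equation*}
  i_{\bar{X}_{\rm C}}
  \brackets{
    \bar{\Omega} + \frac{1}{f} \parentheses{ df \wedge \bar{\Theta} - f\,\Xi }
  }
  = d\bar{H}_{\rm C},
\end{equation*}
so the only work left is bookkeeping. First I would use the fact that the interior product is $C^{\infty}(T^{*}\bar{Q})$-linear in its form argument to split the left-hand side, writing
\begin{equation*}
  i_{\bar{X}_{\rm C}}\bar{\Omega}
  = d\bar{H}_{\rm C}
  - \frac{1}{f}\, i_{\bar{X}_{\rm C}} \parentheses{ df \wedge \bar{\Theta} - f\,\Xi },
\end{equation*}
where I have used $i_{\bar{X}_{\rm C}}\bigl(\tfrac{1}{f}\beta\bigr) = \tfrac{1}{f}\, i_{\bar{X}_{\rm C}}\beta$ with $\beta \defeq df \wedge \bar{\Theta} - f\,\Xi$.

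Next I would invoke the hypothesis that $f$ is nowhere-vanishing (on $\bar{Q}$, or on the open set to which $\bar{Q}$ has been restricted, per the earlier remark): this makes multiplication by $1/f$ a bijection on one-forms, so the one-form $\tfrac{1}{f}\, i_{\bar{X}_{\rm C}}\beta$ vanishes identically if and only if $i_{\bar{X}_{\rm C}}\beta$ does. Reading the displayed identity in both directions then gives the equivalence: $i_{\bar{X}_{\rm C}}\bar{\Omega} = d\bar{H}_{\rm C}$ holds precisely when $i_{\bar{X}_{\rm C}}\parentheses{ df \wedge \bar{\Theta} - f\,\Xi } = 0$, which is the claim.

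There is essentially no obstacle here; the substantive content lies entirely in Lemma~\ref{lem:barX_C-eq} (and, behind it, in the definition $\bar{H}_{\rm C} = \bar{H}\circ\Psi_{1/f}$ together with the transformation behavior of $\bar{\Theta}$ and $\Xi$ under $\Psi_{1/f}$). The one point worth stating explicitly is the nowhere-vanishing of $f$, since it is exactly what allows one to pass between $\tfrac{1}{f}\, i_{\bar{X}_{\rm C}}\beta = 0$ and $i_{\bar{X}_{\rm C}}\beta = 0$; this is also what underlies the interpretation of $f$ as a genuine reducing multiplier, with the obstruction to Hamiltonization encoded in the single one-form $i_{\bar{X}_{\rm C}}\parentheses{ df \wedge \bar{\Theta} - f\,\Xi }$.
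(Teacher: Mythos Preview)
Your proof is correct and follows the same approach as the paper, which simply records that the claim ``follows immediately from Lemma~\ref{lem:barX_C-eq}.'' Your version spells out the splitting of the interior product and the role of the nowhere-vanishing assumption on $f$, but this is exactly the bookkeeping implicit in the paper's one-line argument.
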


\begin{proof}
  Follows immediately from Lemma~\ref{lem:barX_C-eq}.
\end{proof}

\begin{remark}
  Locally, the above necessary and sufficient condition is precisely Eq.~(2.17) in \citet{FeMeBl2009}.
\end{remark}

\begin{definition}
  The process of finding an $f$ satisfying the above condition is called {\em Chaplygin Hamiltonization}, or just {\em Hamiltonization} for short; the resulting Hamiltonian system, Eq.~\eqref{eq:barX_C-HamiltonEq-0}, is called the {\em Hamiltonized system}; we would like to call $\bar{H}_{\rm C}$ a {\em Chaplygin Hamiltonian}.
\end{definition}

Now, combining Proposition~\ref{prop:NSCondition-Hamiltonization} with Theorem~\ref{thm:Generalized_FeJo2004} or Corollary~\ref{cor:FeJo2004}, we have
\begin{corollary}
 Suppose there exists a nowhere-vanishing fiber-wise constant function $f: T^{*}\bar{Q} \to \R$ such that $i_{\bar{X}_{\rm C}}\parentheses{ df \wedge \bar{\Theta} - f\,\Xi }$ vanishes.
 Then, the $2\bar{n}$-form $f^{\bar{n}-1} \bar{\Lambda}$ is an invariant measure of the reduced system, Eq.~\eqref{eq:ReducedChaplyginSystem}.
\end{corollary}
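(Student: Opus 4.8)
The plan is to chain together the two main results established earlier in this subsection. The hypothesis supplies a nowhere-vanishing fiber-wise constant function $f$ for which $i_{\bar{X}_{\rm C}}\parentheses{ df \wedge \bar{\Theta} - f\,\Xi }$ vanishes, which is precisely the necessary and sufficient condition in Proposition~\ref{prop:NSCondition-Hamiltonization}. Hence the first step is simply to invoke that proposition to conclude that $\bar{X}_{\rm C}$ satisfies Hamilton's equations $i_{\bar{X}_{\rm C}} \bar{\Omega} = d\bar{H}_{\rm C}$ with $\bar{H}_{\rm C} = \bar{H}\circ\Psi_{1/f}$; that is, $\bar{X}_{\rm C}$ is a Hamiltonian vector field, in particular symplectic.

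Given that $\bar{X}_{\rm C}$ is Hamiltonian (hence symplectic, since $\pounds_{\bar{X}_{\rm C}}\bar{\Omega} = d\,i_{\bar{X}_{\rm C}}\bar{\Omega} + i_{\bar{X}_{\rm C}}d\bar{\Omega} = dd\bar{H}_{\rm C} = 0$), the second step is to apply Theorem~\ref{thm:Generalized_FeJo2004} (or, equivalently, Corollary~\ref{cor:FeJo2004}, which is stated for exactly this Hamiltonian case). That theorem concludes directly that the reduced system, Eq.~\eqref{eq:ReducedChaplyginSystem}, has the invariant measure $f^{\bar{n}-1}\bar{\Lambda}$, i.e., $\pounds_{\bar{X}}(f^{\bar{n}-1}\bar{\Lambda}) = 0$. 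That is the assertion to be proved, so the argument terminates here.

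There is essentially no obstacle: the corollary is a purely formal consequence of results already in hand, and the only thing to be careful about is matching hypotheses precisely — namely that the vanishing of $i_{\bar{X}_{\rm C}}\parentheses{ df \wedge \bar{\Theta} - f\,\Xi }$ is indeed the condition of Proposition~\ref{prop:NSCondition-Hamiltonization}, and that the $f$ in the statement (nowhere-vanishing, constant on fibers) meets the standing hypotheses on $f$ used throughout the subsection, in particular those needed for Lemma~\ref{lem:f_Omega} and Theorem~\ref{thm:Generalized_FeJo2004}. Thus the proof is a two-line citation chain: Proposition~\ref{prop:NSCondition-Hamiltonization} gives that $\bar{X}_{\rm C}$ is Hamiltonian, and then Corollary~\ref{cor:FeJo2004} (or Theorem~\ref{thm:Generalized_FeJo2004}) gives the invariant measure.
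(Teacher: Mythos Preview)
Your proposal is correct and matches the paper's approach exactly: the paper simply says the corollary follows by combining Proposition~\ref{prop:NSCondition-Hamiltonization} with Theorem~\ref{thm:Generalized_FeJo2004} or Corollary~\ref{cor:FeJo2004}, which is precisely the two-step citation chain you describe.
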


We now state the main result of this section.
The following theorem will be used in the next section in relation to the nonholonomic Hamilton--Jacobi theory:
\begin{theorem}[A Sufficient Condition for Hamiltonization]
  \label{thm:SufficientCondition}
  Suppose there exists a nowhere-vanishing fiber-wise constant function $f: T^{*}\bar{Q} \to \R$ that satisfies the equation
  \begin{equation}
    \label{eq:f}
    df \wedge \bar{\Theta} = f\,\Xi.
  \end{equation}
  Then, the vector field $\bar{X}_{\rm C} \in \mathfrak{X}(T^{*}\bar{Q})$ satisfies the following Hamilton's equations:
  \begin{equation}
    \label{eq:barX_C-HamiltonEq}
    i_{\bar{X}_{\rm C}} \bar{\Omega} = d\bar{H}_{\rm C},
  \end{equation}
  and, as a result, the reduced nonholonomic dynamics Eq.~\eqref{eq:ReducedChaplyginSystem} has the invariant measure $f^{\bar{n}-1} \bar{\Lambda}$.
\end{theorem}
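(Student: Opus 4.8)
The plan is to obtain this theorem essentially for free from Lemma~\ref{lem:barX_C-eq} together with the invariant-measure results already established in this section; there is no genuine obstacle, since the substantive computations — in particular the identities $\Psi_{1/f}^{*}\bar{\Theta} = \bar{\Theta}/f$ and $\Psi_{1/f}^{*}\Xi = \Xi/f$, and hence the equation
\begin{equation*}
  i_{\bar{X}_{\rm C}}\brackets{ \bar{\Omega} + \frac{1}{f}\parentheses{ df \wedge \bar{\Theta} - f\,\Xi } } = d\bar{H}_{\rm C}
\end{equation*}
satisfied by the vector field $\bar{X}_{\rm C}$ defined in Eq.~\eqref{eq:barX_C} — are contained in Lemma~\ref{lem:barX_C-eq}. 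The theorem is thus a corollary of that lemma and Corollary~\ref{cor:FeJo2004}.

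First I would invoke Lemma~\ref{lem:barX_C-eq} to write down the displayed equation above. Then I would observe that the hypothesis $df \wedge \bar{\Theta} = f\,\Xi$ asserts precisely that the two-form $df \wedge \bar{\Theta} - f\,\Xi$ vanishes identically on $T^{*}\bar{Q}$ — a condition strictly stronger than the necessary and sufficient condition $i_{\bar{X}_{\rm C}}\parentheses{ df \wedge \bar{\Theta} - f\,\Xi } = 0$ of Proposition~\ref{prop:NSCondition-Hamiltonization}. Consequently the correction one-form $\frac{1}{f}\,i_{\bar{X}_{\rm C}}\parentheses{ df \wedge \bar{\Theta} - f\,\Xi }$ is zero, and the displayed equation collapses to $i_{\bar{X}_{\rm C}}\bar{\Omega} = d\bar{H}_{\rm C}$, which is Eq.~\eqref{eq:barX_C-HamiltonEq}; equivalently, one may simply cite Proposition~\ref{prop:NSCondition-Hamiltonization}, noting that the vanishing of the entire two-form forces the vanishing of its contraction with $\bar{X}_{\rm C}$.

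For the final assertion, now that $\bar{X}_{\rm C}$ is Hamiltonian with Hamiltonian $\bar{H}_{\rm C}$, I would apply Corollary~\ref{cor:FeJo2004} (or, equivalently, Theorem~\ref{thm:Generalized_FeJo2004}, since a Hamiltonian vector field is in particular symplectic) to conclude that the reduced nonholonomic dynamics, Eq.~\eqref{eq:ReducedChaplyginSystem}, admits the invariant measure $f^{\bar{n}-1}\bar{\Lambda}$, i.e.\ $\pounds_{\bar{X}}(f^{\bar{n}-1}\bar{\Lambda}) = 0$. The only point requiring any thought is recognizing that the pointwise identity of two-forms assumed in the hypothesis is exactly what makes the correction term in Lemma~\ref{lem:barX_C-eq} disappear, and this is immediate.
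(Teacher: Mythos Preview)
Your proposal is correct and takes essentially the same approach as the paper, which simply records that the result is ``straightforward from Lemma~\ref{lem:barX_C-eq} and Corollary~\ref{cor:FeJo2004}.'' Your write-up just spells out in more detail why the hypothesis $df \wedge \bar{\Theta} = f\,\Xi$ kills the correction term in Lemma~\ref{lem:barX_C-eq}, which is exactly the intended (and only) step.
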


\begin{proof}
  Straightforward from Lemma~\ref{lem:barX_C-eq} and Corollary~\ref{cor:FeJo2004}.
\end{proof}

\begin{remark}
  Locally, the sufficient condition ~\eqref{eq:f} becomes condition
  (2.22) in \citet{FeMeBl2009}.
\end{remark}

\begin{remark}
  As shown by \citet{St1989} (see also \citet{CaCoLeMa2002}), Eq.~\eqref{eq:f} is also a sufficient condition for the two-form $\bar{\Omega}_{f} \defeq f (\bar{\Omega} - \Xi)$ to be closed, so that Eq.~\eqref{eq:ReducedChaplyginSystem} becomes
  \begin{equation*}
    i_{\bar{X}/f} \bar{\Omega}_{f} = d\bar{H},
  \end{equation*}
  and so the dynamics of $\bar{X}/f$ is Hamiltonian with the non-standard symplectic form $\bar{\Omega}_{f}$.
\end{remark}


\section{Nonholonomic Hamilton--Jacobi Theory via Chaplygin Hamiltonization}
\label{sec:NHHJ}
\subsection{The Chaplygin Hamilton--Jacobi Equation}
Since the Hamiltonized system, Eq.~\eqref{eq:barX_C-HamiltonEq}, is a canonical Hamiltonian system on $T^{*}\bar{Q}$, we may apply the conventional Hamilton--Jacobi theory (see, e.g., \citet[][Chapter~5]{AbMa1978}) to the system and obtain the (time-independent) Hamilton--Jacobi equation:
\begin{equation}
\label{eq:ChaplyginHJ}
  \bar{H}_{\rm C} \circ d\bar{W} = E,
\end{equation}
with an unknown function $\bar{W}: \bar{Q} \to \R$ and a constant $E$ (the total energy).
We would like to call Eq.~\eqref{eq:ChaplyginHJ} the {\em Chaplygin Hamilton--Jacobi equation}.

Now that we have two Hamilton--Jacobi equations for Chaplygin systems, i.e., the nonholonomic Hamilton--Jacobi equation~\eqref{eq:NHHJ} and the Chaplygin Hamilton--Jacobi equation~\eqref{eq:ChaplyginHJ}, a natural question to ask is: What is the relationship between the two?

\subsection{Relationship between the Chaplygin H--J and Nonholonomic H--J Equations}\label{hjnhsec}
In relating the Chaplygin Hamilton--Jacobi equation~\eqref{eq:ChaplyginHJ} to the nonholonomic Hamilton--Jacobi equation~\eqref{eq:NHHJ}, a natural starting point is to look into the relationship between the Chaplygin Hamiltonian $\bar{H}_{\rm C}$ and the original Hamiltonian $H$ (recall from Eqs.~\eqref{eq:barH} and \eqref{eq:ChaplyginHamiltonian} that they are related through the Hamiltonian $\bar{H}$); the upper half of the following commutative diagram shows their relationship.
\begin{equation}
  \label{dia:gamma-dbarW}
  \vcenter{
    \xymatrix@!0@R=0.8in@C=0.8in{
      & \R &
      \\
      \mathcal{M} \ar[ru]^{H\!\!} & T^{*}\bar{Q} \ar[u]^{\bar{H}} \ar[l]^{\hl^{\mathcal{M}}} & T^{*}\bar{Q} \ar[ul]_{\!\!\bar{H}_{\rm C}} \ar[l]^{\ \Psi_{1/f}}
      \\
      Q \ar[rr]_{\pi} \ar@{-->}[u]^{\gamma} & & \bar{Q} \ar[u]_{d\bar{W}}
    }
  }
\end{equation}
Now, suppose that a function $\bar{W}: \bar{Q} \to \R$ satisfies the Chaplygin Hamilton--Jacobi equation~\eqref{eq:ChaplyginHJ}.
This means that the one-form $d\bar{W}$, seen as a map from $\bar{Q}$ to $T^{*}\bar{Q}$, satisfies $\bar{H}_{\rm C} \circ d\bar{W}(\bar{q}) = E$ for any $\bar{q} \in \bar{Q}$ with some constant $E$; equivalently, $\bar{H}_{\rm C} \circ d\bar{W} \circ \pi (q) = E$ for any $q \in Q$.
The lower half of the above diagram~\eqref{dia:gamma-dbarW} incorporates this view, and also leads us to the following:

\begin{theorem}
  \label{thm:ChaplyginHJ-NHHJ}
  Suppose that there exists a nowhere-vanishing fiber-wise constant function $f: T^{*}\bar{Q} \to \R$ that satisfies Eq.~\eqref{eq:f}, and hence by Theorem~\ref{thm:SufficientCondition}, we have Hamilton's equations \eqref{eq:barX_C-HamiltonEq} for the vector field $\bar{X}_{\rm C}$.
  Let $\bar{W}: \bar{Q} \to \R$ be a solution of the Chaplygin Hamilton--Jacobi equation~\eqref{eq:ChaplyginHJ}, and define $\gamma: Q \to \mathcal{M}$ by
  \begin{equation}
    \label{eq:gamma-dbarW}
    \gamma(q) \defeq \hl^{\mathcal{M}}_{q} \circ \Psi_{1/f} \circ d\bar{W} \circ \pi(q)
    = \hl^{\mathcal{M}}_{q}\parentheses{ \frac{1}{f(\bar{q})} d\bar{W}(\bar{q}) },
  \end{equation}
  where $\bar{q} \defeq \pi(q)$.
  Then $\gamma$ satisfies the nonholonomic Hamilton--Jacobi equation~\eqref{eq:NHHJ} as well as the condition Eq.~\eqref{eq:dgamma}.
\end{theorem}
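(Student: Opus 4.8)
The plan is to verify the two requirements in the definition of the nonholonomic Hamilton--Jacobi equation separately. First, $\gamma$ as defined in \eqref{eq:gamma-dbarW} automatically takes values in $\mathcal{M}$, since it is built by applying $\hl^{\mathcal{M}}_{q}$ to an element of $T^{*}_{\bar q}\bar Q$, and by construction $\hl^{\mathcal{M}}_{q}$ maps into $\mathcal{M}_q$. So the membership condition $\gamma: Q \to \mathcal{M}$ is free. Next I would check the equation $H \circ \gamma = E$. This is essentially a matter of chasing the upper triangle of diagram~\eqref{dia:gamma-dbarW}: using the definition $\bar H_{\rm C} = \bar H \circ \Psi_{1/f}$ from \eqref{eq:ChaplyginHamiltonian} and $\bar H = H \circ \hl^{\mathcal{M}}$ from \eqref{eq:barH}, we get $H \circ \gamma = H \circ \hl^{\mathcal{M}} \circ \Psi_{1/f} \circ d\bar W \circ \pi = \bar H \circ \Psi_{1/f} \circ d\bar W \circ \pi = \bar H_{\rm C} \circ d\bar W \circ \pi$, which equals $E$ pointwise because $\bar W$ solves the Chaplygin Hamilton--Jacobi equation~\eqref{eq:ChaplyginHJ}. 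This part is routine.

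The substantive part is the condition \eqref{eq:dgamma}, i.e., $d\gamma(v,w) = 0$ for all $v, w \in \mathcal{D}$. I would approach this by computing $d\gamma$ and evaluating it on horizontal lifts, using the fact that $\mathcal{D}_q$ is spanned by the $v^{\rm h}_q = \hl^{\mathcal{D}}_q(v_{\bar q})$. The idea is to relate $\gamma$ to the pullback $\pi^{*}$ of objects on $\bar Q$ together with the horizontal-lift map on the cotangent side. A clean way to organize this: observe that for $p_{\bar q} \defeq \frac{1}{f(\bar q)} d\bar W(\bar q) = \Psi_{1/f} \circ d\bar W(\bar q)$, the map $\gamma = \hl^{\mathcal{M}} \circ (p_{\bar q} \circ \pi)$ can be analyzed by splitting it into the reduced cotangent-bundle data and the connection-dependent horizontal lift. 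I expect the computation of $d\gamma$ restricted to $\mathcal{D} \times \mathcal{D}$ to produce two types of terms: one involving $d(\text{reduced one-form})$ pulled back to $Q$, and one involving the curvature $\mathcal{B}$ of the connection $\mathcal{A}$ — this is exactly where the two-form $\Xi$ from \eqref{eq:Xi-def} enters, since $\Xi$ is precisely the object encoding $\langle \mathbf{J}(\alpha^{\rm h}), \mathcal{B}(Y^{\rm h}, Z^{\rm h})\rangle$.

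The main obstacle, then, is to show that these two contributions cancel. The first contribution should, after the computation, reduce to $(d(\Psi_{1/f} \circ d\bar W))$ evaluated on projected vectors — but $\Psi_{1/f} \circ d\bar W = \frac{1}{f} d\bar W$, whose exterior derivative is $d(1/f) \wedge d\bar W$, i.e., involves $df \wedge \bar\Theta$-type data when lifted appropriately. The second contribution involves $f \Xi$-type data. The hypothesis \eqref{eq:f}, $df \wedge \bar\Theta = f\,\Xi$, is exactly the identity that forces these to match with the right sign so that the total vanishes on $\mathcal{D} \times \mathcal{D}$. So the strategy is: (i) express $d\gamma|_{\mathcal{D}\times\mathcal{D}}$ as a sum of a ``$d(\text{closed-ish one-form})$'' piece and a ``curvature'' piece; (ii) identify the closed-ish piece as vanishing up to the term $d(1/f)\wedge d\bar W$ (using $d(d\bar W) = 0$); (iii) identify the curvature piece with $f\Xi$ evaluated on horizontal lifts; (iv) invoke \eqref{eq:f} to conclude the sum is zero. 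The bookkeeping of signs and of how $\hl^{\mathcal{M}}$ interacts with exterior differentiation — in particular keeping track of the difference between differentiating on $Q$ versus on $\bar Q$ and pulling back — is where the real care is needed, and is the step I expect to be delicate. It may be cleanest to do this in local coordinates adapted to the principal bundle $\pi: Q \to \bar Q$, writing $\gamma = \gamma_a\, dq^a$ explicitly in terms of $\partial \bar W/\partial \bar q^\alpha$, $f$, $\bar g$, and the connection coefficients, and then computing $d\gamma$ directly and contracting with two horizontal basis vectors.
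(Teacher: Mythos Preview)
Your proposal is correct and follows essentially the same route as the paper: the equation $H\circ\gamma=E$ is obtained by the diagram chase you describe, and for $d\gamma|_{\mathcal{D}\times\mathcal{D}}=0$ the paper likewise evaluates $d\gamma(Y^{\rm h},Z^{\rm h})$ and finds exactly the two contributions you anticipate --- a $d(1/f)\wedge d\bar W$ term and a curvature term that becomes $(d\bar W)^{*}\Xi$ --- which cancel by~\eqref{eq:f}. The only minor difference is that the paper carries this out intrinsically via the Cartan formula $d\gamma(Y^{\rm h},Z^{\rm h}) = Y^{\rm h}[\gamma(Z^{\rm h})] - Z^{\rm h}[\gamma(Y^{\rm h})] - \gamma([Y^{\rm h},Z^{\rm h}])$ together with the pairing identity $\langle \hl^{\mathcal{M}}_q(\alpha_{\bar q}),\hl^{\mathcal{D}}_q(v_{\bar q})\rangle = \langle \alpha_{\bar q}, v_{\bar q}\rangle$ (Lemma~\ref{lem:hl-pairing}) and the horizontal--vertical decomposition of $[Y^{\rm h},Z^{\rm h}]$, rather than in adapted local coordinates as you suggest at the end.
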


\begin{remark}
  Notice that Theorem~\ref{thm:ChaplyginHJ-NHHJ} relates a solution of the Chaplygin Hamilton--Jacobi equation, which is for the {\em reduced} dynamics defined by Eq.~\eqref{eq:barX_C-HamiltonEq}, with that of the nonholonomic Hamilton--Jacobi equation for the {\em full} dynamics defined by Eq.~\eqref{eq:NHHam}.
  Therefore, the theorem provides a method to integrate the full dynamics by solving a Hamilton--Jacobi equation for the reduced dynamics.
\end{remark}

\begin{proof}
  That the one-form $\gamma$ defined by Eq.~\eqref{eq:gamma-dbarW} satisfies the nonholonomic Hamilton--Jacobi equation~\eqref{eq:NHHJ} follows from the diagram \eqref{dia:gamma-dbarW}.
  To show that it also satisfies the condition Eq.~\eqref{eq:dgamma}, we perform the following lengthy calculations:
  Let $Y^{\rm h}, Z^{\rm h} \in \mathfrak{X}(Q)$ be arbitrary horizontal vector fields, i.e., $Y^{\rm h}_{q}, Z^{\rm h}_{q} \in \mathcal{D}_{q}$ for any $q \in Q$.
  We start from the following identity:
  \begin{equation}
    \label{eq:gamma-identity}
    d\gamma(Y^{\rm h}, Z^{\rm h})
    = Y^{\rm h}[\gamma(Z^{\rm h})] - Z^{\rm h}[\gamma(Y^{\rm h})] - \gamma([Y^{\rm h},Z^{\rm h}]).
  \end{equation}
  The goal is to show that the right-hand side vanishes.
  Let us first evaluate the first two terms on the right-hand side of the above identity at an arbitrary point $q \in Q$:
  Let $Z_{\bar{q}} \defeq T_{q}\pi_{Q}(Z^{\rm h}_{q}) \in T_{\bar{q}}\bar{Q}$, then $Z^{\rm h}_{q} = \hl^{\mathcal{D}}_{q}(Z_{\bar{q}})$.
  Thus, using Lemma~\ref{lem:hl-pairing}, we have\footnote{Recall that $f: T^{*}\bar{Q} \to \R$ is fiber-wise constant and thus, with a slight abuse of notation, we may write $f(\alpha_{\bar{q}}) = f(\bar{q})$ for any $\alpha_{\bar{q}} \in T^{*}_{\bar{q}}\bar{Q}$; therefore $f$ may be seen as a function on $\bar{Q}$ as well.}
  \begin{align*}
    \gamma(Z^{\rm h})(q)
    &= \ip{ \hl^{\mathcal{M}}_{q} \circ \Psi_{1/f} \circ d\bar{W}(\bar{q}) }{ \hl^{\mathcal{D}}_{q}(Z_{\bar{q}}) }
    \\
    &= \ip{ \Psi_{1/f} \circ d\bar{W}(\bar{q}) }{ Z_{\bar{q}} }
    \\
    &= \frac{1}{f(\bar{q})}\, d\bar{W}(Z)(\bar{q}).
  \end{align*}
  Hence, defining a function $\gamma_{Z}: \bar{Q} \to \R$ by
  \begin{equation*}
    \gamma_{Z}(\bar{q}) \defeq \frac{1}{f(\bar{q})}\, d\bar{W}(Z)(\bar{q}),
  \end{equation*}
  we have $\gamma(Z^{\rm h}) = \gamma_{Z} \circ \pi$.
  Therefore, defining $Y_{\bar{q}} \defeq T_{q}\pi(Y^{\rm h}_{q})$, i.e., $Y^{\rm h}_{q} = \hl^{\mathcal{D}}_{q}(Y_{\bar{q}})$,
  \begin{align*}
    Y^{\rm h}[ \gamma(Z^{\rm h}) ](q)
    &= Y^{\rm h}[ \gamma_{Z} \circ \pi ](q)
    \\
    &= \ip{ d(\gamma_{Z} \circ \pi)_{q} }{ Y^{\rm h}_{q} }
    \\
    &= \ip{ (\pi^{*} d\gamma_{Z})_{q} }{ Y^{\rm h}_{q} }
    \\
    &= \ip{ d\gamma_{Z}(\bar{q}) }{ T_{q}\pi(Y^{\rm h}_{q}) }
    \\
    &= \ip{ d\gamma_{Z}(\bar{q}) }{ Y_{\bar{q}} }
    \\
    &= Y[\gamma_{Z}](\bar{q})
    \\
    &= Y\brackets{ \frac{1}{f}\, d\bar{W}(Z) }(\bar{q})
    \\
    &= \parentheses{ 
      \frac{1}{f}\,Y\brackets{ Z\brackets{\bar{W}} } - \frac{1}{f^{2}}\, df(Y)\, d\bar{W}(Z)
      }(\bar{q}).
  \end{align*}
  Hence we have
  \begin{align}
     Y^{\rm h}[ \gamma(Z^{\rm h}) ] -  Z^{\rm h}[ \gamma(Y^{\rm h}) ]
     &= \frac{1}{f} \parentheses{ 
       Y \brackets{ Z\brackets{\bar{W}} } - Z \brackets{ Y\brackets{\bar{W}} }
     }
     - \frac{1}{f^{2}} \parentheses{
       df(Y)\, d\bar{W}(Z)
       - df(Z)\, d\bar{W}(Y)
    }
    \nonumber\\
    &= \frac{1}{f}\, d\bar{W}([Y,Z])
    - \frac{1}{f^{2}}\, df \wedge d\bar{W}(Y, Z),
    \label{eq:YgammaZ-ZgammaY}
  \end{align}
  where we have omitted $q$ and $\bar{q}$ for simplicity.

  Now, let us evaluate the last term on the right-hand side of Eq.~\eqref{eq:gamma-identity}:
  First we would like to decompose $[Y^{\rm h},Z^{\rm h}]_{q}$ into the horizontal and vertical part.
  Since both $Y^{\rm h}$ and $Z^{\rm h}$ are horizontal, we have\footnote{See, e.g., \citet[][Proposition~1.3~(3), p.~65]{KoNo1963}.}
  \begin{equation*}
    \hor([Y^{\rm h},Z^{\rm h}]_{q}) = \hl^{\mathcal{D}}_{q}([Y, Z]_{\bar{q}}),
  \end{equation*}
  whereas the vertical part is
  \begin{equation*}
    \ver([Y^{\rm h},Z^{\rm h}]_{q}) = \parentheses{ \mathcal{A}_{q}([Y^{\rm h},Z^{\rm h}]_{q}) }_{Q}(q) = -\parentheses{ \mathcal{B}_{q}(Y^{\rm h}_{q},Z^{\rm h}_{q}) }_{Q}(q),
  \end{equation*}
  where we used the following relation between the connection $\mathcal{A}$ and its curvature $\mathcal{B}$ that hold for horizontal vector fields $Y^{\rm h}$ and $Z^{\rm h}$:
  \begin{align*}
    \mathcal{B}_{q}(Y^{\rm h}_{q}, Z^{\rm h}_{q}) &= d\mathcal{A}_{q}(Y^{\rm h}_{q}, Z^{\rm h}_{q})
    \\
    &= Y^{\rm h}[ \mathcal{A}(Z^{\rm h}) ](q) - Y^{\rm h}[ \mathcal{A}(Z^{\rm h}) ](q) - \mathcal{A}([Y^{\rm h},Z^{\rm h}])(q)
    \\
    &= -\mathcal{A}([Y^{\rm h},Z^{\rm h}])(q).
  \end{align*}
  As a result, we have the decomposition
  \begin{equation*}
    [Y^{\rm h},Z^{\rm h}]_{q} = \hl^{\mathcal{D}}_{q}([Y, Z]_{\bar{q}}) - \parentheses{ \mathcal{B}_{q}(Y^{\rm h}_{q},Z^{\rm h}_{q}) }_{Q}(q).
  \end{equation*}
  Therefore,
  \begin{align}
    \gamma([Y^{\rm h},Z^{\rm h}])(q)
    &= \ip{ \hl^{\mathcal{M}}_{q} \circ \Psi_{1/f} \circ d\bar{W} \circ \pi(q) }{ \hl^{\mathcal{D}}_{q}([Y, Z]_{\bar{q}}) }
    \nonumber\\
    &\quad - \ip{ \hl^{\mathcal{M}}_{q} \circ \Psi_{1/f} \circ d\bar{W} \circ \pi(q) }{ \parentheses{ \mathcal{B}_{q}(Y^{\rm h}_{q},Z^{\rm h}_{q}) }_{Q}(q) }
    \nonumber\\
    &= \ip{ \Psi_{1/f} \circ d\bar{W}(\bar{q}) }{ [Y, Z]_{\bar{q}} }
    - \ip{ {\bf J}\parentheses{ \hl^{\mathcal{M}}_{q} \circ \Psi_{1/f} \circ d\bar{W}(\bar{q}) } }{ \mathcal{B}_{q}(Y^{\rm h}_{q},Z^{\rm h}_{q}) }
    \nonumber\\
    &= \frac{1}{f(\bar{q})} \ip{ d\bar{W}(\bar{q}) }{ [Y, Z]_{\bar{q}} }
    - \ip{ {\bf J} \circ \hl^{\mathcal{M}}_{q} \parentheses{ d\bar{W}(\bar{q})/f(\bar{q}) } }{ \mathcal{B}_{q}(Y^{\rm h}_{q},Z^{\rm h}_{q}) }
    \nonumber\\
    &= \frac{1}{f(\bar{q})}\, d\bar{W}([Y,Z])(\bar{q})
    - \frac{1}{f(\bar{q})} \ip{ {\bf J} \circ \hl^{\mathcal{M}}_{q} \parentheses{ d\bar{W}(\bar{q})} }{ \mathcal{B}_{q}\parentheses{\hl^{\mathcal{D}}_{q}(Y_{\bar{q}}), \hl^{\mathcal{D}}_{q}(Z_{\bar{q}})} }
    \nonumber\\
    &= \frac{1}{f(\bar{q})}\, d\bar{W}([Y,Z])(\bar{q})
    - \frac{1}{f(\bar{q})} (d\bar{W})^{*}\Xi(Y, Z)(\bar{q}),
    \label{eq:gammaYZ}
  \end{align}
  where the second equality follows from Lemma~\ref{lem:hl-pairing} and the definition of the momentum map ${\bf J}$; the fourth one follows from the linearity of $\hl^{\mathcal{M}}$ and also of ${\bf J}$ in the fiber variables; the last one follows from the definition of $\Xi$ in Eq.~\eqref{eq:Xi-def}:
  Since $\pi_{\bar{Q}} \circ d\bar{W} = \id_{\bar{Q}}$ and thus $T\pi_{\bar{Q}} \circ Td\bar{W} = \id_{T\bar{Q}}$, we have
  \begin{align*}
    (d\bar{W})^{*}\Xi(Y, Z)(\bar{q})
    &= \Xi_{d\bar{W}(\bar{q})} \parentheses{ Td\bar{W}(Y_{\bar{q}}), Td\bar{W}(Z_{\bar{q}}) }
    \\
    &= \ip{ {\bf J} \circ \hl^{\mathcal{M}}_{q} \parentheses{ d\bar{W}(\bar{q})} }{ \mathcal{B}_{q}\parentheses{\hl^{\mathcal{D}}_{q}(Y_{\bar{q}}), \hl^{\mathcal{D}}_{q}(Z_{\bar{q}})} }.
  \end{align*}
  Substituting Eqs.~\eqref{eq:YgammaZ-ZgammaY} and \eqref{eq:gammaYZ} into Eq.~\eqref{eq:gamma-identity}, we obtain
  \begin{align*}
    d\gamma(Y^{\rm h}, Z^{\rm h})
    &= -\frac{1}{f^{2}}\, df \wedge d\bar{W}(Y, Z) + \frac{1}{f} (d\bar{W})^{*}\Xi(Y, Z)
    \\
    &= -\frac{1}{f^{2}} \parentheses{ df \wedge d\bar{W} - f\,(d\bar{W})^{*}\Xi }(Y, Z)
    \\
    &= -\frac{1}{f^{2}} (d\bar{W})^{*} \parentheses{ df \wedge \bar{\Theta} - f\,\Xi }(Y, Z)
    \\
    &= 0,
  \end{align*}
  where the third line follows since\footnote{Again recall that $f: T^{*}\bar{Q} \to \R$ may be seen as a function on $\bar{Q}$ as well.} $(d\bar{W})^{*}f(\bar{q}) = f\parentheses{ d\bar{W}(\bar{q}) } = f(\bar{q})$ and also that $(d\bar{W})^{*} \bar{\Theta} = d\bar{W}$~\citep[see, e.g.,][Proposition~3.2.11 on p.~179]{AbMa1978}; the last line follows from Eq.~\eqref{eq:f}, which is assumed to be satisfied.
\end{proof}


\section{Examples}\label{exsec}
\begin{example}[The vertical rolling disk; see, e.g., \citet{Bl2003}]
  \label{ex:VRD}
  Consider the motion of the vertical rolling disk of radius $R$ shown in Fig.~\ref{fig:VRD}.
  \begin{figure}[htbp]
    \centering
    \includegraphics[width=.65\linewidth]{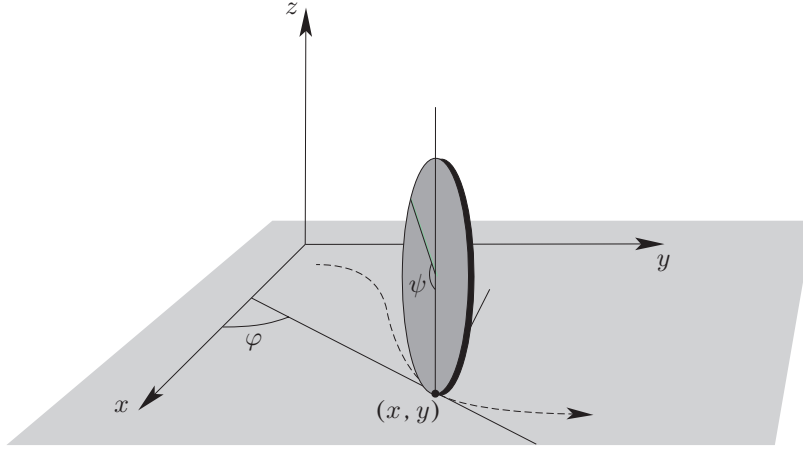}
    \caption{Vertical rolling disk.}
    \label{fig:VRD}
  \end{figure}
  The configuration space is
  \begin{equation*}
    Q = SE(2) \times \mathbb{S}^{1} = (SO(2) \ltimes \R^{2}) \times \mathbb{S}^{1} = \{ (\varphi, x, y, \psi) \}.
  \end{equation*}
  Suppose that $m$ is the mass of the disk, $I$ is the moment of inertia of the disk about the axis perpendicular to the plane of the disk, and $J$ is the moment of inertia about an axis in the plane of the disk (both axes passing through the disk's center).
  The Lagrangian $L: TQ \to \R$ and the Hamiltonian $H: T^{*}Q \to \R$ are given by
  \begin{equation*}
    L = \frac{1}{2} m \parentheses{ \dot{x}^{2} + \dot{y}^{2} } + \frac{1}{2} J \dot{\varphi}^{2} + \frac{1}{2}I\dot{\psi}^{2}
  \end{equation*}
  and
  \begin{equation*}
    H = \frac{1}{2}\parentheses{ \frac{p_{x}^{2} + p_{y}^{2}}{m} + \frac{p_{\varphi}^{2}}{J} + \frac{p_{\psi}^{2}}{I} }.
  \end{equation*}
  The velocity constraints are
  \begin{equation*}
    \dot{x} = R\cos\varphi\,\dot{\psi},
    \qquad
    \dot{y} = R\sin\varphi\,\dot{\psi},
  \end{equation*}
  or in terms of constraint one-forms,
  \begin{equation*}
    \omega^{1} = dx - R\cos\varphi\,d\psi,
    \qquad
    \omega^{2} = dy - R\sin\varphi\,d\psi.
  \end{equation*}
  So the constraint distribution $\mathcal{D} \subset TQ$ and the constrained momentum space $\mathcal{M} \subset T^{*}Q$ are given by
  \begin{equation*}
    \mathcal{D}
    = \setdef{ (\dot{\varphi}, \dot{x}, \dot{y}, \dot{\psi}) \in TQ }{ \dot{x} = R\cos\varphi\,\dot{\psi},\ \dot{y} = R\sin\varphi\,\dot{\psi} }
  \end{equation*}
  and
  \begin{equation*}
    \mathcal{M}
    = \setdef{ (p_{\varphi}, p_{x}, p_{y}, p_{\psi}) \in T^{*}Q }{ p_{x} = \frac{m R}{I} \cos\varphi\,p_{\psi},\ p_{y} = \frac{m R}{I} \sin\varphi\,p_{\psi} }.
  \end{equation*}

  Let $G = \R^{2}$ and consider the action of $G$ on $Q$ defined by
  \begin{equation*}
    G \times Q \to Q;
    \quad
    \parentheses{(a, b), (\varphi, x, y, \psi)} \mapsto (\varphi, x + a, y + b, \psi).
  \end{equation*}
  Then, the system is a Chaplygin system in the sense of Definition~\ref{def:ChaplyginSystems}.
  The Lie algebra $\mathfrak{g}$ is identified with $\R^{2}$ in this case; let us use $(\xi, \eta)$ as the coordinates for $\mathfrak{g}$.
  Then, we may write the connection $\mathcal{A}: TQ \to \mathfrak{g}$ as
  \begin{equation}
    \label{eq:mathcalA-VRD}
    \mathcal{A} = (dx - R\cos\varphi\,d\psi) \otimes \pd{}{\xi} + (dy - R\sin\varphi\,d\psi) \otimes \pd{}{\eta},
  \end{equation}
  and hence its curvature as
  \begin{equation}
    \label{eq:mathcalB-VRD}
    \mathcal{B} = R \parentheses{ \sin\varphi\,d\varphi \wedge d\psi \otimes \pd{}{\xi} - \cos\varphi\,d\varphi \wedge d\psi \otimes \pd{}{\eta} }.
  \end{equation}
  Furthermore, the momentum map ${\bf J}: T^{*}Q \to \mathfrak{g}^{*}$ is given by
  \begin{equation}
    \label{eq:J-VRD}
    {\bf J}(p_{q}) = p_{x}\,d\xi + p_{y}\,d\eta.
  \end{equation}

  The quotient space is $\bar{Q} \defeq Q/G = \{(\varphi, \psi)\}$.
  The reduced Hamiltonian $\bar{H}: T^{*}\bar{Q} \to \R$ is
  \begin{equation}
  	\label{eq:ChapHamVD}
    \bar{H} = \frac{1}{2}\parentheses{ \frac{1}{J}\,p_{\varphi}^{2} + \frac{I + m R^{2}}{I^{2}}\,p_{\psi}^{2} }.
  \end{equation}
  A simple calculation shows that the horizontal lift $\hl^{\mathcal{M}}: T^{*}\bar{Q} \to \mathcal{M}$ is given by
  \begin{equation}
    \label{eq:hl^M-VRD}
    \hl^{\mathcal{M}}(p_{\varphi}, p_{\psi})
    = \parentheses{ p_{\varphi},\, \frac{m R}{I} \cos\varphi\,p_{\psi},\, \frac{m R}{I} \sin\varphi\,p_{\psi},\, p_{\psi} }.
  \end{equation}
  Then, we find from Eq.~\eqref{eq:Xi-def} along with Eqs.~\eqref{eq:mathcalA-VRD}, \eqref{eq:mathcalB-VRD}, \eqref{eq:J-VRD}, and \eqref{eq:hl^M-VRD} that $\Xi = 0$.
  Therefore, the sufficient condition, Eq.~\eqref{eq:f}, for Chaplygin Hamiltonization reduces to $df \wedge \Theta = 0$, and hence we may choose $f = 1$.
  Thus, the Chaplygin Hamiltonian $\bar{H}_{\rm C}: T^{*}\bar{Q} \to \R$ is identical to $\bar{H}$ (see Eq.~\eqref{eq:ChaplyginHamiltonian}).
  
  To illustrate Theorem \ref{thm:ChaplyginHJ-NHHJ}, we begin with the Chaplygin Hamilton--Jacobi equation~\eqref{eq:ChaplyginHJ}:
  \begin{equation}
    \label{eq:ChaplyginHJ-VRD}
    \frac{1}{2}\brackets{
      \frac{1}{J} \parentheses{ \pd{\bar{W}}{\varphi} }^{2} + \frac{I + m R^{2}}{I^{2}} \parentheses{ \pd{\bar{W}}{\psi} }^{2}
    } = E.
  \end{equation}
  Now, we employ the conventional approach of separation of variables, i.e., assume that $\bar{W}: \bar{Q} \to \R$ takes the following form:
  \begin{equation*}
    \bar{W}(\varphi, \psi) = \bar{W}_{\varphi}(\varphi) + \bar{W}_{\psi}(\psi).
  \end{equation*}
  Then, Eq.~\eqref{eq:ChaplyginHJ-VRD} becomes
  \begin{equation*}
    \frac{1}{2}\brackets{
      \frac{1}{J} \parentheses{ \od{\bar{W}_{\varphi}}{\varphi} }^{2} + \frac{I + m R^{2}}{I^{2}} \parentheses{ \od{\bar{W}_{\psi}}{\psi} }^{2}
    } = E.
  \end{equation*}
  Since the first term on the left-hand side depends only on $\varphi$ and the second only on $\psi$, we obtain the solution
  \begin{equation}
  	\label{eq:VDsimp1}
    \od{\bar{W}_{\varphi}}{\varphi} = \gamma_{\varphi}^{0},
    \qquad
    \od{\bar{W}_{\psi}}{\psi} = \gamma_{\psi}^{0},
  \end{equation}
  where $\gamma_{\varphi}^{0}$ and $\gamma_{\psi}^{0}$ are the constants determined by the initial condition such that
  \begin{equation*}
    \frac{1}{2}\brackets{ \frac{1}{J}(\gamma_{\varphi}^{0})^{2} + \frac{I+m R^{2}}{I^{2}}(\gamma_{\psi}^{0})^{2} } = E.
  \end{equation*}
  Then, Eq.~\eqref{eq:gamma-dbarW} gives
  \begin{equation}
  	\label{eq:VDsoln}
    \gamma(\varphi, x, y, \psi)
    = \gamma_{\varphi}^{0}\,d\varphi+ \frac{m R}{I}\cos\varphi\,\gamma_{\psi}^{0}\,dx + \frac{m R}{I}\sin\varphi\,\gamma_{\psi}^{0}\,dy
    + \gamma_{\psi}^{0}\,d\psi,
  \end{equation}
  which is the solution of the nonholonomic Hamilton--Jacobi equation~\eqref{eq:NHHJ} obtained in \citet[][Example~4.1]{OhBl2009}:
\end{example}

\begin{example}[The knife edge; see, e.g., \citet{Bl2003}]
  \label{ex:KnifeEdge}
  Consider a plane slanted at an angle $\alpha$ from the horizontal and let $(x,y)$ represent the position of the point of contact of the knife edge with respect to a fixed Cartesian coordinate system on the plane~(see Fig.~\ref{fig:KnifeEdge}) and $\varphi$ the angle of it as shown in Fig.~\ref{fig:KnifeEdge}.
  \begin{figure}[htbp]
    \centering
    \includegraphics[width=.65\linewidth]{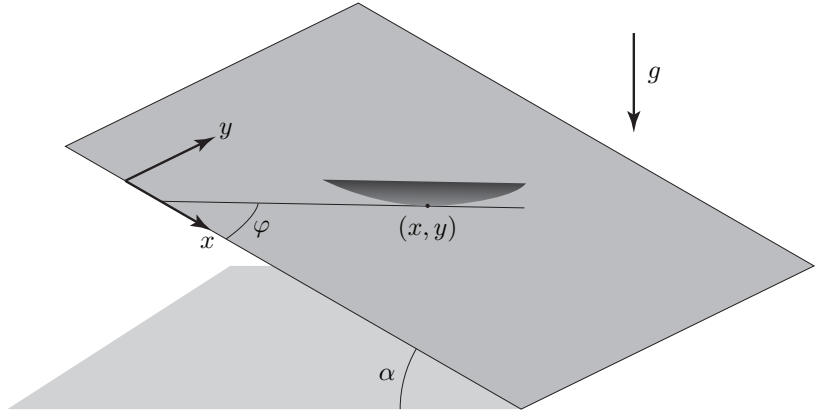}
    \caption{Knife edge on inclined plane.}
    \label{fig:KnifeEdge}
  \end{figure}
  The configuration space is
  \begin{equation*}
    Q = SE(2) = SO(2) \ltimes \R^{2} = \{ (\varphi, x, y) \}.
  \end{equation*}
  Suppose that the mass of the knife edge is $m$, and the moment of inertia about the axis perpendicular to the inclined plane through its contact point is $J$.
  The Lagrangian $L: TQ \to \R$ and the Hamiltonian $H: T^{*}Q \to \R$ are given by
  \begin{equation*}
    L = \frac{1}{2}m \parentheses{\dot{x}^{2} + \dot{y}^{2}} + \frac{1}{2}J \dot{\varphi}^{2} + m g x \sin \alpha
  \end{equation*}
  and
  \begin{equation*}
    H = \frac{1}{2}\parentheses{ \frac{p_{x}^{2} + p_{y}^{2}}{m} + \frac{p_{\varphi}^{2}}{J} } - m g x \sin \alpha.
  \end{equation*}
  The velocity constraint is
  \begin{equation*}
    \sin\varphi\,\dot{x} - \cos\varphi\,\dot{y} = 0
  \end{equation*}
  and so the constraint one-form is
  \begin{equation*}
    \omega^{1} = \sin\varphi\,dx - \cos\varphi\,dy.
  \end{equation*}
  The constraint distribution $\mathcal{D} \subset TQ$ and the constrained momentum space $\mathcal{M} \subset T^{*}Q$ are given by
  \begin{equation*}
    \mathcal{D}
    = \setdef{ (\dot{\varphi}, \dot{x}, \dot{y}) \in TQ }{ \sin\varphi\,\dot{x} - \cos\varphi\,\dot{y} = 0 }
  \end{equation*}
  and
  \begin{equation*}
    \mathcal{M}
    = \setdef{ (p_{\varphi}, p_{x}, p_{y}) \in T^{*}Q }{ \sin\varphi\,p_{x} = \cos\varphi\,p_{y} }.
  \end{equation*}

  Let $G = \R$ and consider the action of $G$ on $Q$ defined by
  \begin{equation*}
    G \times Q \to Q;
    \quad
    \parentheses{a, (\varphi, x, y)} \mapsto (\varphi, x, y + a).
  \end{equation*}
  Then, the system is a Chaplygin system in the sense of Definition~\ref{def:ChaplyginSystems}.
  The Lie algebra $\mathfrak{g}$ is identified with $\R$ in this case; let us use $\eta$ as the coordinate for $\mathfrak{g}$.
  Then, we may write the connection $\mathcal{A}: TQ \to \mathfrak{g}$ as
  \begin{equation}
    \label{eq:mathcalA-KnifeEdge}
    \mathcal{A} = (dy - \tan\varphi\,dx) \otimes \pd{}{\eta},
  \end{equation}
  and hence its curvature as
  \begin{equation}
    \label{eq:mathcalB-KnifeEdge}
    \mathcal{B} = \frac{1}{\cos^{2}\varphi}\,dx \wedge d\varphi \otimes \pd{}{\eta},
  \end{equation}
  where we assume that $\varphi$ stays in the range $(0, \pi/2)$ or $(\pi/2, \pi)$ to avoid singularities.
  Furthermore, the momentum map ${\bf J}: T^{*}Q \to \mathfrak{g}^{*}$ is given by
  \begin{equation}
    \label{eq:J-KnifeEdge}
    {\bf J}(p_{q}) = p_{y}\,d\eta.
  \end{equation}

  The quotient space is $\bar{Q} \defeq Q/G = \{(\varphi, x)\}$.
  The reduced Hamiltonian $\bar{H}: T^{*}\bar{Q} \to \R$ is
  \begin{equation*}
    \bar{H} = \frac{1}{2}\parentheses{ \frac{\cos^{2}\varphi}{m}\,p_{x}^{2} + \frac{1}{J}\,p_{\varphi}^{2} } - m g x \sin \alpha.
  \end{equation*}
  A simple calculation shows that the horizontal lift $\hl^{\mathcal{M}}: T^{*}\bar{Q} \to \mathcal{M}$ is given by
  \begin{equation}
    \label{eq:hl^M-KnifeEdge}
    \hl^{\mathcal{M}}(p_{\varphi}, p_{x})
    = \parentheses{ p_{\varphi}, \cos^{2}\varphi\,p_{x},\, \sin\varphi \cos\varphi\,p_{x} }.
  \end{equation}
  Then, we find from Eq.~\eqref{eq:Xi-def} along with Eqs.~\eqref{eq:mathcalA-KnifeEdge}, \eqref{eq:mathcalB-KnifeEdge}, \eqref{eq:J-KnifeEdge}, and \eqref{eq:hl^M-KnifeEdge} that
  \begin{equation*}
    \Xi = p_{x} \tan\varphi\,dx \wedge d\varphi.
  \end{equation*}
  Therefore, the sufficient condition, Eq.~\eqref{eq:f}, for Chaplygin Hamiltonization gives
  \begin{equation*}
    p_{\varphi} \pd{f}{x} - p_{x} \pd{f}{\varphi} = (p_{x} \tan\varphi)\,f.
  \end{equation*}
  It is easy to find the solution
  \begin{equation}
    \label{eq:f-KnifeEdge}
    f = \cos\varphi.
  \end{equation}
  Note that $f$ is nowhere-vanishing if $\varphi$ is assumed to be in the range $(0, \pi/2)$ or $(\pi/2, \pi)$.

  Then, Eq.~\eqref{eq:ChaplyginHamiltonian} gives the following Chaplygin Hamiltonian:
  \begin{align*}
    \bar{H}_{\rm C}(\varphi, x, p_{\varphi}, p_{x})
    &= \bar{H}\parentheses{ \varphi, x, \frac{p_{\varphi}}{\cos\varphi}, \frac{p_{x}}{\cos\varphi} }
    \\
    &= \frac{1}{2}\parentheses{ \frac{1}{m}\,p_{x}^{2} + \frac{1}{J \cos^{2}\varphi}\,p_{\varphi}^{2} } - m g x \sin \alpha.
  \end{align*}

  The Chaplygin Hamilton--Jacobi equation~\eqref{eq:ChaplyginHJ} then becomes
  \begin{equation}
    \label{eq:ChaplyginHJ-KnifeEdge}
    \frac{1}{2}\brackets{
      \frac{1}{m} \parentheses{ \pd{\bar{W}}{x} }^{2} + \frac{1}{J \cos^{2}\varphi} \parentheses{ \pd{\bar{W}}{\varphi} }^{2}
    }
    - m g x \sin \alpha = E.
  \end{equation}
  Assume that $\bar{W}: \bar{Q} \to \R$ takes the following form:
  \begin{equation*}
    \bar{W}(\varphi, x) =  \bar{W}_{\varphi}(\varphi) + \bar{W}_{x}(x).
  \end{equation*}
  Then, Eq.~\eqref{eq:ChaplyginHJ-KnifeEdge} becomes
  \begin{equation*}
    \frac{1}{2}\brackets{
      \frac{1}{m} \parentheses{ \od{\bar{W}_{x}}{x} }^{2}- (2 m g \sin \alpha)\,x
      + \frac{1}{J \cos^{2}\varphi} \parentheses{ \od{\bar{W}_{\varphi}}{\varphi} }^{2}
    }
    = E.
  \end{equation*}
  The first two terms in the brackets depend only on $x$, whereas the third only on $\varphi$, and thus
  \begin{equation*}
    \frac{1}{m} \parentheses{ \od{\bar{W}_{x}}{x} }^{2}- (2 m g \sin \alpha)\,x = 2E - \frac{(\gamma_{\varphi}^{0})^{2}}{J},
    \qquad
    \frac{1}{\cos^{2}\varphi} \parentheses{ \od{\bar{W}_{\varphi}}{\varphi} }^{2} = (\gamma_{\varphi}^{0})^{2},
  \end{equation*}
  with some positive constant $\gamma_{\varphi}^{0}$.
  Hence, assuming $\tod{\bar{W}_{x}}{x} \ge 0$, we have
  \begin{equation*}
    \od{\bar{W}_{x}}{x} = \sqrt{ m\parentheses{ 2E - \frac{(\gamma_{\varphi}^{0})^{2}}{J} } + (2 m^{2} g \sin\alpha)\,x },
    \qquad
    \od{\bar{W}_{\varphi}}{\varphi} = \gamma_{\varphi}^{0} \cos\varphi.
  \end{equation*}
  Then, Eq.~\eqref{eq:gamma-dbarW} gives
  \begin{equation*}
    \gamma(\varphi, x, y)
    = \gamma_{\varphi}^{0}\,d\varphi + \sqrt{ m\parentheses{ 2E - \frac{(\gamma_{\varphi}^{0})^{2}}{J} } + (2 m^{2} g \sin\alpha)\,x }\,( \cos\varphi\,dx + \sin\varphi\,dy),
  \end{equation*}
  which is the solution of the nonholonomic Hamilton--Jacobi equation~\eqref{eq:NHHJ} obtained in \citet[][Example~4.2]{OhBl2009}.
\end{example}

\section{Further Reduction and Hamiltonization}
\label{sec:FurtherReduction}
It often happens that there does not exist an $f$ that satisfies the necessary and sufficient condition in Proposition~\ref{prop:NSCondition-Hamiltonization} or the sufficient condition, Eq.~\eqref{eq:f}, and hence we cannot Hamiltonize the system based on the above theory.
However, we may reduce such systems further and then attempt to Hamiltonize the further-reduced system.

\subsection{Further Reduction of Chaplygin Systems}
We consider the following special case of the ``truncation'' of \citet[][Section~3.B]{HoGa2009}.
Recall the reduced Chaplygin system, Eq.~\eqref{eq:ReducedChaplyginSystem}, on $T^{*}\bar{Q}$, i.e., 
\begin{equation}
  \label{eq:ReducedChaplyginSystem2}
  i_{\bar{X}} \bar{\Omega}^{\rm nh} = d\bar{H}
\end{equation}
with the almost symplectic form
\begin{equation}
  \bar{\Omega}^{\rm nh} \defeq \bar{\Omega} - \Xi,
\end{equation}
and consider a free and proper Lie group action $K \times \bar{Q} \to \bar{Q}$, or $\Phi^{K}_{k}: \bar{Q} \to \bar{Q}$ with any $k \in K$, that satisfies the following conditions:
\begin{enumerate}[\bf I.]
\item The Hamiltonian $\bar{H}$ is $K$-invariant, i.e., $\bar{H} \circ T^{*}\Phi^{K}_{k} = \bar{H}$ for any $k \in K$, where $T^{*}\Phi^{K}$ is the cotangent lift of $\Phi^{K}$.
  \label{asmptn:K-invariance}
\item For any element $\eta$ in the Lie algebra $\mathfrak{k}$ of $K$, the infinitesimal generator $\eta_{T^{*}\bar{Q}}$ satisfies
  \begin{equation}
    \label{eq:i_eta-Xi}
    i_{\eta_{T^{*}\bar{Q}}} \Xi = 0.
  \end{equation}
  \label{asmptn:i_eta-Xi}
\end{enumerate}

Now, let ${\bf J}_{K}: T^{*}\bar{Q} \to \mathfrak{k}^{*}$ be the equivariant momentum map for the cotangent lift of the $K$-action $\Phi^{K}$, i.e., for any $\alpha_{\bar{q}} \in T^{*}\bar{Q}$ and $\eta \in \mathfrak{k}$,
\begin{equation}
  \label{eq:J_K-def}
  \ip{ {\bf J}_{K}(\alpha_{\bar{q}}) }{ \eta } = \ip{ \alpha_{\bar{q}} }{ \eta_{\bar{Q}} }.
\end{equation}
Also define $J_{K}^{\eta}: T^{*}\bar{Q} \to \R$ by $J_{K}^{\eta}(\alpha_{\bar{q}}) \defeq \ip{ {\bf J}_{K}(\alpha_{\bar{q}}) }{ \eta }$ for each $\eta \in \mathfrak{k}$.
Then, we have 
\begin{equation*}
  i_{\eta_{T^{*}\bar{Q}}} \bar{\Omega} = dJ_{K}^{\eta}.
\end{equation*}
Notice that Condition~\ref{asmptn:i_eta-Xi} implies
\begin{equation*}
  i_{\eta_{T^{*}\bar{Q}}} \bar{\Omega}^{\rm nh} = i_{\eta_{T^{*}\bar{Q}}} \bar{\Omega},
\end{equation*}
and thus
\begin{equation}
  \label{eq:J_K-Omega-Omega_nh}
  i_{\eta_{T^{*}\bar{Q}}} \bar{\Omega}^{\rm nh} = i_{\eta_{T^{*}\bar{Q}}} \bar{\Omega} = dJ_{K}^{\eta}.
\end{equation}
In other words, ${\bf J}_{K}$ is a momentum map with respect to {\em both} the standard symplectic form $\bar{\Omega}$ and the almost symplectic form $\bar{\Omega}^{\rm nh}$.
We also have the following:
\begin{proposition}
  \label{prop:J_K-conservation}
  Under Conditions~\ref{asmptn:K-invariance} and \ref{asmptn:i_eta-Xi} stated above, the momentum map ${\bf J}_{K}: T^{*}\bar{Q} \to \mathfrak{k}^{*}$ is conserved along the flow of the vector field $\bar{X}$ of the reduced Chaplygin system, Eq.~\eqref{eq:ReducedChaplyginSystem2}.
\end{proposition}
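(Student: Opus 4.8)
The plan is to show that for every $\eta \in \mathfrak{k}$ the function $J_K^{\eta}: T^{*}\bar{Q} \to \R$ is a constant of motion for $\bar{X}$, i.e. $\bar{X}[J_K^{\eta}] = 0$; since $\eta$ is arbitrary, this is equivalent to the conservation of ${\bf J}_{K}$ along the flow of $\bar{X}$. First I would write, using the skew-symmetry of $\bar{\Omega}^{\rm nh}$,
\begin{equation*}
  \bar{X}[J_K^{\eta}] = \ip{ dJ_K^{\eta} }{ \bar{X} } = \bar{\Omega}^{\rm nh}\parentheses{ \eta_{T^{*}\bar{Q}}, \bar{X} } = -\bar{\Omega}^{\rm nh}\parentheses{ \bar{X}, \eta_{T^{*}\bar{Q}} } = -\parentheses{ i_{\bar{X}} \bar{\Omega}^{\rm nh} }\parentheses{ \eta_{T^{*}\bar{Q}} },
\end{equation*}
where the second equality is exactly Eq.~\eqref{eq:J_K-Omega-Omega_nh}, which in turn rests on Condition~\ref{asmptn:i_eta-Xi}.

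Next I would invoke the reduced equation of motion $i_{\bar{X}} \bar{\Omega}^{\rm nh} = d\bar{H}$ from Eq.~\eqref{eq:ReducedChaplyginSystem2} to reduce the right-hand side to $-\ip{ d\bar{H} }{ \eta_{T^{*}\bar{Q}} } = -\eta_{T^{*}\bar{Q}}[\bar{H}]$. Finally I would differentiate the invariance identity $\bar{H} \circ T^{*}\Phi^{K}_{\exp(t\eta)} = \bar{H}$ supplied by Condition~\ref{asmptn:K-invariance} at $t = 0$ to obtain $\eta_{T^{*}\bar{Q}}[\bar{H}] = 0$, whence $\bar{X}[J_K^{\eta}] = 0$. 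Since $J_K^{\eta}$ is conserved for every $\eta \in \mathfrak{k}$, the vector-valued function ${\bf J}_{K}$ is conserved, as claimed.

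There is no genuine obstacle here; the only point requiring care is that $\bar{\Omega}^{\rm nh}$ is merely an almost symplectic (generally non-closed) two-form, so the usual Hamiltonian argument that a momentum map Poisson-commutes with an invariant Hamiltonian is not automatic. However, Condition~\ref{asmptn:i_eta-Xi} has been arranged precisely so that $i_{\eta_{T^{*}\bar{Q}}} \bar{\Omega}^{\rm nh} = i_{\eta_{T^{*}\bar{Q}}} \bar{\Omega} = dJ_K^{\eta}$, i.e. ${\bf J}_{K}$ remains a bona fide momentum map for $\bar{\Omega}^{\rm nh}$, and this is the only structure the computation above uses; closedness of $\bar{\Omega}^{\rm nh}$ (equivalently, closedness of $\Xi$) is never needed.
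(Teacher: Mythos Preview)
Your proof is correct and follows essentially the same approach as the paper: both compute $\bar{X}[J_K^{\eta}]$ by combining $i_{\eta_{T^{*}\bar{Q}}}\bar{\Omega}^{\rm nh} = dJ_K^{\eta}$ (from Condition~\ref{asmptn:i_eta-Xi}) with $i_{\bar{X}}\bar{\Omega}^{\rm nh} = d\bar{H}$ and the $K$-invariance of $\bar{H}$ (Condition~\ref{asmptn:K-invariance}). Your closing remark that closedness of $\bar{\Omega}^{\rm nh}$ is never used is a worthwhile observation, though not in the paper's proof.
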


\begin{proof}
  Follows easily from the following calculation:
  \begin{align*}
    \bar{X}[J_{K}^{\eta}] &= i_{\bar{X}} dJ_{K}^{\eta}
    \\
    &= i_{\bar{X}} i_{\eta_{T^{*}\bar{Q}}} \bar{\Omega}^{\rm nh}
    \\
    &= -i_{\eta_{T^{*}\bar{Q}}} i_{\bar{X}} \bar{\Omega}^{\rm nh}
    \\
    &= -i_{\eta_{T^{*}\bar{Q}}} d\bar{H}
    \\
    &= -\eta_{T^{*}\bar{Q}} [\bar{H}]
    \\
    &= 0,
  \end{align*}
  where we used Eq.~\eqref{eq:J_K-Omega-Omega_nh} in the second line, and Condition~\ref{asmptn:K-invariance} in the last line.
\end{proof}

Also, let $K_{\mu}$ be the coadjoint isotropy group of $\mu$, i.e., $K_{\mu} \defeq \setdef{k \in K}{\Ad_{k}^{*} \mu = \mu}$, and assume
\begin{enumerate}[\bf I.]
  \setcounter{enumi}{2}
\item $\mu \in \mathfrak{k}^{*}$ is a regular value of ${\bf J}_{K}$, and $K_{\mu}$ acts freely and properly on ${\bf J}_{K}^{-1}(\mu)$.
  \label{asmptn:K-free_and_proper}
\end{enumerate}
Since ${\bf J}_{K}$ is a momentum map with respect to the almost symplectic form $\bar{\Omega}^{\rm nh}$, the two-form $\bar{\Omega}^{\rm nh}$ itself works as a ``truncated form'' (see \citet[][Section~3.B and Theorem~3.3]{HoGa2009}) in this special case: Performing the almost symplectic reduction of \citet{Pl2004}, we may drop the dynamics to ${\bf J}_{K}^{-1}(\mu)/K_{\mu}$ as follows:

\begin{proposition}[Further Reduction of Chaplygin Systems]
  \label{prop:FurtherReduction}
  Under Conditions~\ref{asmptn:K-invariance}--\ref{asmptn:K-free_and_proper}, we have the following:
  \begin{enumerate}[(i)]
  \item There exists an almost symplectic form $\bar{\Omega}^{\rm nh}_{\mu}$ on ${\bf J}_{K}^{-1}(\mu)/K_{\mu}$ uniquely
    characterized by
    \begin{equation}
      \label{eq:barOmegaNH-barOmega_muNH}
      \pi_{\mu}^{*} \bar{\Omega}^{\rm nh}_{\mu} = i_{\mu}^{*} \bar{\Omega}^{\rm nh},
    \end{equation}
    where $i_{\mu}: {\bf J}_{K}^{-1}(\mu) \hookrightarrow T^{*}\bar{Q}$ and $\pi_{\mu}: {\bf J}_{K}^{-1}(\mu) \to {\bf J}_{K}^{-1}(\mu)/K_{\mu}$.
    \label{prop:FurtherReduction-i}
    \medskip
  \item The reduced Chaplygin system, Eq.~\eqref{eq:ReducedChaplyginSystem2}, is further reduced to the following system:
    \begin{equation}
      \label{eq:FurtherReducedChaplyginSystem}
      i_{\bar{X}_{\mu}} \bar{\Omega}^{\rm nh}_{\mu} = d\bar{H}_{\mu},
    \end{equation}
    where $\bar{X}$ and $\bar{X}_{\mu}$ are $\pi_{\mu}$-related, i.e.,
    \begin{equation}
      \label{eq:barX-barX_mu}
      T\pi_{\mu} \circ \bar{X} = \bar{X}_{\mu} \circ \pi_{\mu},
    \end{equation}
    and $\bar{H}_{\mu}: {\bf J}_{K}^{-1}(\mu)/K_{\mu} \to \R$ is defined by
    \begin{equation}
      \label{eq:barH-barH_mu}
      \bar{H}_{\mu} \circ \pi_{\mu} = \bar{H} \circ i_{\mu}.
    \end{equation}
    \label{prop:FurtherReduction-ii}
  \item The almost symplectic form $ \bar{\Omega}^{\rm nh}_{\mu}$ is written as
    \begin{equation*}
      \bar{\Omega}^{\rm nh}_{\mu} = \bar{\Omega}_{\mu} - \Xi_{\mu},
    \end{equation*}
    where $\Xi_{\mu}$ is uniquely characterized by
    \begin{equation}
      \label{eq:Xi-Xi_mu}
      \pi_{\mu}^{*} \Xi_{\mu} = i_{\mu}^{*} \Xi.
    \end{equation}
    \label{prop:FurtherReduction-iii}
  \end{enumerate}
\end{proposition}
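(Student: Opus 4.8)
The plan is to run the almost symplectic version of Marsden--Weinstein point reduction, in the sense of \citet{Pl2004}, for the pair $\bigl(\bar{\Omega}^{\rm nh},{\bf J}_{K}\bigr)$, exploiting that by Eq.~\eqref{eq:J_K-Omega-Omega_nh} the map ${\bf J}_{K}$ is a momentum map for the \emph{nondegenerate} two-form $\bar{\Omega}^{\rm nh}$ itself. The one recurring tool is the descent criterion: a differential form $\beta$ on ${\bf J}_{K}^{-1}(\mu)$ equals $\pi_{\mu}^{*}\beta_{\mu}$ for a unique form $\beta_{\mu}$ on ${\bf J}_{K}^{-1}(\mu)/K_{\mu}$ precisely when $\beta$ is basic, i.e., $K_{\mu}$-invariant and annihilated by $i_{\eta_{T^{*}\bar{Q}}}$ for every $\eta\in\mathfrak{k}_{\mu}$; uniqueness (and the implication $\pi_{\mu}^{*}\beta_{\mu}=0\Rightarrow\beta_{\mu}=0$) comes from $\pi_{\mu}$ being a surjective submersion. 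Accordingly I would prove \ref{prop:FurtherReduction-i} by checking $i_{\mu}^{*}\bar{\Omega}^{\rm nh}$ is basic and the resulting $\bar{\Omega}^{\rm nh}_{\mu}$ nondegenerate, deduce \ref{prop:FurtherReduction-ii} by transporting Eq.~\eqref{eq:ReducedChaplyginSystem2} through $\pi_{\mu}$, and get \ref{prop:FurtherReduction-iii} by applying the descent criterion separately to $\Xi$ and to $\bar{\Omega}$.

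For \ref{prop:FurtherReduction-i}: $K_{\mu}$-invariance of $i_{\mu}^{*}\bar{\Omega}^{\rm nh}$ follows from the $K$-invariance of $\bar{\Omega}$ (the $K$-action on $T^{*}\bar{Q}$ is cotangent-lifted) together with the $K$-invariance of $\Xi$ built into the present setup. Horizontality is immediate from Eq.~\eqref{eq:J_K-Omega-Omega_nh}: for $\eta\in\mathfrak{k}_{\mu}$ one has $i_{\eta_{T^{*}\bar{Q}}}\bar{\Omega}^{\rm nh}=dJ_{K}^{\eta}$, while $J_{K}^{\eta}\equiv\ip{\mu}{\eta}$ is constant on ${\bf J}_{K}^{-1}(\mu)$, so $i_{\mu}^{*}\bigl(i_{\eta_{T^{*}\bar{Q}}}\bar{\Omega}^{\rm nh}\bigr)=d\bigl(i_{\mu}^{*}J_{K}^{\eta}\bigr)=0$. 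The descent criterion then gives the unique $\bar{\Omega}^{\rm nh}_{\mu}$ with $\pi_{\mu}^{*}\bar{\Omega}^{\rm nh}_{\mu}=i_{\mu}^{*}\bar{\Omega}^{\rm nh}$, which is Eq.~\eqref{eq:barOmegaNH-barOmega_muNH}. For nondegeneracy I would run the classical pointwise argument: writing $W_{\alpha}$ for the tangent space at $\alpha$ to the $K$-orbit and $\perp$ for the $\bar{\Omega}^{\rm nh}$-orthogonal complement, the momentum-map relation yields $T_{\alpha}{\bf J}_{K}^{-1}(\mu)=W_{\alpha}^{\perp}$, and nondegeneracy of $\bar{\Omega}^{\rm nh}$ yields $\bigl(W_{\alpha}^{\perp}\bigr)^{\perp}=W_{\alpha}$; hence the radical of $i_{\mu}^{*}\bar{\Omega}^{\rm nh}$ at $\alpha$ is $W_{\alpha}^{\perp}\cap W_{\alpha}=T_{\alpha}{\bf J}_{K}^{-1}(\mu)\cap W_{\alpha}$, which by $\Ad^{*}$-equivariance of ${\bf J}_{K}$ is exactly $\mathfrak{k}_{\mu}\cdot\alpha=\ker T_{\alpha}\pi_{\mu}$ (Condition~\ref{asmptn:K-free_and_proper} making $\mu$ a regular value and $K_{\mu}$ act freely, so the orbit dimensions are as expected). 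Thus $\bar{\Omega}^{\rm nh}_{\mu}$ has trivial radical.

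For \ref{prop:FurtherReduction-ii}: Proposition~\ref{prop:J_K-conservation} shows $\bar{X}$ is tangent to ${\bf J}_{K}^{-1}(\mu)$, and $\bar{X}$ is $K$-invariant since it is the unique field solving Eq.~\eqref{eq:ReducedChaplyginSystem2} with the $K$-invariant data $\bigl(\bar{\Omega}^{\rm nh},\bar{H}\bigr)$ (Condition~\ref{asmptn:K-invariance}); hence $\bar{X}|_{{\bf J}_{K}^{-1}(\mu)}$ is $K_{\mu}$-invariant and drops to $\bar{X}_{\mu}$ obeying Eq.~\eqref{eq:barX-barX_mu}, while $\bar{H}|_{{\bf J}_{K}^{-1}(\mu)}$ drops to $\bar{H}_{\mu}$ via Eq.~\eqref{eq:barH-barH_mu}. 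Then $\pi_{\mu}$-relatedness, part~\ref{prop:FurtherReduction-i}, and tangency of $\bar{X}$ give
\[
  \pi_{\mu}^{*}\bigl(i_{\bar{X}_{\mu}}\bar{\Omega}^{\rm nh}_{\mu}\bigr)=i_{\bar{X}}\bigl(i_{\mu}^{*}\bar{\Omega}^{\rm nh}\bigr)=i_{\mu}^{*}\bigl(i_{\bar{X}}\bar{\Omega}^{\rm nh}\bigr)=i_{\mu}^{*}d\bar{H}=\pi_{\mu}^{*}d\bar{H}_{\mu},
\]
so injectivity of $\pi_{\mu}^{*}$ yields Eq.~\eqref{eq:FurtherReducedChaplyginSystem}. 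For \ref{prop:FurtherReduction-iii}: $i_{\mu}^{*}\Xi$ is $K_{\mu}$-invariant by $K$-invariance of $\Xi$ and horizontal because Condition~\ref{asmptn:i_eta-Xi} gives $i_{\eta_{T^{*}\bar{Q}}}\Xi=0$ for \emph{all} $\eta\in\mathfrak{k}$, so the descent criterion produces a unique $\Xi_{\mu}$ with $\pi_{\mu}^{*}\Xi_{\mu}=i_{\mu}^{*}\Xi$ (Eq.~\eqref{eq:Xi-Xi_mu}); the ordinary Marsden--Weinstein reduction of $\bigl(T^{*}\bar{Q},\bar{\Omega},{\bf J}_{K}\bigr)$ gives the reduced symplectic form $\bar{\Omega}_{\mu}$ with $\pi_{\mu}^{*}\bar{\Omega}_{\mu}=i_{\mu}^{*}\bar{\Omega}$; then $\pi_{\mu}^{*}(\bar{\Omega}_{\mu}-\Xi_{\mu})=i_{\mu}^{*}(\bar{\Omega}-\Xi)=i_{\mu}^{*}\bar{\Omega}^{\rm nh}=\pi_{\mu}^{*}\bar{\Omega}^{\rm nh}_{\mu}$, and injectivity of $\pi_{\mu}^{*}$ forces $\bar{\Omega}^{\rm nh}_{\mu}=\bar{\Omega}_{\mu}-\Xi_{\mu}$.

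The step I expect to be the real obstacle is the nondegeneracy claim in \ref{prop:FurtherReduction-i}: one must be certain the classical Marsden--Weinstein argument survives in the \emph{almost} (non-closed) symplectic category. It does, because the orthogonal-complement computation above is pure pointwise linear algebra using only that $\bar{\Omega}^{\rm nh}$ is a nondegenerate two-form admitting ${\bf J}_{K}$ as a momentum map, so closedness of $\bar{\Omega}^{\rm nh}$ is never invoked; this is precisely why the almost symplectic reduction of \citet{Pl2004} applies here. A secondary point to keep honest throughout is the $K$-invariance of $\Xi$ (equivalently of $\bar{\Omega}^{\rm nh}$), on which every invariance claim above rests: it should be recorded among the standing hypotheses on the $K$-action, or seen to follow from the residual $K$-symmetry being covered by a $G$-equivariant $K$-action on $Q$.
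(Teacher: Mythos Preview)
Your proposal is correct and follows essentially the same route as the paper: the paper dispatches \ref{prop:FurtherReduction-i} and \ref{prop:FurtherReduction-ii} by a direct citation of \citet[Theorem~2.1]{Pl2004}, whereas you unpack that almost symplectic reduction argument in full (basicness of $i_{\mu}^{*}\bar{\Omega}^{\rm nh}$, the pointwise nondegeneracy computation, and the descent of $\bar{X}$ and $\bar{H}$). For \ref{prop:FurtherReduction-iii} the paper argues in the reverse direction---it obtains $\bar{\Omega}_{\mu}$ via Marsden--Weinstein, \emph{defines} $\Xi_{\mu}\defeq\bar{\Omega}_{\mu}-\bar{\Omega}^{\rm nh}_{\mu}$, and then reads off Eq.~\eqref{eq:Xi-Xi_mu}---while you descend $i_{\mu}^{*}\Xi$ directly using Condition~\ref{asmptn:i_eta-Xi} and then check the decomposition; the two are equivalent, and your version has the minor advantage of making the role of Condition~\ref{asmptn:i_eta-Xi} explicit. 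Your flag about the $K$-invariance of $\Xi$ is well taken: the paper does not list it among Conditions~\ref{asmptn:K-invariance}--\ref{asmptn:K-free_and_proper} and relies on it implicitly through the citation to \citet{Pl2004} and \citet{HoGa2009}.
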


\begin{proof}
  \eqref{prop:FurtherReduction-i} and \eqref{prop:FurtherReduction-ii} follow directly from \citet[][Theorem~2.1]{Pl2004}.
  \eqref{prop:FurtherReduction-iii}~Since ${\bf J}_{K}$ is an equivariant momentum map with respect to the {\em canonical} symplectic form $\bar{\Omega}$, the symplectic reduction of \citet{MaWe1974} applies here as well (not to the reduction of the dynamics but to the reduction of the symplectic structure).
  Hence there exists a unique (strictly) symplectic form $\bar{\Omega}_{\mu}$ on ${\bf J}_{K}^{-1}(\mu)/K_{\mu}$ such that
  \begin{equation}
    \label{eq:barOmega-barOmega_mu}
    \pi_{\mu}^{*} \bar{\Omega}_{\mu} = i_{\mu}^{*} \bar{\Omega}.
  \end{equation}
  Combining this with Eq.~\eqref{eq:barOmegaNH-barOmega_muNH}, we have
  \begin{equation*}
    \pi_{\mu}^{*} (\bar{\Omega}_{\mu} - \bar{\Omega}^{\rm nh}_{\mu})
    = i_{\mu}^{*} (\bar{\Omega} - \bar{\Omega}^{\rm nh})
    = i_{\mu}^{*} \Xi.
  \end{equation*}
  Since $\pi_{\mu}$ is a surjective submersion, the pull-back $\pi_{\mu}^{*}$ is injective, and thus the uniqueness follows.
\end{proof}

Furthermore, under certain assumptions, we may employ a result from the theory of cotangent bundle reduction (see, e.g., \citet[Section~2.2]{MaMiOrPeRa2007}) to make our result more explicit.
To that end, we first define a mechanical connection on the principal bundle
\begin{equation*}
  \bar{\pi}: \bar{Q} \to \bar{Q}/K \eqdef \tilde{Q}
\end{equation*}
as follows: For each $\bar{q} \in \bar{Q}$, let $\mathbb{I}(\bar{q}): \mathfrak{k} \to \mathfrak{k}^{*}$ be the locked inertia tensor defined by
\begin{equation*}
  \ip{ \mathbb{I}(\bar{q}) \eta }{ \zeta } = \bar{g}_{\bar{q}}\parentheses{ \eta_{\bar{Q}}(\bar{q}), \zeta_{\bar{Q}}(\bar{q}) },
\end{equation*}
where $\bar{g}$ is the kinetic energy metric defined in Eq.~\eqref{eq:barg}, and $\eta$ and $\zeta$ are arbitrary elements in $\mathfrak{k}$.
Then, the mechanical connection $\mathcal{A}_{K}: T\bar{Q} \to \mathfrak{k}$ is defined by
\begin{equation}
  \label{eq:mathcalA_K}
  \mathcal{A}_{K}(v_{\bar{q}}) \defeq \mathbb{I}(\bar{q})^{-1} \circ {\bf J}_{K} \parentheses{ \F\bar{L}(v_{\bar{q}}) }.
\end{equation}
We will also need the ``$\mu$-component'' of $\mathcal{A}_{K}$, i.e., the one-form $\alpha_{\mu}$ on $\bar{Q}$ defined by $\alpha_{\mu}(\bar{q}) \defeq \mathcal{A}_{K}(\bar{q})^{*}\mu$, or equivalently, 
\begin{equation}
  \label{eq:alpha_mu}
  \ip{\alpha_{\mu}(\bar{q})}{v_{\bar{q}}} = \ip{\mu}{\mathcal{A}_{K}(v_{\bar{q}})}.
\end{equation}
Let us introduce the two-form $\beta_{\mu}$ on $\tilde{Q}$ defined by
\begin{equation}
  \label{eq:beta_mu}
  \bar{\pi}^{*} \beta_{\mu} = d\alpha_{\mu},
\end{equation}
and also the two-form $B^{K}_{\mu}$ on $T^{*}\tilde{Q}$ defined by
\begin{equation}
  \label{eq:B^K_mu}
  B^{K}_{\mu} \defeq \pi_{\tilde{Q}}^{*} \beta_{\mu},
\end{equation}
where $\pi_{\tilde{Q}}: T^{*}\tilde{Q} \to \tilde{Q}$ is the cotangent bundle projection.

Now, we assume the following:
\begin{enumerate}[\bf I.]
  \setcounter{enumi}{3}
\item $K_{\mu} = K$, which is always the case when $K$ is Abelian;
  \label{asmptn:K_mu=K}
\item $\alpha_{\mu}$ is $K$-invariant and takes values in ${\bf J}_{K}^{-1}(\mu)$.
  \label{asmptn:alpha_mu}
\end{enumerate}
With these additional assumptions, we have the following important special case of Proposition~\ref{prop:FurtherReduction}:
\begin{proposition}[Further Reduction of Chaplygin Systems---Special Case]
  \label{prop:FurtherReduction-K_mu=K}
  If, in addition, Conditions~\ref{asmptn:K_mu=K} and \ref{asmptn:alpha_mu} hold, then we may extend the results of Proposition~\ref{prop:FurtherReduction} so that the dynamics after the second reduction is described on the cotangent bundle $T^{*}(\bar{Q}/K) = T^{*}\tilde{Q}$ as follows:
  \begin{enumerate}[(i)]
  \item The reduced space ${\bf J}_{K}^{-1}(\mu)/K$ is symplectically diffeomorphic to $T^{*}\tilde{Q}$ with the symplectic structure $\tilde{\Omega} - B^{K}_{\mu}$, where $\tilde{\Omega}$ is the standard symplectic form on $T^{*}\tilde{Q}$.
    \label{prop:FurtherReduction-K_mu=K-i}
    \medskip
  \item Let $\varphi_{\mu}: {\bf J}_{K}^{-1}(\mu)/K \to T^{*}\tilde{Q}$ be the symplectomorphism that gives the correspondence in \eqref{prop:FurtherReduction-K_mu=K-i}.
    Then, the dynamics on ${\bf J}_{K}^{-1}(\mu)/K$ defined by Eq.~\eqref{eq:FurtherReducedChaplyginSystem} is equivalent to the one defined by
    \begin{equation}
      \label{eq:FurtherReducedChaplyginSystem-K_mu=K}
      i_{\tilde{X}_{\mu}} \tilde{\Omega}^{\rm nh}_{\mu} = d\tilde{H}_{\mu},
    \end{equation}
    where
    \begin{equation}
      \label{eq:tildeOmega^nh_mu}
      \tilde{\Omega}^{\rm nh}_{\mu} \defeq \tilde{\Omega} - B^{K}_{\mu} - \tilde{\Xi}_{\mu}
    \end{equation}
    with $\tilde{X}_{\mu} \defeq (\varphi_{\mu}^{-1})^{*} \bar{X}_{\mu}$, $\tilde{\Xi}_{\mu} \defeq (\varphi_{\mu}^{-1})^{*} \Xi_{\mu}$, and $\tilde{H}_{\mu} \defeq \bar{H}_{\mu} \circ \varphi_{\mu}^{-1}$.
    \label{prop:FurtherReduction-K_mu=K-ii}
  \end{enumerate}
\end{proposition}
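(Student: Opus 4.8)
The plan is to prove the two parts of the proposition in order, taking the conclusions of Proposition~\ref{prop:FurtherReduction} as given: first I would establish the symplectic identification in part~(i) by invoking the cotangent bundle reduction theorem, and then I would transport the almost symplectic dynamics across that identification to obtain part~(ii).

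For part~(i), I would first check that Conditions~\ref{asmptn:K-free_and_proper}, \ref{asmptn:K_mu=K}, and \ref{asmptn:alpha_mu} put us exactly in the setting of the embedding version of the cotangent bundle reduction theorem; since $K_{\mu} = K$, the embedding is in fact onto (see, e.g., \citet[Section~2.2]{MaMiOrPeRa2007}). The one-form $\alpha_{\mu}$ of Eq.~\eqref{eq:alpha_mu} is the required $\mu$-form: because $\mathcal{A}_{K}$ reproduces infinitesimal generators, $i_{\eta_{\bar{Q}}}\alpha_{\mu} = \ip{\mu}{\eta}$ is constant on $\bar{Q}$, which gives ${\bf J}_{K}(\alpha_{\mu}(\bar{q})) = \mu$ (consistent with Condition~\ref{asmptn:alpha_mu}) and, via Cartan's formula together with the $K$-invariance of $\alpha_{\mu}$, the identity $i_{\eta_{\bar{Q}}}d\alpha_{\mu} = 0$ for every $\eta \in \mathfrak{k}$; hence $d\alpha_{\mu}$ is $\bar{\pi}$-basic and descends to the two-form $\beta_{\mu}$ on $\tilde{Q}$ of Eq.~\eqref{eq:beta_mu}, with $B^{K}_{\mu} = \pi_{\tilde{Q}}^{*}\beta_{\mu}$ the resulting magnetic term. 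The theorem then produces a symplectomorphism $\varphi_{\mu}$ from $({\bf J}_{K}^{-1}(\mu)/K, \bar{\Omega}_{\mu})$ onto $(T^{*}\tilde{Q}, \tilde{\Omega} - B^{K}_{\mu})$, where $\bar{\Omega}_{\mu}$ is the Marsden--Weinstein reduced form characterized in Eq.~\eqref{eq:barOmega-barOmega_mu} inside Proposition~\ref{prop:FurtherReduction}; this is part~(i), and I would record the equivalent identity $(\varphi_{\mu}^{-1})^{*}\bar{\Omega}_{\mu} = \tilde{\Omega} - B^{K}_{\mu}$ for use below.

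For part~(ii), I would start from the further-reduced equation $i_{\bar{X}_{\mu}}(\bar{\Omega}_{\mu} - \Xi_{\mu}) = d\bar{H}_{\mu}$ on ${\bf J}_{K}^{-1}(\mu)/K$ supplied by Proposition~\ref{prop:FurtherReduction}, and push it forward along the diffeomorphism $\varphi_{\mu}$. With $\tilde{X}_{\mu} \defeq (\varphi_{\mu}^{-1})^{*}\bar{X}_{\mu}$, $\tilde{\Xi}_{\mu} \defeq (\varphi_{\mu}^{-1})^{*}\Xi_{\mu}$, and $\tilde{H}_{\mu} \defeq \bar{H}_{\mu} \circ \varphi_{\mu}^{-1}$ as in the statement, the vector fields $\bar{X}_{\mu}$ and $\tilde{X}_{\mu}$ are $\varphi_{\mu}$-related, so interior multiplication commutes with $(\varphi_{\mu}^{-1})^{*}$; applying $(\varphi_{\mu}^{-1})^{*}$ to both sides and substituting $(\varphi_{\mu}^{-1})^{*}\bar{\Omega}_{\mu} = \tilde{\Omega} - B^{K}_{\mu}$ from part~(i) collapses the equation to $i_{\tilde{X}_{\mu}}(\tilde{\Omega} - B^{K}_{\mu} - \tilde{\Xi}_{\mu}) = d\tilde{H}_{\mu}$, which is precisely Eq.~\eqref{eq:FurtherReducedChaplyginSystem-K_mu=K} with $\tilde{\Omega}^{\rm nh}_{\mu}$ as in Eq.~\eqref{eq:tildeOmega^nh_mu}. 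Equivalence of the two descriptions of the dynamics is then immediate since $\varphi_{\mu}$ is a diffeomorphism.

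I expect the only genuine work to lie in part~(i): confirming that $\alpha_{\mu}$ meets the hypotheses of the cotangent bundle reduction theorem and that the magnetic term it produces is exactly the $B^{K}_{\mu}$ built from $\mathcal{A}_{K}$ — Conditions~\ref{asmptn:K_mu=K} and \ref{asmptn:alpha_mu} are tailored to exactly this situation. A secondary point requiring care is the compatibility of the two reductions inside Proposition~\ref{prop:FurtherReduction}: the symplectic structure is reduced by Marsden--Weinstein reduction applied to the canonical form $\bar{\Omega}$, while the dynamics is reduced by the almost symplectic reduction of \citet{Pl2004} applied to $\bar{\Omega}^{\rm nh}$, and one must observe that both descend along the same projection $\pi_{\mu}$, so that $\varphi_{\mu}$ simultaneously intertwines $\bar{\Omega}_{\mu}$ with $\tilde{\Omega} - B^{K}_{\mu}$ and $\Xi_{\mu}$ with $\tilde{\Xi}_{\mu}$. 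Once that is in place, part~(ii) is a purely formal push-forward involving no real computation.
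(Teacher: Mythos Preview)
Your proposal is correct and follows essentially the same approach as the paper: part~(i) is obtained by invoking the cotangent bundle reduction theorem of \citet[Theorem~2.2.3]{MaMiOrPeRa2007}, and part~(ii) by applying $(\varphi_{\mu}^{-1})^{*}$ to both sides of Eq.~\eqref{eq:FurtherReducedChaplyginSystem} and using $(\varphi_{\mu}^{-1})^{*}\bar{\Omega}_{\mu} = \tilde{\Omega} - B^{K}_{\mu}$. Your additional verification that $\alpha_{\mu}$ satisfies the hypotheses of the cotangent bundle reduction theorem and your remark on the compatibility of the two reductions are more explicit than what the paper provides, but the underlying argument is the same.
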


\begin{proof}
  \eqref{prop:FurtherReduction-K_mu=K-i}~See \citet[Theorem~2.2.3 on p.~64]{MaMiOrPeRa2007}.
  The construction of the map $\varphi_{\mu}$ is summarized in Appendix~\ref{sec:varphi_mu}.
  The diagrams below summarize the spaces and almost symplectic forms involved in the procedure of the reduction.
  \begin{equation*}
    \xymatrix@!0@R=0.65in@C=1in{
      T^{*}\bar{Q} & 
      \\
      {\bf J}_{K}^{-1}(\mu) \ar[u]^{i_{\mu}} \ar[d]_{\pi_\mu} &
      \\
      {\bf J}_{K}^{-1}(\mu)/K \ar[r]^{\quad\ \varphi_\mu} & T^{*}\tilde{Q}
    }
    \qquad
    \xymatrix@!0@R=0.65in@C=1in{
      \bar{\Omega}^{\rm nh} \ar@{|->}[d]_{i_{\mu}^{*}} & 
      \\
      i_{\mu}^{*}\bar{\Omega}^{\rm nh} = \pi_{\mu}^{*} \bar{\Omega}^{\rm nh}_{\mu} &
      \\
      \bar{\Omega}^{\rm nh}_{\mu} \ar@{|->}[u]^{\pi_\mu^{*}} \ar@{|->}[r]^{(\varphi_\mu^{-1})^{*}} & \tilde{\Omega}^{\rm nh}_{\mu}
    }
    \end{equation*}
  \eqref{prop:FurtherReduction-K_mu=K-ii}~Apply $(\varphi_{\mu}^{-1})^{*}$ to both sides of Eq.~\eqref{eq:FurtherReducedChaplyginSystem} and use the fact from \eqref{prop:FurtherReduction-K_mu=K-i} that $(\varphi_{\mu}^{-1})^{*} \bar{\Omega}_{\mu} = \tilde{\Omega} - B^{K}_{\mu}$.
\end{proof}

\subsection{Hamiltonization after Second Reduction}
Now, we follow a similar argument as in Section~\ref{chsb} to discuss the Hamiltonizability of the system defined by Eq.~\eqref{eq:FurtherReducedChaplyginSystem-K_mu=K}: Let $f_{\mu}: T^{*}\tilde{Q} \to \R$ be a smooth nowhere-vanishing function that is constant on each fiber, and define the map $\tilde{\Psi}_{\!f_{\mu}}: T^{*}\tilde{Q} \to T^{*}\tilde{Q}$ by
\begin{equation*}
  \tilde{\Psi}_{\!f_{\mu}}: \alpha \mapsto f_{\mu}\,\alpha.
\end{equation*}
Define the vector field $\tilde{X}_{\rm C}^{\mu}$ analogously to Eq.~\eqref{eq:barX_C} so that
\begin{equation}
  \label{eq:tildeX_C^mu}
  \tilde{X}_{\rm C}^{\mu} = \tilde{\Psi}_{1/f_{\mu}}^{*} (\tilde{X}_{\mu}/f_{\mu}),
\end{equation}
and hence $\tilde{X}_{\mu}/f_{\mu}$ and $\tilde{X}_{\rm C}^{\mu}$ are $\tilde{\Psi}_{\!f_{\mu}}$-related:
\begin{equation}
  \label{eq:tildePsi_f_mu-related}
  T\tilde{\Psi}_{\!f_{\mu}} \circ (\tilde{X}_{\mu}/f_{\mu}) = \tilde{X}_{\rm C}^{\mu} \circ \tilde{\Psi}_{\!f_{\mu}}.
\end{equation}
Following the same arguments as in the proofs of Proposition~\ref{prop:NSCondition-Hamiltonization} and Theorem~\ref{thm:SufficientCondition}, we obtain similar results for the further-reduced system, Eq.~\eqref{eq:FurtherReducedChaplyginSystem-K_mu=K}.
\begin{proposition}[Necessary and Sufficient Condition for Hamiltonization after Second Reduction]
  \label{prop:NSCondition-Hamiltonization2}
  The vector field $\tilde{X}_{\rm C}^{\mu} \in \mathfrak{X}(T^{*}\tilde{Q})$ satisfies Hamilton's equations
  \begin{equation*}
    i_{\tilde{X}_{\rm C}^{\mu}} \tilde{\Omega} = d\tilde{H}_{\rm C}^{\mu}
  \end{equation*}
  with the Chaplygin Hamiltonian
  \begin{equation}
    \label{eq:ChaplyginHamiltonian2}
    \tilde{H}_{\rm C}^{\mu} \defeq \tilde{H}_{\mu} \circ \tilde{\Psi}_{1/f_{\mu}}
  \end{equation}
  if and only if the one-form $i_{\tilde{X}_{\rm C}^{\mu}}\brackets{ df_{\mu} \wedge \tilde{\Theta} - f_{\mu}^{2} \parentheses{ B^{K}_{\mu} + \tilde{\Psi}_{1/f_{\mu}}^{*} \tilde{\Xi}_{\mu} } }$ vanishes, where $\tilde{\Theta}$ is the symplectic one-form on $T^{*}\tilde{Q}$.
\end{proposition}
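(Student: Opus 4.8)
The plan is to follow, one reduction stage down, the route that led to Lemma~\ref{lem:barX_C-eq} and Proposition~\ref{prop:NSCondition-Hamiltonization}: first derive the exact equation satisfied by $\tilde{X}_{\rm C}^{\mu}$ on $T^{*}\tilde{Q}$ (the analogue of Eq.~\eqref{eq:barX_C-eq}), and then read off the necessary and sufficient condition by isolating the term proportional to the canonical form $\tilde{\Omega}$ and invoking that $f_{\mu}$ is nowhere vanishing. The structural input is the splitting $\tilde{\Omega} = \tilde{\Omega}^{\rm nh}_{\mu} + B^{K}_{\mu} + \tilde{\Xi}_{\mu}$ from Eq.~\eqref{eq:tildeOmega^nh_mu}, which together with Eq.~\eqref{eq:FurtherReducedChaplyginSystem-K_mu=K} gives $i_{\tilde{X}_{\mu}}\tilde{\Omega} = d\tilde{H}_{\mu} + i_{\tilde{X}_{\mu}}(B^{K}_{\mu} + \tilde{\Xi}_{\mu})$.

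First I would record the scaling identities, exactly as in the proof of Lemma~\ref{lem:f_Omega}: since $\tilde{\Psi}_{\!f_{\mu}}$ is fiber-preserving and $\tilde{\Theta}$ is tautological, $\tilde{\Psi}_{\!f_{\mu}}^{*}\tilde{\Theta} = f_{\mu}\tilde{\Theta}$, hence $\tilde{\Psi}_{\!f_{\mu}}^{*}\tilde{\Omega} = f_{\mu}\tilde{\Omega} - df_{\mu}\wedge\tilde{\Theta}$, and similarly $\tilde{\Psi}_{1/f_{\mu}}^{*}\tilde{\Theta} = \tilde{\Theta}/f_{\mu}$ and $\tilde{\Psi}_{1/f_{\mu}}^{*}f_{\mu} = f_{\mu}$. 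The one genuinely new point is that $B^{K}_{\mu} = \pi_{\tilde{Q}}^{*}\beta_{\mu}$ is basic, hence fixed by the fiber-preserving map: $\tilde{\Psi}_{1/f_{\mu}}^{*}B^{K}_{\mu} = B^{K}_{\mu}$; by contrast $\tilde{\Xi}_{\mu}$ has no analogue of the fiber-linearity $\Psi_{1/f}^{*}\Xi = \Xi/f$ enjoyed by $\Xi$, so it must be carried along untouched as $\tilde{\Psi}_{1/f_{\mu}}^{*}\tilde{\Xi}_{\mu}$.

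Next, starting from Eq.~\eqref{eq:tildePsi_f_mu-related} --- which gives $\tilde{\Psi}_{\!f_{\mu}}^{*}i_{\tilde{X}_{\rm C}^{\mu}}\alpha = i_{\tilde{X}_{\mu}/f_{\mu}}\tilde{\Psi}_{\!f_{\mu}}^{*}\alpha$ for any form $\alpha$ --- taking $\alpha = \tilde{\Omega}$ and substituting the identities above together with $i_{\tilde{X}_{\mu}} = f_{\mu}\,i_{\tilde{X}_{\mu}/f_{\mu}}$, I expect to obtain
\begin{equation*}
  \tilde{\Psi}_{\!f_{\mu}}^{*} i_{\tilde{X}_{\rm C}^{\mu}}\tilde{\Omega} + i_{\tilde{X}_{\mu}/f_{\mu}}\parentheses{df_{\mu}\wedge\tilde{\Theta}} - i_{\tilde{X}_{\mu}}\parentheses{B^{K}_{\mu} + \tilde{\Xi}_{\mu}} = d\tilde{H}_{\mu}.
\end{equation*}
Applying $\tilde{\Psi}_{1/f_{\mu}}^{*}$, using that $\tilde{X}_{\rm C}^{\mu}$ and $\tilde{X}_{\mu}/f_{\mu}$ are $\tilde{\Psi}_{1/f_{\mu}}$-related, that $\tilde{\Psi}_{1/f_{\mu}}^{*}d\tilde{H}_{\mu} = d\tilde{H}_{\rm C}^{\mu}$ by Eq.~\eqref{eq:ChaplyginHamiltonian2}, and the identities of the previous paragraph, I expect
\begin{equation*}
  i_{\tilde{X}_{\rm C}^{\mu}}\brackets{ \tilde{\Omega} + \frac{1}{f_{\mu}}\,df_{\mu}\wedge\tilde{\Theta} - f_{\mu}\parentheses{ B^{K}_{\mu} + \tilde{\Psi}_{1/f_{\mu}}^{*}\tilde{\Xi}_{\mu} } } = d\tilde{H}_{\rm C}^{\mu}.
\end{equation*}
From this the proposition is immediate: $i_{\tilde{X}_{\rm C}^{\mu}}\tilde{\Omega} = d\tilde{H}_{\rm C}^{\mu}$ holds exactly when the one-form $i_{\tilde{X}_{\rm C}^{\mu}}\brackets{ \frac{1}{f_{\mu}}df_{\mu}\wedge\tilde{\Theta} - f_{\mu}(B^{K}_{\mu} + \tilde{\Psi}_{1/f_{\mu}}^{*}\tilde{\Xi}_{\mu}) }$ vanishes, and since $f_{\mu}$ never vanishes this is equivalent, after multiplying through by $f_{\mu}$, to the vanishing of $i_{\tilde{X}_{\rm C}^{\mu}}\brackets{ df_{\mu}\wedge\tilde{\Theta} - f_{\mu}^{2}(B^{K}_{\mu} + \tilde{\Psi}_{1/f_{\mu}}^{*}\tilde{\Xi}_{\mu}) }$, which is the claimed condition.

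Every manipulation here is of the type already carried out in Section~\ref{chsb}, so I do not anticipate a conceptual obstacle; the only delicate point is bookkeeping. Unlike in the single-stage case, the curvature-type correction splits into the basic piece $B^{K}_{\mu}$ (untouched by $\tilde{\Psi}_{1/f_{\mu}}^{*}$) and the non-homogeneous piece $\tilde{\Xi}_{\mu}$ (which survives only as $\tilde{\Psi}_{1/f_{\mu}}^{*}\tilde{\Xi}_{\mu}$), and one must track the extra factor of $f_{\mu}$ produced by writing $i_{\tilde{X}_{\mu}} = f_{\mu}\,i_{\tilde{X}_{\mu}/f_{\mu}}$; this factor, combined with the $f_{\mu}$ needed to clear the $1/f_{\mu}$ in front of $df_{\mu}\wedge\tilde{\Theta}$, is precisely what produces the $f_{\mu}^{2}$ in the stated condition.
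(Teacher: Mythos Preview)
Your proposal is correct and follows essentially the same route as the paper's proof: reproduce the computation of Lemma~\ref{lem:barX_C-eq} with $\tilde{\Omega}^{\rm nh}_{\mu} = \tilde{\Omega} - B^{K}_{\mu} - \tilde{\Xi}_{\mu}$ in place of $\bar{\Omega}^{\rm nh} = \bar{\Omega} - \Xi$, noting that the only new ingredient is the identity $\tilde{\Psi}_{1/f_{\mu}}^{*}B^{K}_{\mu} = B^{K}_{\mu}$ for the basic curvature term, and arrive at Eq.~\eqref{eq:tildeY_mu-eq}, from which the claim is immediate. Your explanation of why the $f_{\mu}^{2}$ appears (one factor from $i_{\tilde{X}_{\mu}} = f_{\mu}\,i_{\tilde{X}_{\mu}/f_{\mu}}$, one from clearing the $1/f_{\mu}$) is exactly the right bookkeeping.
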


\begin{proof}
  The result follows from essentially the same calculations as in the proof of Lemma~\ref{lem:barX_C-eq}.
  The only difference is the treatment of the curvature term $B^{K}_{\mu}$, which is not present in Lemma~\ref{lem:barX_C-eq}.
  Specifically, we need to calculate $\tilde{\Psi}_{1/f_{\mu}}^{*} B^{K}_{\mu}$: From the definition of $B^{K}_{\mu}$, Eq.~\eqref{eq:B^K_mu}, we have
  \begin{align*}
    \tilde{\Psi}_{1/f_{\mu}}^{*} B^{K}_{\mu}
    &= \tilde{\Psi}_{1/f_{\mu}}^{*} \pi_{\tilde{Q}}^{*} \beta_{\mu} 
    \\
    &=  \parentheses{ \pi_{\tilde{Q}} \circ \tilde{\Psi}_{1/f_{\mu}} }^{*} \beta_{\mu} 
    \\
    &=  \pi_{\tilde{Q}}^{*} \beta_{\mu} 
    \\
    &= B^{K}_{\mu},
  \end{align*}
  where we used the fact that $\tilde{\Psi}_{1/f_{\mu}}$ is fiber-preserving, i.e., $\pi_{\tilde{Q}} \circ \tilde{\Psi}_{1/f_{\mu}} = \pi_{\tilde{Q}}$.
  Therefore, we obtain
  \begin{equation}
    \label{eq:tildeY_mu-eq}
    i_{\tilde{X}_{\rm C}^{\mu}} \braces{
      \tilde{\Omega}
      + \frac{1}{f_{\mu}} \brackets{
        df_{\mu} \wedge \tilde{\Theta}
        - f_{\mu}^{2} \parentheses{ B^{K}_{\mu} + \tilde{\Psi}_{1/f_{\mu}}^{*} \tilde{\Xi}_{\mu} }
      }
    }
    =
    d\tilde{H}_{\rm C}^{\mu},
  \end{equation}
  and thus the claim follows.
\end{proof}

\begin{remark}
  Since $\tilde{\Omega} - B^{K}_{\mu}$ is also a (non-standard) symplectic form as well, we may discuss Hamiltonization with respect to this symplectic  form.
  However, we prefer to work with the standard symplectic form $\tilde{\Omega}$ since the standard Hamilton--Jacobi theory directly applies to Hamiltonian systems defined with the standard symplectic form $\tilde{\Omega}$.
\end{remark}

\begin{theorem}[A Sufficient Condition for Hamiltonization after Second Reduction]
  \label{thm:SufficientCondition2}
  Suppose there exists a nowhere-vanishing fiber-wise constant function $f_{\mu}: T^{*}\tilde{Q} \to \R$ that satisfies the equation
  \begin{equation}
    \label{eq:f_mu}
    df_{\mu} \wedge \tilde{\Theta} = f_{\mu}^{2} \parentheses{ B^{K}_{\mu} + \tilde{\Psi}_{1/f_{\mu}}^{*} \tilde{\Xi}_{\mu} }.
  \end{equation}
  Then, the vector field $\tilde{X}_{\rm C}^{\mu} \in \mathfrak{X}(T^{*}\tilde{Q})$ (see Eq.~\eqref{eq:tildeX_C^mu}) satisfies the following Hamilton's equations:
  \begin{equation}
    \label{eq:tildeY_mu-HamiltonEq}
    i_{\tilde{X}_{\rm C}^{\mu}} \tilde{\Omega} = d\tilde{H}_{\rm C}^{\mu},
  \end{equation}
  and, as a result, the further-reduced nonholonomic dynamics, Eq.~\eqref{eq:FurtherReducedChaplyginSystem-K_mu=K}, has the invariant measure $f_{\mu}^{\tilde{n}-1} \tilde{\Lambda}$, where $\tilde{n} \defeq \dim \tilde{Q}$ and
  \begin{equation*}
    \tilde{\Lambda} \defeq \frac{(-1)^{\tilde{n}(\tilde{n}-1)/2}}{\tilde{n}!} \underbrace{\tilde{\Omega} \wedge \dots \wedge \tilde{\Omega}}_{\tilde{n}}.
  \end{equation*}
\end{theorem}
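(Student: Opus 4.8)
The plan is to run, on the cotangent bundle $T^{*}\tilde{Q}$, exactly the two-step argument used for Theorem~\ref{thm:SufficientCondition}: first read off Hamilton's equations from the necessary-and-sufficient criterion already established in Proposition~\ref{prop:NSCondition-Hamiltonization2}, then extract the invariant measure by the Liouville-type computation of Theorem~\ref{thm:Generalized_FeJo2004}.

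First I would obtain \eqref{eq:tildeY_mu-HamiltonEq} directly from Proposition~\ref{prop:NSCondition-Hamiltonization2}. That proposition records (see Eq.~\eqref{eq:tildeY_mu-eq}) that $\tilde{X}_{\rm C}^{\mu}$ always satisfies $i_{\tilde{X}_{\rm C}^{\mu}}\bigl[\tilde{\Omega} + \tfrac{1}{f_{\mu}}(df_{\mu}\wedge\tilde{\Theta} - f_{\mu}^{2}(B^{K}_{\mu} + \tilde{\Psi}_{1/f_{\mu}}^{*}\tilde{\Xi}_{\mu}))\bigr] = d\tilde{H}_{\rm C}^{\mu}$; under the hypothesis~\eqref{eq:f_mu} the bracketed correction two-form vanishes identically, leaving $i_{\tilde{X}_{\rm C}^{\mu}}\tilde{\Omega} = d\tilde{H}_{\rm C}^{\mu}$, which is \eqref{eq:tildeY_mu-HamiltonEq}. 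Equivalently, \eqref{eq:f_mu} makes the one-form appearing in the ``only if'' clause of Proposition~\ref{prop:NSCondition-Hamiltonization2} vanish.

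Next I would transfer this to a statement about the \emph{actual} further-reduced dynamics $\tilde{X}_{\mu}$, not merely about the Hamiltonized surrogate $\tilde{X}_{\rm C}^{\mu}$. Cartan's formula (together with $d\tilde{\Omega}=0$) gives $\pounds_{\tilde{X}_{\rm C}^{\mu}}\tilde{\Omega} = d(i_{\tilde{X}_{\rm C}^{\mu}}\tilde{\Omega}) = dd\tilde{H}_{\rm C}^{\mu} = 0$, so $\tilde{X}_{\rm C}^{\mu}$ is symplectic for the standard form $\tilde{\Omega}$. From here I would reproduce the proof of Theorem~\ref{thm:Generalized_FeJo2004} verbatim under the substitutions $\bar{X}\leadsto\tilde{X}_{\mu}$, $\bar{X}_{\rm C}\leadsto\tilde{X}_{\rm C}^{\mu}$, $\Psi_{\!f}\leadsto\tilde{\Psi}_{\!f_{\mu}}$, $\bar{\Omega}\leadsto\tilde{\Omega}$, $\bar{\Lambda}\leadsto\tilde{\Lambda}$, $\bar{n}\leadsto\tilde{n}$: since $\tilde{X}_{\mu}/f_{\mu}$ and $\tilde{X}_{\rm C}^{\mu}$ are $\tilde{\Psi}_{\!f_{\mu}}$-related by Eq.~\eqref{eq:tildePsi_f_mu-related}, we get $\pounds_{\tilde{X}_{\mu}/f_{\mu}}(\tilde{\Psi}_{\!f_{\mu}}^{*}\tilde{\Omega}) = \tilde{\Psi}_{\!f_{\mu}}^{*}\pounds_{\tilde{X}_{\rm C}^{\mu}}\tilde{\Omega} = 0$; Lemma~\ref{lem:f_Omega}, which is stated for an arbitrary cotangent bundle and a fiberwise-constant $f$ and therefore applies on $T^{*}\tilde{Q}$ without change, turns this into $\pounds_{\tilde{X}_{\mu}/f_{\mu}}(f_{\mu}^{\tilde{n}}\tilde{\Lambda}) = 0$, i.e. $\divergence_{f_{\mu}^{\tilde{n}}\tilde{\Lambda}}(\tilde{X}_{\mu}/f_{\mu}) = 0$; and the divergence identity~\eqref{eq:div_identity} then yields $\divergence_{f_{\mu}^{\tilde{n}-1}\tilde{\Lambda}}(\tilde{X}_{\mu}) = f_{\mu}\,\divergence_{f_{\mu}^{\tilde{n}}\tilde{\Lambda}}(\tilde{X}_{\mu}/f_{\mu}) = 0$, that is, $\pounds_{\tilde{X}_{\mu}}(f_{\mu}^{\tilde{n}-1}\tilde{\Lambda}) = 0$, the asserted invariant measure for Eq.~\eqref{eq:FurtherReducedChaplyginSystem-K_mu=K}.

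I do not anticipate a genuine obstacle: the substance has already been isolated in Propositions~\ref{prop:FurtherReduction}, \ref{prop:FurtherReduction-K_mu=K} and \ref{prop:NSCondition-Hamiltonization2}. The two points that deserve a sentence of care are (a) that the measure claim is about $\tilde{X}_{\mu}$ rather than $\tilde{X}_{\rm C}^{\mu}$, so one must route symplecticity of $\tilde{X}_{\rm C}^{\mu}$ back through the $\tilde{\Psi}_{\!f_{\mu}}$-relatedness of Eq.~\eqref{eq:tildePsi_f_mu-related} instead of appealing to Liouville's theorem directly, and (b) that the curvature term $B^{K}_{\mu} = \pi_{\tilde{Q}}^{*}\beta_{\mu}$, being pulled back from the base $\tilde{Q}$, never enters the volume-form computation at all — exactly as in the first-reduction case, only the relation~\eqref{eq:f_mu} together with the two abstract lemmas of Section~\ref{chsc} is needed.
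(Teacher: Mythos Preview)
Your proposal is correct and follows essentially the same route as the paper: the paper's proof is the one-liner ``Follows immediately from Eq.~\eqref{eq:tildeY_mu-eq} and Corollary~\ref{cor:FeJo2004},'' and your two steps are exactly that, with the invariant-measure step unpacked through Theorem~\ref{thm:Generalized_FeJo2004} and its supporting lemmas rather than citing Corollary~\ref{cor:FeJo2004} directly. The only difference is presentational compression.
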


\begin{proof}
  Follows immediately from Eq.~\eqref{eq:tildeY_mu-eq} and Corollary~\ref{cor:FeJo2004}.
\end{proof}

\section{Nonholonomic H--J Theory via Hamiltonization after Second Reduction}
\label{sec:NHHJafter2ndReduction}
\subsection{Relationship between the Chaplygin H--J and Nonholonomic H--J Equations after Second Reduction}\label{hjnhsec2}
If the system is Hamiltonized in the sense of Theorem~\ref{thm:SufficientCondition2}, then we have the Chaplygin Hamilton--Jacobi equation 
\begin{equation}
  \label{eq:ChaplyginHJ2}
  \tilde{H}^{\mu}_{\rm C} \circ d\tilde{W}^{\mu} = E
\end{equation}
corresponding to Hamilton's equation \eqref{eq:tildeY_mu-HamiltonEq}.
One then wonders if there is any relationship between $\tilde{W}^{\mu}$ and $\gamma$ that is similar to the one obtained in Theorem~\ref{thm:ChaplyginHJ-NHHJ}.

A natural starting point towards the answer to this question is, again, to look into the relationship between the Chaplygin Hamiltonian $\tilde{H}^{\mu}_{\rm C}$ and the original Hamiltonian $H$; then we obtain the relationship between the two solutions $\tilde{W}^{\mu}$ and $\gamma$ by exploiting the geometry involved in the process of reduction and Hamiltonization.
The diagram below combines the following things together: the first and second reductions of Chaplygin systems; the relationship between the two Hamiltonians $\tilde{H}^{\mu}_{\rm C}$ and $H$; the shift map ${\rm shift}_{\mu}: {\bf J}_{K}^{-1}(\mu) \to {\bf J}_{K}^{-1}(0)$ (see Appendix~\ref{sec:varphi_mu}); also the horizontal lift $\hl^{\bar{\mathcal{M}}}: T^{*}Q \to \bar{\mathcal{M}} \defeq {\bf J}_{K}^{-1}(0)$, which is defined in a similar way as $\hl^{\mathcal{M}}$ (see Eq.~\eqref{eq:hl^M}) using the connection $\mathcal{A}_{K}$ (see Eq.~\eqref{eq:mathcalA_K}) as follows:
Let us define the horizontal space 
\begin{equation*}
  \bar{\mathcal{D}} \defeq \ker\mathcal{A}_{K} \subset T\bar{Q}.
\end{equation*}
Then, the connection $\mathcal{A}_{K}$ induces the horizontal lift $\hl^{\bar{\mathcal{D}}}: T\tilde{Q} \to \bar{\mathcal{D}}$ defined by $\hl^{\bar{\mathcal{D}}} \defeq (T\bar{\pi}|_{\bar{\mathcal{D}}})^{-1}$.
Let us also define
\begin{equation*}
  \bar{\mathcal{M}} \defeq {\bf J}_{K}^{-1}(0).
\end{equation*}
Then, it is straightforward to see that $\bar{\mathcal{M}} = \mathbb{F}\bar{L}(\bar{\mathcal{D}})$.
Now, we define the horizontal lift $\hl^{\bar{\mathcal{M}}}: T^{*}\tilde{Q} \to \bar{\mathcal{M}}$ as follows:
\begin{equation}
  \label{eq:hl^barM}
  \hl^{\bar{\mathcal{M}}}_{\bar{q}}
  \defeq \F\bar{L}_{\bar{q}} \circ \hl^{\bar{\mathcal{D}}}_{\bar{q}} \circ (\F\tilde{L})^{-1}_{\tilde{q}},
\end{equation}
where $\tilde{L}: T\tilde{Q} \to \R$ is defined by $\tilde{L} \defeq \bar{L} \circ \hl^{\bar{\mathcal{D}}}$.
\begin{equation}
  \label{dia:gamma-dtildeW_mu}
  \vcenter{
    \xymatrix@!0@R=0.65in@C=0.85in{
      & & \R & & &
      \\
      & & & & &
      \\
      \mathcal{M} \ar[rruu]^{H\!\!} & T^{*}\bar{Q} \ar[l]^{\hl^{\mathcal{M}}} & {\bf J}_{K}^{-1}(\mu) \ar[l]_{i_{\mu}} \ar[d]_{\pi_\mu} & {\bf J}_{K}^{-1}(0) \ar[l]_{\ {\rm shift}_{\mu}^{-1}} & &
      \\
      & & {\bf J}_{K}^{-1}(\mu)/K & & T^{*}\tilde{Q} \ar[ll]^{\quad\ \varphi_\mu^{-1}} \ar[lu]_{\hl^{\bar{\mathcal{M}}}} & T^{*}\tilde{Q} \ar[l]^{\ \,\tilde{\Psi}_{1/f_\mu}} \ar[llluuu]_{\!\!\!\tilde{H}_{\rm C}^{\mu}}
      \\
      Q \ar[r]_{\pi} \ar@{-->}[uu]^{\gamma} & \bar{Q} \ar[rrrr]_{\bar{\pi}} \ar@{-->}[uu]^{\bar{\gamma}_\mu} & & & & \tilde{Q} \ar[u]_{d\tilde{W}^{\mu}}
    }
  }
\end{equation}
That the map $\hl^{\bar{\mathcal{M}}}$ fits into the diagram is shown in Appendix~\ref{sec:hl^barM} (see also Appendix~\ref{sec:varphi_mu}).
The diagram also shows the map $d\tilde{W}^{\mu}: \tilde{Q} \to T^{*}\tilde{Q}$ with $\tilde{W}^{\mu}$ being a solution of the Chaplygin Hamilton--Jacobi equation~\eqref{eq:ChaplyginHJ2}; this leads us to the following result that is similar to Theorem~\ref{thm:ChaplyginHJ-NHHJ}:

\begin{theorem}
  \label{thm:ChaplyginHJ2-NHHJ}
  Suppose that there exists a nowhere-vanishing fiber-wise constant function $f_{\mu}: T^{*}\tilde{Q} \to \R$ that satisfies Eq.~\eqref{eq:f_mu}, and hence by Theorem~\ref{thm:SufficientCondition2}, we have Hamilton's equations \eqref{eq:tildeY_mu-HamiltonEq} for the vector field $\tilde{X}_{\rm C}^{\mu}$.
  Let $\tilde{W}^{\mu}: \tilde{Q} \to \R$ be a solution of the Chaplygin Hamilton--Jacobi equation~\eqref{eq:ChaplyginHJ2}, and define $\gamma: Q \to \mathcal{M}$ by
  \begin{equation}
    \label{eq:gamma-bargamma_mu}
    \gamma(q) \defeq
    \hl^{\mathcal{M}}_{q} \circ \bar{\gamma}_{\mu} \circ \pi(q)
  \end{equation}
  with $\bar{\gamma}_{\mu}: \bar{Q} \to T^{*}\bar{Q}$ defined by\footnote{See Appendix~\ref{sec:varphi_mu} for the relationship between $i_{\mu}$, $i_{0}$, and ${\rm shift}_{\mu}$: We have $i_{\mu} \circ {\rm shift}_{\mu}^{-1}(p_{\bar{q}}) = i_{0}(p_{\bar{q}}) + \alpha_{\mu}(\bar{q})$ for any $p_{\bar{q}} \in {\bf J}_{K}^{-1}(0)$.}
  \begin{align}
    \bar{\gamma}_{\mu}(\bar{q})
    &\defeq i_{\mu} \circ {\rm shift}_{\mu}^{-1} \circ \hl^{\bar{\mathcal{M}}}_{\bar{q}} \circ \tilde{\Psi}_{1/f_{\mu}} \circ d\tilde{W}^{\mu} \circ \bar{\pi}(\bar{q})
    \nonumber\\
    &= i_{0} \circ \hl^{\bar{\mathcal{M}}}_{\bar{q}} \parentheses{ \frac{1}{f_{\mu}} d\tilde{W}^{\mu}(\tilde{q}) }
    + \alpha_{\mu}(\bar{q}),
    \label{eq:bargamma_mu-dtildeW_mu}
  \end{align}
  where $\bar{q} \defeq \pi(q)$ and $\tilde{q} \defeq \bar{\pi}(\bar{q})$.
  Then $\gamma$ satisfies the nonholonomic Hamilton--Jacobi equation~\eqref{eq:NHHJ} as well as the condition Eq.~\eqref{eq:dgamma}.
\end{theorem}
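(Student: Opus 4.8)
The plan is to reproduce the two-part structure of the proof of Theorem~\ref{thm:ChaplyginHJ-NHHJ}, carrying the extra bookkeeping forced by the second reduction. Since $\gamma$ is built by applying $\hl^{\mathcal{M}}$ it automatically takes values in $\mathcal{M}$, so the nonholonomic Hamilton--Jacobi equation $H\circ\gamma=E$ reduces to chasing the commutative diagram~\eqref{dia:gamma-dtildeW_mu}. Reading off $H\circ\hl^{\mathcal{M}}=\bar H$ (Eq.~\eqref{eq:barH}), $\bar H\circ i_{\mu}=\bar H_{\mu}\circ\pi_{\mu}$ (Eq.~\eqref{eq:barH-barH_mu}), the relations $\pi_{\mu}\circ{\rm shift}_{\mu}^{-1}\circ\hl^{\bar{\mathcal{M}}}=\varphi_{\mu}^{-1}$ and $\bar H_{\mu}\circ\varphi_{\mu}^{-1}=\tilde H_{\mu}$ coming from Proposition~\ref{prop:FurtherReduction-K_mu=K} and Appendices~\ref{sec:varphi_mu} and~\ref{sec:hl^barM}, and finally $\tilde H_{\mu}\circ\tilde{\Psi}_{1/f_{\mu}}=\tilde H_{\rm C}^{\mu}$ (Eq.~\eqref{eq:ChaplyginHamiltonian2}), the whole composite $H\circ\gamma$ collapses to $\tilde H_{\rm C}^{\mu}\circ d\tilde W^{\mu}\circ\bar\pi\circ\pi$, which equals $E$ because $\tilde W^{\mu}$ solves~\eqref{eq:ChaplyginHJ2}.

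For the condition~\eqref{eq:dgamma}, I would first extract a general identity already implicit in the proof of Theorem~\ref{thm:ChaplyginHJ-NHHJ}: for \emph{any} one-form $\bar\gamma$ on $\bar Q$ and $\gamma:=\hl^{\mathcal{M}}\circ\bar\gamma\circ\pi$, one has $d\gamma(Y^{\rm h},Z^{\rm h})(q)=(d\bar\gamma+\bar\gamma^{*}\Xi)(Y_{\bar q},Z_{\bar q})$ for horizontal $Y^{\rm h},Z^{\rm h}\in\mathcal{D}$ projecting to $Y_{\bar q}=T_{q}\pi(Y^{\rm h}_{q})$ and $Z_{\bar q}=T_{q}\pi(Z^{\rm h}_{q})$; this is exactly the computation there of $Y^{\rm h}[\gamma(Z^{\rm h})]-Z^{\rm h}[\gamma(Y^{\rm h})]-\gamma([Y^{\rm h},Z^{\rm h}])$, using Lemma~\ref{lem:hl-pairing}, the definition of $\mathbf J$, the curvature identity for $\mathcal{A}$, and the definition~\eqref{eq:Xi-def} of $\Xi$. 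Applying it with $\bar\gamma=\bar\gamma_{\mu}$ reduces the claim to showing that the two-form $d\bar\gamma_{\mu}+\bar\gamma_{\mu}^{*}\Xi$ on $\bar Q$ vanishes identically.

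To see this, I would argue that $d\bar\gamma_{\mu}+\bar\gamma_{\mu}^{*}\Xi$ is horizontal for the mechanical connection $\mathcal{A}_{K}$, so it suffices to test it on $\bar{\mathcal{D}}=\ker\mathcal{A}_{K}$. Indeed, Conditions~\ref{asmptn:K_mu=K} and~\ref{asmptn:alpha_mu} together with the $K$-equivariance of $\hl^{\bar{\mathcal{M}}}$ make $\bar\gamma_{\mu}$ a $K$-invariant one-form with values in ${\bf J}_{K}^{-1}(\mu)$; hence $i_{\eta_{\bar Q}}\bar\gamma_{\mu}=\ip{\mu}{\eta}$ is constant and Cartan's formula gives $i_{\eta_{\bar Q}}d\bar\gamma_{\mu}=0$, while equivariance of $\bar\gamma_{\mu}$ and Condition~\ref{asmptn:i_eta-Xi} give $i_{\eta_{\bar Q}}(\bar\gamma_{\mu}^{*}\Xi)=0$ for every $\eta\in\mathfrak{k}$. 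For horizontal fields $U^{\rm h},V^{\rm h}$ of $\mathcal{A}_{K}$ projecting to $U,V$ on $\tilde Q$, I would then: (i)~apply the identity of the previous paragraph, mutatis mutandis, to the second reduction $\bar\pi:\bar Q\to\tilde Q$ and the lift $\hl^{\bar{\mathcal{M}}}$, whose image lies in $\bar{\mathcal{M}}={\bf J}_{K}^{-1}(0)$ so that the corresponding analogue of $\Xi$ (built from $\mathbf J_{K}\circ\hl^{\bar{\mathcal{M}}}\equiv0$) vanishes, obtaining $d(\hl^{\bar{\mathcal{M}}}\circ\tilde\gamma_{\mu}\circ\bar\pi)(U^{\rm h},V^{\rm h})=d\tilde\gamma_{\mu}(U,V)$ with $\tilde\gamma_{\mu}:=\tilde{\Psi}_{1/f_{\mu}}\circ d\tilde W^{\mu}=d\tilde W^{\mu}/f_{\mu}$; (ii)~use $d\alpha_{\mu}=\bar\pi^{*}\beta_{\mu}$ (Eq.~\eqref{eq:beta_mu}) and $B^{K}_{\mu}=\pi_{\tilde Q}^{*}\beta_{\mu}$ to get $d\alpha_{\mu}(U^{\rm h},V^{\rm h})=\beta_{\mu}(U,V)=(\tilde\gamma_{\mu}^{*}B^{K}_{\mu})(U,V)$; and (iii)~use $\pi_{\mu}^{*}\Xi_{\mu}=i_{\mu}^{*}\Xi$, $(\varphi_{\mu}^{-1})^{*}\Xi_{\mu}=\tilde\Xi_{\mu}$ and $\varphi_{\mu}\circ\pi_{\mu}\circ{\rm shift}_{\mu}^{-1}\circ\hl^{\bar{\mathcal{M}}}=\id$ to get $\bar\gamma_{\mu}^{*}\Xi(U^{\rm h},V^{\rm h})=(\tilde\gamma_{\mu}^{*}\tilde\Xi_{\mu})(U,V)$. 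Finally, pulling the sufficient condition~\eqref{eq:f_mu} back along $d\tilde W^{\mu}$ and using $(d\tilde W^{\mu})^{*}\tilde\Theta=d\tilde W^{\mu}$, $\pi_{\tilde Q}\circ d\tilde W^{\mu}=\id$, and $\tilde{\Psi}_{1/f_{\mu}}\circ d\tilde W^{\mu}=\tilde\gamma_{\mu}$ yields $d\tilde\gamma_{\mu}=-\tilde\gamma_{\mu}^{*}(B^{K}_{\mu}+\tilde\Xi_{\mu})$; adding (i)--(iii) then gives $d\bar\gamma_{\mu}(U^{\rm h},V^{\rm h})+\bar\gamma_{\mu}^{*}\Xi(U^{\rm h},V^{\rm h})=0$, and hence $d\gamma|_{\mathcal{D}\times\mathcal{D}}=0$.

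The step I expect to be delicate is~(iii): tracking precisely how $\Xi$ descends through the two successive reductions together with the momentum shift by $\alpha_{\mu}$, so as to obtain $\bar\gamma_{\mu}^{*}\Xi=\bar\pi^{*}(\tilde\gamma_{\mu}^{*}\tilde\Xi_{\mu})$ from the characterizations of $\Xi_{\mu}$ and $\tilde\Xi_{\mu}$ in Proposition~\ref{prop:FurtherReduction}, and confirming that the auxiliary diagram relating $\hl^{\bar{\mathcal{M}}}$, ${\rm shift}_{\mu}$, and $\varphi_{\mu}$ (the content of Appendices~\ref{sec:varphi_mu} and~\ref{sec:hl^barM}) commutes exactly as used. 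Everything else is a faithful repetition of the one-stage argument.
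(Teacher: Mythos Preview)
Your proposal is correct and follows the same overall scaffolding as the paper: diagram chase for $H\circ\gamma=E$, then reduce $d\gamma|_{\mathcal{D}\times\mathcal{D}}$ to the two-form $d\bar\gamma_{\mu}+\bar\gamma_{\mu}^{*}\Xi$ on $\bar Q$ via the identity inherited from the proof of Theorem~\ref{thm:ChaplyginHJ-NHHJ}, split $\bar\gamma_{\mu}=\bar\gamma_{0}+\alpha_{\mu}$, identify the three contributions with pullbacks of $df_{\mu}\wedge\tilde\Theta$, $B^{K}_{\mu}$ and $\tilde\Xi_{\mu}$ along $d\tilde W^{\mu}$, and kill the sum with the sufficient condition~\eqref{eq:f_mu}.

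The one organizational difference is that you insert an explicit horizontality step: you first argue that $d\bar\gamma_{\mu}+\bar\gamma_{\mu}^{*}\Xi$ annihilates the $\mathcal{A}_{K}$-vertical directions (using $K$-invariance of $\bar\gamma_{\mu}$, that $\bar\gamma_{\mu}$ takes values in ${\bf J}_{K}^{-1}(\mu)$, and Condition~\ref{asmptn:i_eta-Xi}), and only afterwards evaluate on $\bar{\mathcal{D}}$-horizontal $U^{\rm h},V^{\rm h}$. The paper instead works directly with \emph{arbitrary} $Y,Z\in\mathfrak{X}(\bar Q)$ (the projections of $\mathcal{D}$-horizontal fields), noting explicitly that ``$Y$ and $Z$ are not horizontal here''; the point that the answer depends only on $\tilde Y,\tilde Z$ emerges in the course of the computation (via ${\bf J}_{K}\circ\hl^{\bar{\mathcal{M}}}=0$ in the $[Y,Z]$ term, and via $\bar\gamma_{\mu}^{*}\Xi=\bar\pi^{*}(\cdots)$). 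So your extra lemma isolates, up front, exactly what the paper's direct computation discovers along the way. Your self-flagged ``delicate'' step~(iii) is precisely the paper's chain $\bar\gamma_{\mu}^{*}\Xi=\bar\pi^{*}\circ(d\tilde W^{\mu})^{*}\circ\tilde\Psi_{1/f_{\mu}}^{*}\tilde\Xi_{\mu}$, obtained from~\eqref{eq:Xi-Xi_mu}, $\pi_{\mu}\circ{\rm shift}_{\mu}^{-1}\circ\hl^{\bar{\mathcal{M}}}=\varphi_{\mu}^{-1}$, and the definition of $\tilde\Xi_{\mu}$, so no hidden obstacle lurks there.
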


\begin{proof}
  That the one-form $\gamma$ defined by Eqs.~\eqref{eq:gamma-bargamma_mu} and \eqref{eq:bargamma_mu-dtildeW_mu} satisfies the nonholonomic Hamilton--Jacobi equation~\eqref{eq:NHHJ} follows from the diagram \eqref{dia:gamma-dtildeW_mu}.
  Showing that it also satisfies the condition Eq.~\eqref{eq:dgamma} requires tedious calculations:
  Following a similar calculation to that of $d\gamma(Y^{\rm h}, Z^{\rm h})$ in the proof of Theorem~\ref{thm:ChaplyginHJ-NHHJ}, Eq.~\eqref{eq:gamma-bargamma_mu} gives
  \begin{equation}
    \label{eq:dgamma2}
    d\gamma(Y^{\rm h},Z^{\rm h}) = d\bar{\gamma}_{\mu}(Y, Z) + \bar{\gamma}_{\mu}^{*} \Xi(Y, Z)
  \end{equation}
  for arbitrary horizontal vector fields $Y^{\rm h}, Z^{\rm h} \in \mathfrak{X}(Q)$, where $Y \defeq T\pi(Y^{\rm h})$ and similarly for $Z$.

  Let us calculate the first term in Eq.~\eqref{eq:dgamma2}:
  Writing
  \begin{equation*}
    \bar{\gamma}_{0} \defeq i_{0} \circ \hl^{\bar{\mathcal{M}}}_{\bar{q}} \parentheses{ \frac{1}{f_{\mu}} d\tilde{W}^{\mu} },
  \end{equation*}
  we have $\bar{\gamma}_{\mu} = \bar{\gamma}_{0} + \alpha_{\mu}$ and thus
  \begin{equation*}
    d\bar{\gamma}_{\mu}(Y, Z)
    = d\bar{\gamma}_{0}(Y, Z)
    + d\alpha_{\mu}(Y, Z).
  \end{equation*}
  Calculation of $d\bar{\gamma}_{0}(Y, Z)$ is somewhat similar to that of $d\gamma(Y^{\rm h}, Z^{\rm h})$ in the proof of Theorem~\ref{thm:ChaplyginHJ-NHHJ}, but there is one difference: $Y$ and $Z$ are not horizontal here.
  Specifically, we have
  \begin{align*}
    d\bar{\gamma}_{0}(Y, Z)
    &= Y[\bar{\gamma}_{0}(Z)] - Z[\bar{\gamma}_{0}(Y)] - \bar{\gamma}_{0}([Y,Z])
    \\
    &= \frac{1}{f_{\mu}}\, d\tilde{W}_{\mu}([\tilde{Y},\tilde{Z}])
    - \frac{1}{f_{\mu}^{2}}\, df_{\mu} \wedge d\tilde{W}_{\mu}(\tilde{Y}, \tilde{Z})
    - \bar{\gamma}_{0}([Y,Z]),
  \end{align*}
  where we defined $\tilde{Y} \defeq T\bar{\pi}(Y)$ and similarly for $\tilde{Z}$.
  To calculate $\bar{\gamma}_{0}([Y,Z])$, we decompose $[Y,Z]$ into the horizontal and vertical parts:
  \begin{equation*}
    [Y,Z] = \hl^{\bar{\mathcal{D}}}([\tilde{Y}, \tilde{Z}]) + \parentheses{ \mathcal{A}_{K}([Y,Z]) }_{\bar{Q}},
  \end{equation*}
  where we note that $T\bar{\pi}([Y,Z]) = [\tilde{Y}, \tilde{Z}]$, since $Y$ and $Z$ are $\bar{\pi}$-related to $\tilde{Y}$ and $\tilde{Z}$, respectively.
  As a result, we have
  \begin{align*}
    \bar{\gamma}_{0}([Y,Z])(\bar{q})
    &= \frac{1}{f_{\mu}(\tilde{q})}\, d\tilde{W}_{\mu}([\tilde{Y},\tilde{Z}])(\tilde{q})
    + \frac{1}{f_{\mu}(\tilde{q})} \ip{ {\bf J}_{K} \circ \hl^{\mathcal{\bar{M}}}_{\bar{q}} \parentheses{ d\tilde{W}_{\mu}(\tilde{q})} }{ \mathcal{A}_{K}([Y,Z])(\bar{q}) }
    \\
    &= \frac{1}{f_{\mu}(\tilde{q})}\, d\tilde{W}_{\mu}([\tilde{Y},\tilde{Z}])(\tilde{q}),
  \end{align*}
  where the second term vanishes because $\hl^{\bar{\mathcal{M}}}$ takes values in $\bar{\mathcal{M}} \defeq {\bf J}_{K}^{-1}(0)$.
  Next let us calculate $d\alpha_{\mu}(Y, Z)$: Using Eq.~\eqref{eq:beta_mu}, the relation $\pi_{\tilde{Q}} \circ d\tilde{W}^{\mu} = \id_{\tilde{Q}}$, and Eq.~\eqref{eq:B^K_mu}, we obtain
  \begin{align*}
    d\alpha_{\mu}(Y, Z)
    &= \bar{\pi}^{*} \beta_{\mu}(Y, Z)
    \\
    &= \bar{\pi}^{*} \circ (\pi_{\tilde{Q}} \circ d\tilde{W}^{\mu})^{*} \beta_{\mu}(Y, Z).
    \\
    &= \bar{\pi}^{*} \circ (d\tilde{W}^{\mu})^{*} \circ \pi_{\tilde{Q}}^{*} \beta_{\mu}(Y, Z).
    \\
    &= \bar{\pi}^{*} \circ (d\tilde{W}^{\mu})^{*} B^{K}_{\mu}(Y, Z).
    \\
    &= (d\tilde{W}^{\mu})^{*} B^{K}_{\mu}(\tilde{Y}, \tilde{Z}).
  \end{align*}
  Therefore, the first term on the right-hand side of Eq.~\eqref{eq:dgamma2} becomes
  \begin{equation*}
    d\bar{\gamma}_{\mu}(Y, Z)
    = -\frac{1}{f_{\mu}^{2}} (d\tilde{W}^{\mu})^{*} \parentheses{ df_{\mu} \wedge \tilde{\Theta} - f_{\mu}^{2}\, B^{K}_{\mu} }(\tilde{Y}, \tilde{Z}),
  \end{equation*}
  since $(d\tilde{W}_{\mu})^{*}f_{\mu} = f_{\mu}$ and also $(d\tilde{W}_{\mu})^{*} \tilde{\Theta} = d\tilde{W}_{\mu}$.

  Now, let us evaluate the second term on the right-hand side of Eq.~\eqref{eq:dgamma2}: Substitution of Eq.~\eqref{eq:bargamma_mu-dtildeW_mu} gives
  \begin{align*}
    \bar{\gamma}_{\mu}^{*} \Xi
    &= \parentheses{ {\rm shift}_{\mu}^{-1} \circ \hl^{\bar{\mathcal{M}}}_{\bar{q}} \circ \tilde{\Psi}_{1/f_{\mu}} \circ d\tilde{W}^{\mu} \circ \bar{\pi} }^{*} \circ i_{\mu}^{*} \Xi
    \\
    &= \parentheses{ {\rm shift}_{\mu}^{-1} \circ \hl^{\bar{\mathcal{M}}}_{\bar{q}} \circ \tilde{\Psi}_{1/f_{\mu}} \circ d\tilde{W}^{\mu} \circ \bar{\pi} }^{*} \circ \pi_{\mu}^{*} \Xi_{\mu}
    \\
    &= \parentheses{ \pi_{\mu} \circ {\rm shift}_{\mu}^{-1} \circ \hl^{\bar{\mathcal{M}}}_{\bar{q}} \circ \tilde{\Psi}_{1/f_{\mu}} \circ d\tilde{W}^{\mu} \circ \bar{\pi} }^{*} \Xi_{\mu}
    \\
    &= \parentheses{ \tilde{\Psi}_{1/f_{\mu}} \circ d\tilde{W}^{\mu} \circ \bar{\pi} }^{*} \circ \parentheses{ \pi_{\mu} \circ {\rm shift}_{\mu}^{-1} \circ \hl^{\bar{\mathcal{M}}}_{\bar{q}} }^{*} \Xi_{\mu}
    \\
    &= \parentheses{ \tilde{\Psi}_{1/f_{\mu}} \circ d\tilde{W}^{\mu} \circ \bar{\pi} }^{*} \circ \parentheses{ \varphi_{\mu}^{-1} }^{*} \Xi_{\mu}
    \\
    &= \parentheses{ \tilde{\Psi}_{1/f_{\mu}} \circ d\tilde{W}^{\mu} \circ \bar{\pi} }^{*} \tilde{\Xi}_{\mu}
    \\
    &= \bar{\pi}^{*} \circ (d\tilde{W}^{\mu})^{*} \circ \tilde{\Psi}_{1/f_{\mu}}^{*} \tilde{\Xi}_{\mu}
  \end{align*}
  where we used Eq.~\eqref{eq:Xi-Xi_mu}, the relation $\pi_{\mu} \circ {\rm shift}_{\mu}^{-1} \circ \hl^{\bar{\mathcal{M}}}_{\bar{q}} = \varphi_{\mu}^{-1}$ from the diagram \eqref{dia:gamma-dtildeW_mu}, and the definition of $\tilde{\Xi}_{\mu}$ from Proposition~\ref{prop:FurtherReduction-K_mu=K}.
  This implies
  \begin{equation*}
    \bar{\gamma}_{\mu}^{*} \Xi(Y, Z) = 
    (d\tilde{W}^{\mu})^{*} \circ \tilde{\Psi}_{1/f_{\mu}}^{*} \tilde{\Xi}_{\mu}(\tilde{Y}, \tilde{Z}).
  \end{equation*}

  As a result, Eq.~\eqref{eq:dgamma2} becomes
  \begin{equation*}
    d\gamma(Y^{\rm h},Z^{\rm h}) = 
    -\frac{1}{f_{\mu}^{2}} (d\tilde{W}^{\mu})^{*} \brackets{ df_{\mu} \wedge \tilde{\Theta} - f_{\mu}^{2} \parentheses{ B^{K}_{\mu} + \tilde{\Psi}_{1/f_{\mu}}^{*} \tilde{\Xi}_{\mu} } }(\tilde{Y}, \tilde{Z}),
  \end{equation*}
  which vanishes because the sufficient condition, Eq.~\eqref{eq:f_mu}, is assumed to be satisfied.
\end{proof}

\subsection{Example of Further Reduction, Hamiltonization, and Chaplygin H--J Equation}
\begin{example}[The Snakeboard; see, e.g., \citet{OsLeMuBu1994}, \citet{BlKrMaMu1996} and \citet{KoMa1997c}]
  \label{ex:Snakeboard}
  Consider the motion of the snakeboard shown in Fig.~\ref{fig:Snakeboard}.
  \begin{figure}[htbp]
    \centering
    \includegraphics[width=.65\linewidth]{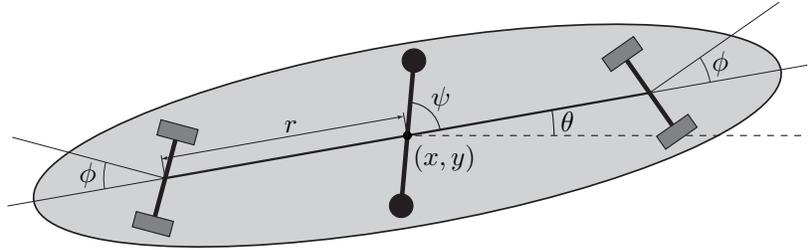}
    \caption{The Snakeboard.}
    \label{fig:Snakeboard}
  \end{figure}
  Let $m$ be the total mass of the board, $J$ the inertia of the board, $J_{0}$ the inertia of the rotor, $J_{1}$ the inertia of each of the wheels, and assume the relation $J + J_{0} + 2 J_{1} = m r^{2}$.
  The configuration space is
  \begin{equation*}
    Q = SE(2) \times \mathbb{S}^{1} \times \mathbb{S}^{1}
    = (SO(2) \ltimes \R^{2}) \times \mathbb{S}^{1} \times \mathbb{S}^{1}
    = \{ (\theta, x, y, \phi, \psi) \}.
  \end{equation*}
  The Lagrangian $L: TQ \to \R$ and the Hamiltonian $H: T^{*}Q \to \R$ are given by
  \begin{equation*}
    L = \frac{1}{2} \brackets{
      m \parentheses{ \dot{x}^{2} + \dot{y}^{2} + r^{2}\dot{\theta}^{2} }
      + 2J_{0}\,\dot{\theta}\,\dot{\psi}
      + 2J_{1}\,\dot{\phi}^{2}
      + J_{0}\,\dot{\psi}^{2}
    }
  \end{equation*}
  and
  \begin{equation*}
    H = \frac{1}{2m}\parentheses{ p_{x}^{2} + p_{y}^{2} } + \frac{1}{2(m r^{2} - J_{0})}(p_{\theta} - p_{\psi})^{2} + \frac{1}{4J_{1}}p_{\phi}^{2} + \frac{1}{2J_{0}}p_{\psi}^{2}.
  \end{equation*}
  The velocity constraints are
  \begin{equation*}
    \dot{x} + r \cot\phi\,\cos\theta\,\dot{\theta} = 0,
    \qquad
    \dot{y} + r \cot\phi\,\sin\theta\,\dot{\theta} = 0,
  \end{equation*}
  or in terms of constraint one-forms,
  \begin{equation*}
    \omega^{1} = dx + r \cot\phi\,\cos\theta\,d\theta,
    \qquad
    \omega^{2} = dy + r \cot\phi\,\sin\theta\,d\theta.
  \end{equation*}
  So the constraint distribution $\mathcal{D} \subset TQ$ and the constrained momentum space $\mathcal{M} \subset T^{*}Q$ are given by
  \begin{equation*}
    \mathcal{D} = \setdef{(\dot{\theta}, \dot{x}, \dot{y}, \dot{\phi}, \dot{\psi}) \in TQ }{ \omega^{s}(\dot{\theta}, \dot{x}, \dot{y}, \dot{\phi}, \dot{\psi}) = 0,\, s = 1,2 },
  \end{equation*}
  and
  \begin{equation*}
    \mathcal{M}
    = \setdef{ (p_{\theta}, p_{x}, p_{y}, p_{\phi}, p_{\psi}) \in T^{*}Q }{
    p_{x} = -\kappa \cot\phi\,\cos\theta\,(p_{\theta} - p_{\psi}),
    \
    p_{y} = -\kappa \cot\phi\,\sin\theta\,(p_{\theta} - p_{\psi})
    },
  \end{equation*}
  where $\kappa \defeq m r/(m r^{2}-J_{0})$.
  
  Let $G = \R^{2}$ and consider the action of $G$ on $Q$ defined by
  \begin{equation*}
    G \times Q \to Q;
    \quad
    \parentheses{(a, b), (\theta, x, y, \phi, \psi)} \mapsto (\theta, x + a, y + b, \phi, \psi).
  \end{equation*}
  Then, the system is a Chaplygin system in the sense of Definition~\ref{def:ChaplyginSystems}.
  The Lie algebra $\mathfrak{g}$ is identified with $\R^{2}$ in this case.
  Let us use again $(\xi, \eta)$ as the coordinates for $\mathfrak{g}$.
  Then, we may write the connection $\mathcal{A}: TQ \to \mathfrak{g}$ as
  \begin{equation}
    \label{eq:mathcalA-Snakeboard}
    \mathcal{A} = (dx + r \cot\phi\cos\theta \, d\theta) \otimes \pd{}{\xi} + (dy + r \cot\phi \sin\theta \, d\theta) \otimes \pd{}{\eta},
  \end{equation}
  and hence its curvature as
  \begin{equation}
    \label{eq:mathcalB-Snakeboard}
    \mathcal{B} = \parentheses{ r \cos\theta\csc^{2}\phi \,d\theta \wedge d\phi \otimes \pd{}{\xi} + r \sin\theta\csc^{2}\phi\,d\theta \wedge d\phi \otimes \pd{}{\eta} }.
  \end{equation}
  Furthermore, the momentum map ${\bf J}: T^{*}Q \to \mathfrak{g}^{*}$ is given by
  \begin{equation}
    \label{eq:J-Snakeboard}
    {\bf J}(p_{q}) = p_{x}\,d\xi + p_{y}\,d\eta.
  \end{equation}

  The quotient space is $\bar{Q} \defeq Q/G = \{(\theta,\phi,\psi)\}$, and the reduced Hamiltonian $\bar{H}: T^{*}\bar{Q} \to \R$ is
  \begin{equation*}
    \bar{H} = 
    \frac{1}{2} \parentheses{
      \frac{\sin^{2}\phi}{m r^{2} - J_{0}\sin^{2}\phi}\,(p_{\theta} - p_{\psi})^{2}
      + \frac{p_{\phi}^{2}}{2 J_{1}}
      + \frac{p_{\psi}^{2}}{J_{0}}
    },
  \end{equation*}
  and the horizontal lift $\hl^{\mathcal{M}}: T^{*}\bar{Q} \to \mathcal{M}$ is given by
  \begin{multline}
    \label{eq:hl^M-Snakeboard}
   \hl^{\mathcal{M}}(p_{\theta},p_{\phi},p_{\psi}) =
   \biggl(
     p_{\psi} + \frac{(m r^{2} - J_{0}) \sin^{2}\phi}{m r^{2} - J_{0}\sin^{2}\phi}(p_{\theta} - p_{\psi}),
     -\frac{m r \cos\phi \sin\phi \cos\theta}{m r^{2} - J_{0}\sin^{2}\phi}(p_{\theta} - p_{\psi}),
     \\
     {-}\frac{m r \cos\phi \sin\phi \sin\theta}{m r^{2} - J_{0}\sin^{2}\phi}(p_{\theta} - p_{\psi}),
     p_{\phi}, p_{\psi}
   \biggr).
 \end{multline}
  Then, we find from Eq.~\eqref{eq:Xi-def} along with Eqs.~\eqref{eq:mathcalA-Snakeboard}, \eqref{eq:mathcalB-Snakeboard}, \eqref{eq:J-Snakeboard}, and \eqref{eq:hl^M-Snakeboard} that
  \begin{equation*}
    \Xi = -\frac{ m r^{2} (p_{\theta} - p_{\psi}) \cot\phi }{ m r^{2} - J_{0} \sin^{2}\phi }\,d\theta \wedge d\phi.
  \end{equation*}
  However, there exists no function $f$ that satisfies the sufficient condition, Eq.~\eqref{eq:f}, for Chaplygin Hamiltonization.
  In fact, one can show (see \cite{FeMeBl2009}) that there does not exist an $f$ which satisfies the necessary and sufficient condition for Hamiltonization from Proposition~\ref{prop:NSCondition-Hamiltonization}.
  Hence the system is not Hamiltonizable at this level of reduction.
  Therefore, we would like to further reduce the system:
  Let $K = \mathbb{S}^{1}$ and consider the action of $K$ on $\bar{Q}$ defined by
  \begin{equation*}
    K \times \bar{Q} \rightarrow \bar{Q};
    \quad
    \parentheses{ c, (\theta, \phi, \psi) } \mapsto (\theta, \phi, \psi + c);
  \end{equation*}
  and so $\Phi^{K}_{c}(\theta, \phi, \psi) = (\theta,\phi, \psi + c)$ for any $c \in K$.
  This gives rise to the cotangent lift
  \begin{equation*}
    K \times T^{*}\bar{Q} \rightarrow T^{*}\bar{Q};
    \quad
    \parentheses{ c, (\theta, \phi, \psi, p_{\theta}, p_{\phi}, p_{\psi}) } \mapsto (\theta, \phi, \psi + c, p_{\theta}, p_{\phi}, p_{\psi}),
  \end{equation*}
  that is,
  \begin{equation*}
    T^{*}\Phi^{K}_{-c}(\theta, \phi, \psi, p_{\theta}, p_{\phi}, p_{\psi}) = (\theta, \phi, \psi + c, p_{\theta}, p_{\phi}, p_{\psi}).
  \end{equation*}
  It is easy to see that the Hamiltonian $\bar{H}$ is $K$-invariant.
  Also, for any $\zeta \in \mathfrak{k}$, we have the infinitesimal generator $\zeta_{T^{*}\bar{Q}} = \zeta\,\tpd{}{\psi}$ and so easily see that $i_{\zeta_{T^{*}\bar{Q}}} \Xi = 0$.
  Hence Conditions~\ref{asmptn:K-invariance} and \ref{asmptn:i_eta-Xi} are satisfied.
  Therefore, by Proposition~\ref{prop:J_K-conservation}, the corresponding momentum map
  \begin{equation*}
    {\bf J}_{K}(p_{\bar{q}}) = p_{\psi}\,d\zeta
  \end{equation*}
  is conserved.
  It is straightforward to check that Condition~\ref{asmptn:K-free_and_proper} is satisfied for any $\mu = \mu_{\psi}\,d\zeta \in \mathfrak{k}^{*}$.
  Then, Proposition~\ref{prop:FurtherReduction} gives the reduced dynamics on ${\bf J}_{K}^{-1}(\mu)/K_{\mu}$, and Eqs.~\eqref{eq:barH-barH_mu} and \eqref{eq:Xi-Xi_mu} give
  \begin{equation*}
    \bar{H}_{\mu} = 
    \frac{1}{2} \parentheses{
      \frac{\sin^{2}\phi}{m r^{2} - J_{0}\sin^{2}\phi}\,(p_{\theta} - \mu_{\psi})^{2}
      + \frac{p_{\phi}^{2}}{2 J_{1}}
      + \frac{\mu_{\psi}^{2}}{J_{0}}
    }
  \end{equation*}
  and
  \begin{equation*}
    \Xi_{\mu} = -\frac{ m r^{2} (p_{\theta} - \mu_{\psi}) \cot\phi }{ m r^{2} - J_{0} \sin^{2}\phi }\,d\theta \wedge d\phi.
  \end{equation*}
  Furthermore, Eq.~\eqref{eq:mathcalA_K} gives the mechanical connection
  \begin{equation}
    \label{eq:mathcalA_K-Snakeboard}
    \mathcal{A}_{K} = (d\theta + d\psi) \otimes \pd{}{\zeta},
  \end{equation}
  and hence Eq.~\eqref{eq:alpha_mu} gives
  \begin{equation}
    \label{eq:alpha_mu-Snakeboard}
    \alpha_{\mu} = \mu (d\theta + d\psi),
  \end{equation}
  and so $\beta_{\mu} = 0$ and $B^{K}_{\mu} = 0$.
  It is also straightforward to check that Conditions~\ref{asmptn:K_mu=K} and \ref{asmptn:alpha_mu} are satisfied.
  Therefore, we may apply Proposition~\ref{prop:FurtherReduction-K_mu=K} to this case.
  Specifically, we have $\tilde{Q} \defeq \bar{Q}/K = \{ (\theta, \phi) \}$, and Eq.~\eqref{eq:varphi_mu-Snakeboard} (from Example~\ref{ex:varphi_mu-Snakeboard} in Appendix~\ref{sec:varphi_mu}) gives 
  \begin{equation*}
    \varphi_{\mu}^{-1}: T^{*}\tilde{Q} \to {\bf J}_{K}^{-1}(\mu)/K;
    \quad
    (\theta, \phi, p_{\theta}, p_{\phi}) \mapsto (\theta, \phi, p_{\theta} + \mu_{\psi}, p_{\phi}),
  \end{equation*}
  and hence we have
  \begin{equation*}
    \tilde{H}_{\mu}(\theta, \phi, p_{\theta}, p_{\phi})
    \defeq \bar{H}_{\mu} \circ \varphi_{\mu}^{-1}(\theta, \phi, p_{\theta}, p_{\phi})
    = \frac{1}{2} \parentheses{
      \frac{\sin^{2}\phi}{m r^{2} - J_{0}\sin^{2}\phi}\,p_{\theta}^{2}
      + \frac{p_{\phi}^{2}}{2 J_{1}}
      + \frac{\mu_{\psi}^{2}}{J_{0}}
    }
  \end{equation*}
  and
  \begin{equation*}
    \tilde{\Xi}_{\mu}
    \defeq (\varphi_{\mu}^{-1})^{*} \Xi_{\mu}
    = -\frac{ m r^{2} p_{\theta} \cot\phi }{ m r^{2} - J_{0} \sin^{2}\phi }\,d\theta \wedge d\phi.
  \end{equation*}
  Therefore, the sufficient condition, Eq.~\eqref{eq:f_mu}, for Chaplygin Hamiltonization becomes
  \begin{equation*}
    p_{\phi} \pd{f_{\mu}}{\theta} - p_{\theta} \pd{f_{\mu}}{\phi}
    =
    -p_{\theta}\, \frac{ m r^{2} \cot\phi }{ m r^{2} - J_{0} \sin^{2}\phi }\,f_{\mu},
  \end{equation*}
  which gives
  \begin{equation*}
    \pd{f_{\mu}}{\theta} = 0,
    \qquad
    \pd{f_{\mu}}{\phi} = \frac{ m r^{2} \cot\phi }{ m r^{2} - J_{0} \sin^{2}\phi }\,f_{\mu}.
  \end{equation*}
  A straightforward integration yields\footnote{For $m r^{2} = J_{0} = 1$, this verifies the result of \citep[Section~4.4]{FeMeBl2009}.}
  \begin{equation*}
    f_{\mu} = \frac{\sin\phi}{\sqrt{ m r^{2} - J_{0} \sin^{2}\phi }},
  \end{equation*}
  where we assume that $|\sin\phi| < \sqrt{m/J_{0}}\,r$.
  Then, Eq.~\eqref{eq:ChaplyginHamiltonian2} gives the following Chaplygin Hamiltonian:
  \begin{align*}
    \tilde{H}_{\rm C}^{\mu}(\theta, \phi, p_{\theta}, p_{\phi})
    &= \tilde{H}_{\mu}\parentheses{
      \theta, \phi, \frac{\sqrt{m r^{2} - J_{0} \sin^{2}\phi}}{\sin\phi}\,p_{\theta}, \frac{\sqrt{m r^{2} - J_{0} \sin^{2}\phi}}{\sin\phi}\,p_{\phi}
    }
    \\
    &= \frac{1}{2} \parentheses{
      p_{\theta}^{2}
      + \frac{m r^{2} - J_{0} \sin^{2}\phi}{2 J_{1} \sin^{2}\phi}\,p_{\phi}^{2}
      + \frac{\mu_{\psi}^{2}}{J_{0}}
    }.
  \end{align*}
  Hence the Chaplygin Hamilton--Jacobi equation~\eqref{eq:ChaplyginHJ2} becomes
  \begin{equation}
    \label{eq:ChaplyginHJ-Snakeboard}
    \frac{1}{2} \brackets{
      \parentheses{ \pd{\tilde{W}^{\mu}}{\theta} }^{2}
      + \frac{m r^{2} - J_{0} \sin^{2}\phi}{2 J_{1} \sin^{2}\phi}\,\parentheses{ \pd{\tilde{W}^{\mu}}{\phi} }^{2}
      + \frac{\mu_{\psi}^{2}}{J_{0}}
    }
    = E.
  \end{equation}
  Assume that $\tilde{W}^{\mu}: \tilde{Q} \to \R$ takes the following form:
  \begin{equation*}
    \tilde{W}^{\mu}(\theta, \phi) = \tilde{W}^{\mu}_{\theta}(\theta) + \tilde{W}^{\mu}_{\phi}(\phi).
  \end{equation*}
  Then, Eq.~\eqref{eq:ChaplyginHJ-Snakeboard} becomes
  \begin{equation*}
    \frac{1}{2} \brackets{
      \parentheses{ \od{\tilde{W}^{\mu}_{\theta}}{\theta} }^{2}
      + \frac{m r^{2} - J_{0} \sin^{2}\phi}{2 J_{1} \sin^{2}\phi}\,\parentheses{ \od{\tilde{W}^{\mu}_{\phi}}{\phi} }^{2}
      + \frac{\mu_{\psi}^{2}}{J_{0}}
    }
    = E.
  \end{equation*}
  The first term in the brackets depends only on $\theta$ whereas the second only on $\phi$, and the third one is constant.
  Thus we have
  \begin{equation*}
    \od{\tilde{W}^{\mu}_{\theta}}{\theta} = \gamma_{\theta}^{0},
    \qquad
    \od{\tilde{W}^{\mu}_{\phi}}{\phi} = \frac{ \sin\phi }{ \sqrt{m r^{2} - J_{0} \sin^{2}\phi} }\,\gamma_{\phi}^{0},
  \end{equation*}
  with some set of constants $\gamma_{\theta}^{0}$ and $\gamma_{\phi}^{0}$ that satisfy
  \begin{equation*}
    \frac{1}{2} \parentheses{
      \parentheses{ \gamma_{\theta}^{0} }^{2}
      + \frac{(\gamma_{\phi}^{0})^{2}}{2 J_{1}}
      + \frac{\mu_{\psi}^{2}}{J_{0}}
    }
    = E,
  \end{equation*}
  which is solved for $\gamma_{\theta}^{0}$ (assumed to be positive) to give
  \begin{equation*}
    \gamma_{\theta}^{0} = \sqrt{ 2 \parentheses{ E - \frac{(\gamma_{\phi}^{0})^{2}}{4J1} - \frac{\mu_{\psi}^{2}}{2 J_{0}}} }.
  \end{equation*}
  Therefore, Eq.~\eqref{eq:gamma-bargamma_mu} with Eq.~\eqref{eq:bargamma_mu-dtildeW_mu} gives
  \begin{align*}
    \gamma(\theta, x, y, \phi, \psi) 
    &= \parentheses{ \mu_{\psi} + \frac{(m r^{2} - J_{0})\,C \sin\phi}{ g(\phi) } } d\theta
    \\
    &\quad -\frac{m r C \cot\phi\,\sin\phi}{g(\phi)}\,(\cos\theta\,dx + \sin\theta\,dy)
    + \gamma_{\phi}^{0}\,d\phi + \mu_{\psi}\,d\psi,
  \end{align*}
  where we defined
  \begin{equation*}
    C \defeq \sqrt{ E - \frac{(\gamma_{\phi}^{0})^{2}}{4J_{1}} - \frac{\mu_{\psi}^{2}}{2J_{0}} },
    \qquad
    g(\phi) \defeq \sqrt{(m r^{2} - J_{0}\sin^{2}\phi )/2}.
  \end{equation*}
  This is the solution of the nonholonomic Hamilton--Jacobi equation~\eqref{eq:NHHJ} obtained in \citet[][Example~4.3]{OhBl2009}.
\end{example}


\section{Conclusion and Future Work}
We established a link between two different approaches towards nonholonomic Hamilton--Jacobi theory; the direct one in \cite{IgLeMa2008, OhBl2009} and the indirect one via Hamiltonization.
We formulated the procedure of Hamiltonization in an intrinsic manner; this helped us understand the relationship between the two approaches and also lead us to the formulas relating the solutions of the two different types of Hamilton--Jacobi equations resulting from the direct and indirect approaches.
The formulas provide us with the following new method to exactly integrate equations of motion of nonholonomic systems:
\begin{enumerate}[\sf 1.]
\item Reduce and Hamiltonize the nonholonomic system.
\item Solve the Chaplygin Hamilton--Jacobi equation for the Hamiltonized reduced system.
\item Use the formula in Theorem~\ref{thm:ChaplyginHJ-NHHJ} or \ref{thm:ChaplyginHJ2-NHHJ} to obtain the solution of the nonholonomic Hamilton--Jacobi equation for the {\em full} dynamics.
\item Integrate the full dynamics using the solution as shown in \citet{OhBl2009}.
\end{enumerate}
A notable feature of this method is that it links the solution of the Hamilton--Jacobi equation for the {\em reduced} system with integration of the {\em full} dynamics.
We illustrated this method with a few examples and obtained the solutions identical to those in \citet{OhBl2009}.

The following questions are interesting to consider for future work:
\begin{itemize}
\item {\em Hamiltonization and Hamilton--Jacobi theory for a more general class of nonholonomic systems with symmetries}:
  This paper only dealt with Chaplygin systems, a special case of the more general class of nonholonomic systems with symmetries treated in \citet{BlKrMaMu1996}.
  We are interested in extending our results to the general case, possibly relating them to the results on existence of an invariant measure in \cite{ZeBl2003}.
  \smallskip
\item {\em Application to nonholonomic systems on Lie groups}:
  Nonholonomic systems on Lie groups, such as the Suslov problem (see, e.g., \citet{Ko1988, ZeBl2000}), often involve an interesting question on integrability: Whether or not the full dynamics is integrable when the reduced dynamics is (see \citet{FeMaPr2009}).
  Relating this question with the Hamilton--Jacobi equations for the full and reduced dynamics is an interesting question to consider.
\end{itemize}

\section*{Acknowledgments}
A.M.~Bloch was supported by the NSF grant DMS-0907949.
O.E.~Fernandez was supported by the Institute for Mathematics and its Applications, through its Postdoctoral Fellowship program, and by the Michigan AGEP Alliance Fellowship.
D.V.~Zenkov was supported by the NSF grants DMS-0604108 and DMS-0908995.
We would like to thank Melvin Leok, Joris Vankerschaver, Hiroaki Yoshimura, and the referee for their many useful comments.


\appendix

\section{Some Lemmas on the Horizontal Lift $\hl^{\mathcal{M}}$}
\begin{lemma}
  \label{lem:G-hl^M}
  The horizontal lift $\hl^{\mathcal{M}}$ is invariant under the action of the cotangent lift of $\Phi$.
  Specifically, for any $h \in G$, we have 
  \begin{equation}
    \label{eq:G-hl^M}
    \hl^{\mathcal{M}}_{h q} = T^{*}_{q}\Phi_{h^{-1}} \circ \hl^{\mathcal{M}}_{q},
  \end{equation}
  where $h q = \Phi_{h}(q)$; or equivalently, for any $\alpha_{\bar{q}} \in T^{*}_{\bar{q}}\bar{Q}$,
  \begin{equation*}
    \alpha^{\rm h}_{h q} = T^{*}_{q}\Phi_{h^{-1}} (\alpha^{\rm h}_{q}).
  \end{equation*}
\end{lemma}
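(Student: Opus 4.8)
The plan is to reduce the identity \eqref{eq:G-hl^M} to three elementary equivariance facts and then chain them along the definition $\hl^{\mathcal{M}}_{q} = \FL_{q} \circ \hl^{\mathcal{D}}_{q} \circ (\F\bar{L})^{-1}_{\bar{q}}$ in Eq.~\eqref{eq:hl^M}. First, since $\Phi_{h}$ moves points within the fibers of $\pi$, we have $\pi \circ \Phi_{h} = \pi$, hence $T_{q}\pi = T_{hq}\pi \circ T_{q}\Phi_{h}$ and in particular $\pi(hq) = \pi(q) = \bar{q}$, so $(\F\bar{L})^{-1}$ is evaluated at the \emph{same} base point on both sides of \eqref{eq:G-hl^M}. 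Combining $\pi \circ \Phi_{h} = \pi$ with the $G$-invariance of the distribution $\mathcal{D}$ from part~(i) of Definition~\ref{def:ChaplyginSystems}, i.e.\ $T_{q}\Phi_{h}(\mathcal{D}_{q}) = \mathcal{D}_{hq}$, the uniqueness property defining the horizontal lift $\hl^{\mathcal{D}}$ yields the first fact,
\begin{equation*}
  \hl^{\mathcal{D}}_{hq} = T_{q}\Phi_{h} \circ \hl^{\mathcal{D}}_{q}.
\end{equation*}

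The second fact concerns the Legendre transform $\FL = g^{\flat}$. The $G$-invariance of the Hamiltonian $H$ (equivalently of the Lagrangian $L$) forces the kinetic energy metric $g$ to be $G$-invariant, i.e.\ $g_{hq}(T_{q}\Phi_{h}\,u, T_{q}\Phi_{h}\,v) = g_{q}(u,v)$. Writing an arbitrary $w \in T_{hq}Q$ as $w = T_{q}\Phi_{h}\parentheses{ T_{hq}\Phi_{h^{-1}}(w) }$ (valid since $T_{q}\Phi_{h}$ is invertible with inverse $T_{hq}\Phi_{h^{-1}}$) and pairing, a short computation gives
\begin{equation*}
  \ip{ g^{\flat}_{hq}(T_{q}\Phi_{h}\,v) }{ w }
  = g_{q}\parentheses{ v, T_{hq}\Phi_{h^{-1}}(w) }
  = \ip{ g^{\flat}_{q}(v) }{ T_{hq}\Phi_{h^{-1}}(w) }
  = \ip{ T^{*}_{q}\Phi_{h^{-1}}\parentheses{ g^{\flat}_{q}(v) } }{ w },
\end{equation*}
that is, $\FL_{hq} \circ T_{q}\Phi_{h} = T^{*}_{q}\Phi_{h^{-1}} \circ \FL_{q}$. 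Finally, the reduced Legendre transform $\F\bar{L} = \bar{g}^{\flat}$ lives entirely on $\bar{Q}$ and is untouched by the $G$-action, so no further equivariance statement is needed for it.

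Putting these together, I would then compute
\begin{align*}
  \hl^{\mathcal{M}}_{hq}
  &= \FL_{hq} \circ \hl^{\mathcal{D}}_{hq} \circ (\F\bar{L})^{-1}_{\bar{q}}
  = \FL_{hq} \circ T_{q}\Phi_{h} \circ \hl^{\mathcal{D}}_{q} \circ (\F\bar{L})^{-1}_{\bar{q}}
  \\
  &= T^{*}_{q}\Phi_{h^{-1}} \circ \FL_{q} \circ \hl^{\mathcal{D}}_{q} \circ (\F\bar{L})^{-1}_{\bar{q}}
  = T^{*}_{q}\Phi_{h^{-1}} \circ \hl^{\mathcal{M}}_{q},
\end{align*}
which is exactly Eq.~\eqref{eq:G-hl^M}; the equivalent pointwise statement $\alpha^{\rm h}_{hq} = T^{*}_{q}\Phi_{h^{-1}}(\alpha^{\rm h}_{q})$ then follows immediately by applying both sides to $\alpha_{\bar{q}} \in T^{*}_{\bar{q}}\bar{Q}$. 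The two chained equivariance identities are routine; the only point demanding care is the bookkeeping of variance conventions — in particular verifying that it is the cotangent lift $T^{*}_{q}\Phi_{h^{-1}}: T^{*}_{q}Q \to T^{*}_{hq}Q$ (and not its inverse) that appears when dualizing $g^{\flat}_{hq} \circ T_{q}\Phi_{h}$, and confirming that all base-point evaluations are consistent. I do not anticipate any genuine obstacle beyond this.
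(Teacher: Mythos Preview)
Your proposal is correct and follows essentially the same approach as the paper: both chain the $G$-equivariance of $\hl^{\mathcal{D}}$ and of $\FL$ along the defining composition in Eq.~\eqref{eq:hl^M}. The only cosmetic difference is that you establish $\FL_{hq}\circ T_{q}\Phi_{h} = T^{*}_{q}\Phi_{h^{-1}}\circ\FL_{q}$ via the $G$-invariance of the metric $g$ (using $\FL = g^{\flat}$), whereas the paper computes it directly from the derivative definition of $\FL$ and the $G$-invariance of $L$; these are equivalent since the paper already records $\FL = g^{\flat}$.
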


\begin{proof}
  From the definition of $\hl^{\mathcal{M}}_{q}$ and the $G$-invariance of $\hl^{\mathcal{D}}$, we have
  \begin{align*}
    \hl^{\mathcal{M}}_{h q}
    &= \FL_{h q} \circ \hl^{\mathcal{D}}_{h q} \circ (\F\bar{L})^{-1}_{\bar{q}}
    \\
    &= \FL_{h q} \circ T_{q}\Phi_{h} \circ \hl^{\mathcal{D}}_{q} \circ (\F\bar{L})^{-1}_{\bar{q}}.
  \end{align*}
  Now, using the $G$-invariance of the Lagrangian $L$, we have, for any $v_{q} \in T_{q}Q$ and $w_{h q} \in T_{h q}Q$, 
  \begin{align*}
    \ip{ \FL_{h q} \circ T_{q}\Phi_{h} (v_{q}) }{ w_{h q} }
    &=  \left.\od{}{\varepsilon} L( T_{q}\Phi_{h}(v_{q}) + \varepsilon\,w_{h q}) \right|_{\varepsilon=0}
    \\
    &=  \left.\od{}{\varepsilon} L \circ T_{q}\Phi_{h} ( v_{q} + \varepsilon\,T_{h q}\Phi_{h^{-1}}(w_{h q}) ) \right|_{\varepsilon=0}
    \\
    &=  \left.\od{}{\varepsilon} L ( v_{q} + \varepsilon\,T_{h q}\Phi_{h^{-1}}(w_{h q}) ) \right|_{\varepsilon=0}
    \\
    &= \ip{ \FL_{q}(v_{q}) }{ T_{h q}\Phi_{h^{-1}}(w_{h q}) }
    \\
    &= \ip{ T^{*}_{q}\Phi_{h^{-1}}(\FL_{q}(v_{q})) }{ w_{h q} },
  \end{align*}
  and thus $\FL_{h q} \circ T\Phi_{h} = T^{*}\Phi_{h^{-1}} \circ \FL_{q}$.
  Hence we obtain
  \begin{align*}
    \hl^{\mathcal{M}}_{h q}
    &= T^{*}_{q}\Phi_{h^{-1}} \circ \FL_{q} \circ \hl^{\mathcal{D}}_{q} \circ (\F\bar{L})^{-1}_{\bar{q}}
    \\
    &= T^{*}_{q}\Phi_{h^{-1}} \circ \hl^{\mathcal{M}}_{q}.
    \qedhere
  \end{align*}
\end{proof}

\begin{lemma}
  \label{lem:hl-pairing}
  Let $q$ be an arbitrary point in $Q$ and $\bar{q} = \pi(q) \in \bar{Q}$.
  For any $\alpha_{\bar{q}} \in T^{*}_{\bar{q}}\bar{Q}$ and $v_{\bar{q}} \in T_{\bar{q}}\bar{Q}$, the following identity holds:
  \begin{equation*}
    \ip{ \hl^{\mathcal{M}}_{q}(\alpha_{\bar{q}}) }{ \hl^{\mathcal{D}}_{q}(v_{\bar{q}}) } = \ip{ \alpha_{\bar{q}} }{ v_{\bar{q}} }.
  \end{equation*}
\end{lemma}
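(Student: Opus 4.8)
The plan is to prove the identity by simply unwinding the definitions of the two horizontal lifts $\hl^{\mathcal{D}}_{q}$ and $\hl^{\mathcal{M}}_{q}$ and reducing everything to the compatibility relation \eqref{eq:barg} between the metric $g$ on $Q$ and the induced metric $\bar{g}$ on $\bar{Q}$. First I would set $u_{\bar{q}} \defeq (\bar{g}^{\flat})^{-1}_{\bar{q}}(\alpha_{\bar{q}}) \in T_{\bar{q}}\bar{Q}$, so that $\alpha_{\bar{q}} = \bar{g}^{\flat}_{\bar{q}}(u_{\bar{q}})$ by construction, and hence $\ip{\alpha_{\bar{q}}}{v_{\bar{q}}} = \bar{g}_{\bar{q}}(u_{\bar{q}}, v_{\bar{q}})$; note that since $\bar{L}$ has the simple form in \eqref{eq:ReducedLagrangian} we have $\F\bar{L} = \bar{g}^{\flat}$, just as $\FL = g^{\flat}$ on the full space.

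Next, using the formula $\hl^{\mathcal{M}}_{q} = g^{\flat}_{q} \circ \hl^{\mathcal{D}}_{q} \circ (\bar{g}^{\flat})^{-1}_{\bar{q}}$ from \eqref{eq:hl^M}, I would compute $\hl^{\mathcal{M}}_{q}(\alpha_{\bar{q}}) = g^{\flat}_{q}\bigl(\hl^{\mathcal{D}}_{q}(u_{\bar{q}})\bigr) \in \mathcal{M}_{q} \subset T^{*}_{q}Q$. Pairing this with $\hl^{\mathcal{D}}_{q}(v_{\bar{q}}) \in \mathcal{D}_{q} \subset T_{q}Q$ and invoking the defining property $\ip{g^{\flat}_{q}(a)}{b} = g_{q}(a,b)$ of $g^{\flat}$, the left-hand side of the asserted identity becomes $g_{q}\bigl(\hl^{\mathcal{D}}_{q}(u_{\bar{q}}), \hl^{\mathcal{D}}_{q}(v_{\bar{q}})\bigr)$. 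By the definition of $\bar{g}$ in \eqref{eq:barg}, this equals $\bar{g}_{\bar{q}}(u_{\bar{q}}, v_{\bar{q}})$, which by the first step is precisely $\ip{\alpha_{\bar{q}}}{v_{\bar{q}}}$, completing the proof.

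I do not expect any genuine obstacle here: the statement is essentially the assertion that $\hl^{\mathcal{M}}_{q}$ is the (metric-twisted) adjoint of $\hl^{\mathcal{D}}_{q}$, which is built into the very definition \eqref{eq:hl^M} via the commutative diagram there. The only point requiring a little care is bookkeeping of the two different natural pairings involved — the one on $T^{*}_{q}Q \times T_{q}Q$ on the left and the one on $T^{*}_{\bar{q}}\bar{Q} \times T_{\bar{q}}\bar{Q}$ on the right — and remembering that $\hl^{\mathcal{D}}_{q}(v_{\bar{q}})$ indeed lands in $\mathcal{D}_{q}$ so that it may be fed into $g^{\flat}_{q}$ and paired with an element of $T^{*}_{q}Q$. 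One could alternatively phrase the argument without introducing $u_{\bar{q}}$ at all, by taking adjoints directly in the identity $\hl^{\mathcal{M}}_{q} = \FL_{q} \circ \hl^{\mathcal{D}}_{q} \circ (\F\bar{L})^{-1}_{\bar{q}}$ and using the self-adjointness of $g^{\flat}$ and $\bar{g}^{\flat}$, but the substitution route above keeps the computation most transparent.
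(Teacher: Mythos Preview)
Your proposal is correct and follows essentially the same route as the paper: unwind $\hl^{\mathcal{M}}_{q} = g^{\flat}_{q} \circ \hl^{\mathcal{D}}_{q} \circ (\bar{g}^{\flat})^{-1}_{\bar{q}}$, use $\ip{g^{\flat}_{q}(a)}{b} = g_{q}(a,b)$, then apply the defining relation~\eqref{eq:barg} for $\bar{g}$ and cancel $\bar{g}^{\flat}_{\bar{q}} \circ (\bar{g}^{\flat})^{-1}_{\bar{q}}$. The only cosmetic difference is that you introduce the abbreviation $u_{\bar{q}} \defeq (\bar{g}^{\flat})^{-1}_{\bar{q}}(\alpha_{\bar{q}})$, whereas the paper carries the composite expression through the chain of equalities directly.
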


\begin{proof}
  Follows from the definitions of $\bar{g}$ and $\hl^{\mathcal{M}}_{q}$ (see Eqs.~\eqref{eq:barg} and \eqref{eq:hl^M}, respectively):
  \begin{align*}
    \ip{ \hl^{\mathcal{M}}_{q}(\alpha_{\bar{q}}) }{ \hl^{\mathcal{D}}_{q}(v_{\bar{q}}) }
    &= \ip{ g_{q}^{\flat} \circ \hl^{\mathcal{D}}_{q} \circ (\bar{g}^{\flat})^{-1}_{\bar{q}}(\alpha_{\bar{q}}) }{ \hl^{\mathcal{D}}_{q}(v_{\bar{q}}) }
    \\
    &= g_{q}\parentheses{ \hl^{\mathcal{D}}_{q} \circ (\bar{g}^{\flat})^{-1}_{\bar{q}}(\alpha_{\bar{q}}), \hl^{\mathcal{D}}_{q}(v_{\bar{q}}) }
    \\
    &= \bar{g}_{\bar{q}}\parentheses{ (\bar{g}^{\flat})^{-1}_{\bar{q}}(\alpha_{\bar{q}}), v_{\bar{q}} }
    \\
    &= \ip{ \bar{g}^{\flat}_{\bar{q}} \circ (\bar{g}^{\flat})^{-1}_{\bar{q}}(\alpha_{\bar{q}})}{ v_{\bar{q}} }
   \\
    &= \ip{ \alpha_{\bar{q}} }{ v_{\bar{q}} }. \qedhere
  \end{align*}
\end{proof}

\section{Construction of $\varphi_{\mu}: {\bf J}_{K}^{-1}(\mu)/K \to T^{*}\tilde{Q}$}
\label{sec:varphi_mu}
We briefly summarize the construction of the map $\varphi_{\mu}: {\bf J}_{K}^{-1}(\mu)/K \to T^{*}\tilde{Q}$ that appears in Proposition~\ref{prop:FurtherReduction-K_mu=K} following \citet[Section~2.2]{MaMiOrPeRa2007}.
First define $\bar{\varphi}_{0}: {\bf J}_{K}^{-1}(0) \to T^{*}\tilde{Q}$ by
\begin{equation}
  \label{eq:barvarphi_0}
  \ip{ \bar{\varphi}_{0}(p_{\bar{q}}) }{ T_{\bar{q}}\bar{\pi}(v_{\bar{q}}) } = \ip{ p_{\bar{q}} }{ v_{\bar{q}} }
\end{equation}
for any $p_{\bar{q}} \in T_{\bar{q}}\bar{Q}$ and $v_{\bar{q}} \in T_{\bar{q}}\bar{Q}$.
Let $\pi_{0}: {\bf J}_{K}^{-1}(0) \to {\bf J}_{K}^{-1}(0)/K$ be the projection to the quotient.
Then, $\varphi_{0}: {\bf J}_{K}^{-1}(0)/K \to T^{*}\tilde{Q}$ is uniquely characterized by the relation
\begin{equation}
  \label{eq:varphi_0}
  \varphi_{0} \circ \pi_{0} = \bar{\varphi}_{0}.
\end{equation}
It can be shown that $\varphi_{0}$ is in fact a diffeomorphism (see \citet[Proof of Theorem~2.2.2 on pp.~62--63]{MaMiOrPeRa2007}).
We also introduce the shift map
\begin{equation*}
  {\rm Shift}_{\mu}: T^{*}\bar{Q} \to T^{*}\bar{Q}
\end{equation*}
defined by
\begin{equation}
  \label{eq:Shift_mu}
  {\rm Shift}_{\mu}(p_{\bar{q}}) \defeq p_{\bar{q}} - \alpha_{\mu}(\bar{q}),
\end{equation}
where $\alpha_{\mu}$ is the one-form on $\bar{Q}$ defined in Eq.~\eqref{eq:alpha_mu}.
This gives rise to the $K$-equivariant diffeomorphism
\begin{equation*}
  {\rm shift}_{\mu}: {\bf J}_{K}^{-1}(\mu) \to {\bf J}_{K}^{-1}(0),
\end{equation*}
and the commutative diagram below, where $i_{\mu}$ and $i_{0}$ are both inclusions.
\begin{equation*}
  \vcenter{
    \xymatrix@!0@R=0.75in@C=1in{
      T^{*}\bar{Q} \ar[r]^{{\rm Shift}_{\mu}} & T^{*}\bar{Q}
      \\
      {\bf J}_{K}^{-1}(\mu) \ar[u]^{i_{\mu}} \ar[r]_{{\rm shift}_{\mu}} & {\bf J}_{K}^{-1}(0) \ar[u]_{i_{0}}
    }
  }
\end{equation*}
Since the map ${\rm shift}_{\mu}$ is $K$-equivariant, it induces the diffeomorphism
\begin{equation*}
  \widetilde{\rm shift}_{\mu}: {\bf J}_{K}^{-1}(\mu)/K \to {\bf J}_{K}^{-1}(0)/K.
\end{equation*}
The map $\varphi_{\mu}: {\bf J}_{K}^{-1}(\mu)/K \to T^{*}\tilde{Q}$ is then defined by
\begin{equation}
  \label{eq:varphi_mu}
  \varphi_{\mu} \defeq \varphi_{0} \circ \widetilde{\rm shift}_{\mu}.
\end{equation}
The diagram below summarizes the construction of $\varphi_{\mu}$.
\begin{equation*}
  \vcenter{
    \xymatrix@!0@R=0.9in@C=1.15in{
      {\bf J}_{K}^{-1}(\mu) \ar[r]^{{\rm shift}_{\mu}} \ar[d]_{\pi_{\mu}} & {\bf J}_{K}^{-1}(0) \ar[d]^{\pi_{0}} \ar[rd]^{\bar{\varphi}_{0}} & 
      \\
      {\bf J}_{K}^{-1}(\mu)/K \ar[r]_{\widetilde{\rm shift}_{\mu}} \ar@/_{2pc}/[rr]^{\varphi_{\mu}} & {\bf J}_{K}^{-1}(0)/K \ar[r]_{\varphi_{0}} & T^{*}\tilde{Q}
    }
  }
\end{equation*}

\begin{example}[The Snakeboard; see Example~\ref{ex:Snakeboard}]
  \label{ex:varphi_mu-Snakeboard}
  Let us first determine $\bar{\varphi}_{0}$ and $\varphi_{0}$.
  Note that we may parametrize ${\bf J}_{K}^{-1}(0)$ as follows:
  \begin{equation*}
    {\bf J}_{K}^{-1}(0)
    = \setdef{ (\theta, \phi, \psi, p_{\theta}, p_{\phi}, p_{\psi}) \in T^{*}\bar{Q} }{ p_{\psi} = 0 }
    = \{ (\theta, \phi, \psi, p_{\theta}, p_{\phi}) \},
  \end{equation*}
  and also that $\bar{\pi}(\theta, \phi, \psi) = (\theta, \phi)$ and hence $T\bar{\pi}(v_{\theta}, v_{\phi}, v_{\psi}) = (v_{\theta} , v_{\phi})$.
  Therefore, Eq.~\eqref{eq:barvarphi_0} gives
  \begin{equation*}
    \bar{\varphi}_{0}(\theta, \phi, \psi, p_{\theta}, p_{\phi}) = (\theta, \phi, p_{\theta}, p_{\phi}).
  \end{equation*}
  Since $\pi_{0}(\theta, \phi, \psi, p_{\theta}, p_{\phi}) = (\theta, \phi, p_{\theta}, p_{\phi})$, Eq.~\eqref{eq:varphi_0} gives
  \begin{equation*}
    \varphi_{0}(\theta, \phi, p_{\theta}, p_{\phi}) = (\theta, \phi, p_{\theta}, p_{\phi}).
  \end{equation*}
  Now, let us determine the map $\widetilde{\rm shift}_{\mu}$: Using the $\alpha_{\mu}$ in Eq.~\eqref{eq:alpha_mu-Snakeboard}, we find, from Eq.~\eqref{eq:Shift_mu},
  \begin{equation*}
    {\rm Shift}_{\mu}(\theta, \phi, \psi, p_{\theta}, p_{\phi}, p_{\psi}) = (\theta, \phi, \psi, p_{\theta} - \mu_{\psi}, p_{\phi}, p_{\psi} - \mu_{\psi}).
  \end{equation*}
  Parameterizing ${\bf J}_{K}^{-1}(\mu)$ as
  \begin{equation*}
    {\bf J}_{K}^{-1}(\mu)
    = \setdef{ (\theta, \phi, \psi, p_{\theta}, p_{\phi}, p_{\psi}) \in T^{*}\bar{Q} }{ p_{\psi} = \mu_{\psi} }
    = \{ (\theta, \phi, \psi, p_{\theta}, p_{\phi}) \},
  \end{equation*}
  we obtain
  \begin{equation*}
    {\rm shift}_{\mu}(\theta, \phi, \psi, p_{\theta}, p_{\phi}) = (\theta, \phi, \psi, p_{\theta} - \mu_{\psi}, p_{\phi}),
  \end{equation*}
  and hence
  \begin{equation*}
    \widetilde{\rm shift}_{\mu}(\theta, \phi, p_{\theta}, p_{\phi}) = (\theta, \phi, p_{\theta} - \mu_{\psi}, p_{\phi}).
  \end{equation*}
  As a result, we obtain, from Eq.~\eqref{eq:varphi_mu},
  \begin{equation}
    \label{eq:varphi_mu-Snakeboard}
    \varphi_{\mu}(\theta, \phi, p_{\theta}, p_{\phi}) = (\theta, \phi, p_{\theta} - \mu_{\psi}, p_{\phi}).
  \end{equation}
\end{example}

\section{On the Horizontal Lift $\hl^{\bar{\mathcal{M}}}$}
\label{sec:hl^barM}

\begin{lemma}
  \label{lem:hl^barM}
  Let $\bar{q}$ be a point in $\bar{Q}$ and $\tilde{q} = \bar{\pi}(\bar{q})$.
  Then, we have $\hl^{\bar{\mathcal{M}}}_{\bar{q}} \circ \bar{\varphi}_{0}(p_{\bar{q}}) = p_{\bar{q}}$ for any $p_{\bar{q}} \in {\bf J}_{K}^{-1}(0)$ and also $\bar{\varphi}_{0} \circ \hl^{\bar{\mathcal{M}}} = \id_{T^{*}\tilde{Q}}$, and the diagram
  \begin{equation*}
    \vcenter{
      \xymatrix@!0@R=0.9in@C=1.15in{
        {\bf J}_{K}^{-1}(0) \ar[d]_{\pi_{0}} & 
        \\
        {\bf J}_{K}^{-1}(0)/K \ar[r]_{\varphi_{0}} & T^{*}\tilde{Q} \ar[ul]_{\hl^{\bar{\mathcal{M}}}}
      }
    }
  \end{equation*}
  commutes with an appropriate choice of the base point $\bar{q}$ of the image of $\hl^{\bar{\mathcal{M}}}$.
\end{lemma}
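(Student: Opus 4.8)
The plan is to deduce the lemma from the second--stage analogue of Lemma~\ref{lem:hl-pairing}, so the first task is to put $\bar{\mathcal{D}}$, $\hl^{\bar{\mathcal{D}}}$, and $\hl^{\bar{\mathcal{M}}}$ into exactly the algebraic shape of the first reduction. I would begin by observing that, just like $\bar{L}$, the function $\tilde{L}\defeq\bar{L}\circ\hl^{\bar{\mathcal{D}}}$ is again a simple Lagrangian on $\tilde{Q}$: Condition~\ref{asmptn:K-invariance} (split $\bar{H}$ into its fiber--quadratic and fiber--constant parts) forces $K$ to act on $\bar{Q}$ by $\bar{g}$--isometries and $\bar{V}$ to be $K$--invariant, so the bilinear form $\tilde{g}_{\tilde{q}}(v,w)\defeq\bar{g}_{\bar{q}}(\hl^{\bar{\mathcal{D}}}_{\bar{q}}v,\hl^{\bar{\mathcal{D}}}_{\bar{q}}w)$ does not depend on the chosen $\bar{q}\in\bar{\pi}^{-1}(\tilde{q})$ and defines a metric $\tilde{g}$ on $\tilde{Q}$ with $\F\tilde{L}=\tilde{g}^{\flat}$ (and $\F\bar{L}=\bar{g}^{\flat}$). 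Then \eqref{eq:hl^barM} is literally \eqref{eq:hl^M} with $(Q,\bar{Q},g,\bar{g},\hl^{\mathcal{D}})$ replaced by $(\bar{Q},\tilde{Q},\bar{g},\tilde{g},\hl^{\bar{\mathcal{D}}})$, so Lemma~\ref{lem:hl-pairing} applies verbatim and yields
\begin{equation*}
  \ip{\hl^{\bar{\mathcal{M}}}_{\bar{q}}(\alpha_{\tilde{q}})}{\hl^{\bar{\mathcal{D}}}_{\bar{q}}(v_{\tilde{q}})}=\ip{\alpha_{\tilde{q}}}{v_{\tilde{q}}},\qquad \alpha_{\tilde{q}}\in T^{*}_{\tilde{q}}\tilde{Q},\ v_{\tilde{q}}\in T_{\tilde{q}}\tilde{Q}.
\end{equation*}
I would also record that $\hl^{\bar{\mathcal{M}}}_{\bar{q}}$ indeed takes values in $\bar{\mathcal{M}}\cap T^{*}_{\bar{q}}\bar{Q}$, since $\hl^{\bar{\mathcal{D}}}_{\bar{q}}$ lands in $\bar{\mathcal{D}}_{\bar{q}}=\ker\mathcal{A}_{K}|_{\bar{q}}=(\ker T_{\bar{q}}\bar{\pi})^{\perp_{\bar{g}}}$ and $\F\bar{L}=\bar{g}^{\flat}$; equivalently, $\hl^{\bar{\mathcal{M}}}_{\bar{q}}(\alpha_{\tilde{q}})$ annihilates every vertical vector $\eta_{\bar{Q}}(\bar{q})$.

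Next I would establish $\bar{\varphi}_{0}\circ\hl^{\bar{\mathcal{M}}}_{\bar{q}}=\id_{T^{*}_{\tilde{q}}\tilde{Q}}$. Given $\alpha_{\tilde{q}}\in T^{*}_{\tilde{q}}\tilde{Q}$ and an arbitrary $w\in T_{\tilde{q}}\tilde{Q}$, apply the defining relation \eqref{eq:barvarphi_0} with $v_{\bar{q}}=\hl^{\bar{\mathcal{D}}}_{\bar{q}}(w)$ (so that $T_{\bar{q}}\bar{\pi}(v_{\bar{q}})=w$) and then the pairing identity above to get $\ip{\bar{\varphi}_{0}(\hl^{\bar{\mathcal{M}}}_{\bar{q}}(\alpha_{\tilde{q}}))}{w}=\ip{\hl^{\bar{\mathcal{M}}}_{\bar{q}}(\alpha_{\tilde{q}})}{\hl^{\bar{\mathcal{D}}}_{\bar{q}}(w)}=\ip{\alpha_{\tilde{q}}}{w}$, so the composite is the identity. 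Since $\hl^{\bar{\mathcal{M}}}_{\bar{q}}\colon T^{*}_{\tilde{q}}\tilde{Q}\to\bar{\mathcal{M}}\cap T^{*}_{\bar{q}}\bar{Q}$ and $\bar{\varphi}_{0}\colon\bar{\mathcal{M}}\cap T^{*}_{\bar{q}}\bar{Q}\to T^{*}_{\tilde{q}}\tilde{Q}$ are linear maps between spaces of equal finite dimension (namely $\dim\tilde{Q}$), this one--sided inverse is automatically two--sided, so $\hl^{\bar{\mathcal{M}}}_{\bar{q}}\circ\bar{\varphi}_{0}(p_{\bar{q}})=p_{\bar{q}}$ for every $p_{\bar{q}}\in{\bf J}_{K}^{-1}(0)$ lying over $\bar{q}$; alternatively one checks this directly by pairing both sides against a general $v_{\bar{q}}=\hl^{\bar{\mathcal{D}}}_{\bar{q}}(w)+\eta_{\bar{Q}}(\bar{q})$ and using that $p_{\bar{q}}\in{\bf J}_{K}^{-1}(0)$ kills the vertical part.

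Finally, commutativity of the displayed triangle follows at once from \eqref{eq:varphi_0}: for $p_{\bar{q}}\in{\bf J}_{K}^{-1}(0)$ with base point $\bar{q}$ we have $\hl^{\bar{\mathcal{M}}}_{\bar{q}}\circ\varphi_{0}\circ\pi_{0}(p_{\bar{q}})=\hl^{\bar{\mathcal{M}}}_{\bar{q}}\circ\bar{\varphi}_{0}(p_{\bar{q}})=p_{\bar{q}}$, and choosing this $\bar{q}$ as the base point of the image of $\hl^{\bar{\mathcal{M}}}$ is exactly the caveat in the statement. The only genuinely delicate step is the first one --- verifying that the second reduction again produces a \emph{simple} Lagrangian on $\tilde{Q}$, which is where Condition~\ref{asmptn:K-invariance} enters, so that the first--stage lemmas transfer wholesale; everything after that is routine bookkeeping with the horizontal--vertical splitting.
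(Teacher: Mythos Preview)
Your proof is correct and follows essentially the same route as the paper: both first establish $\bar{\varphi}_{0}\circ\hl^{\bar{\mathcal{M}}}=\id_{T^{*}\tilde{Q}}$ via the second--stage analogue of Lemma~\ref{lem:hl-pairing} applied to \eqref{eq:barvarphi_0}, and then obtain the other identity and the commutativity of the triangle. The paper proves $\hl^{\bar{\mathcal{M}}}_{\bar{q}}\circ\bar{\varphi}_{0}(p_{\bar{q}})=p_{\bar{q}}$ by your ``alternative'' direct pairing against a horizontal--vertical decomposition of $v_{\bar{q}}$, whereas you lead with the linear--algebra dimension count; both are fine, and your extra care in justifying that $\tilde{L}$ is again a simple Lagrangian (so that Lemma~\ref{lem:hl-pairing} transfers) is a point the paper leaves implicit.
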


\begin{proof}
  Let $p_{\tilde{q}} \in T_{\tilde{q}}^{*}\tilde{Q}$ and $v_{\tilde{q}} \in T_{\tilde{q}}\tilde{Q}$ be arbitrary.
  Then, Eq.~\eqref{eq:barvarphi_0} implies that
  \begin{align*}
    \ip{ \bar{\varphi}_{0} \circ \hl^{\bar{\mathcal{M}}}_{\bar{q}}(p_{\tilde{q}}) }{ T_{\bar{q}}\bar{\pi}\parentheses{ \hl^{\bar{\mathcal{D}}}_{\bar{q}}(v_{\tilde{q}}) } }
    &= \ip{ \hl^{\bar{\mathcal{M}}}_{\bar{q}}(p_{\tilde{q}}) }{ \hl^{\bar{\mathcal{D}}}_{\bar{q}}(v_{\tilde{q}}) }
    \\
    &= \ip{ p_{\tilde{q}} }{ v_{\tilde{q}} },
  \end{align*}
  where we used an identity on pairings between the images of $\hl^{\bar{\mathcal{M}}}$ and $\hl^{\bar{\mathcal{D}}}$, which can be shown in the same way as Lemma~\ref{lem:hl-pairing}.
  However, the definition $\hl^{\bar{\mathcal{D}}} \defeq (T\bar{\pi}|_{\bar{\mathcal{D}}})^{-1}$ implies $T\bar{\pi} \circ \hl^{\bar{\mathcal{D}}} = \id_{T\tilde{Q}}$, and thus the above equation reduces to
  \begin{equation*}
    \ip{ \bar{\varphi}_{0} \circ \hl^{\bar{\mathcal{M}}}_{\bar{q}}(p_{\tilde{q}}) }{ v_{\tilde{q}} }
    = \ip{ p_{\tilde{q}} }{ v_{\tilde{q}} }.
  \end{equation*}
  Therefore, we have $\bar{\varphi}_{0} \circ \hl^{\bar{\mathcal{M}}} = \id_{T^{*}\tilde{Q}}$.
  So Eq.~\eqref{eq:varphi_0} gives
  \begin{equation*}
    \varphi_{0} \circ \pi_{0} \circ \hl^{\bar{\mathcal{M}}} = \id_{T^{*}\tilde{Q}},
  \end{equation*}
  which also implies
  \begin{equation*}
    \pi_{0} \circ \hl^{\bar{\mathcal{M}}} \circ \varphi_{0} = \id_{{\bf J}_{K}^{-1}(0)/K},
  \end{equation*}
  since $\varphi_{0}$ is a diffeomorphism.

  To show $\hl^{\bar{\mathcal{M}}}_{\bar{q}} \circ \bar{\varphi}_{0}(p_{\bar{q}}) = p_{\bar{q}}$, take an arbitrary $v_{\bar{q}} \in T_{\bar{q}}\bar{Q}$.
  Then, we may decompose $v_{\bar{q}}$ into the horizontal and vertical parts, i.e.,
  \begin{equation*}
    v_{\bar{q}} = \hl^{\bar{\mathcal{D}}}_{\bar{q}}(\tilde{v}_{\tilde{q}}) + (\mathcal{A}_{K}(v_{\bar{q}}) )_{\bar{Q}}(\bar{q})
  \end{equation*}
  where $\tilde{v}_{\tilde{q}} = T_{\bar{q}}\bar{\pi}(v_{\bar{q}})$.
  Therefore, we obtain
  \begin{align*}
    \ip{ \hl^{\bar{\mathcal{M}}}_{\bar{q}} \circ \bar{\varphi}_{0}(p_{\bar{q}}) }{ v_{\bar{q}} }
    &= \ip{ \hl^{\bar{\mathcal{M}}}_{\bar{q}} \circ \bar{\varphi}_{0}(p_{\bar{q}}) }{ \hl^{\bar{\mathcal{D}}}_{\bar{q}}(\tilde{v}_{\tilde{q}}) }
    + \ip{ \hl^{\bar{\mathcal{M}}}_{\bar{q}} \circ \bar{\varphi}_{0}(p_{\bar{q}}) }{ (\mathcal{A}_{K}(v_{\bar{q}}) )_{\bar{Q}}(\bar{q}) }
    \\
    &= \ip{ \bar{\varphi}_{0}(p_{\bar{q}}) }{ \tilde{v}_{\tilde{q}} }
    + \ip{ {\bf J}_{K} \circ \hl^{\bar{\mathcal{M}}}_{\bar{q}} \circ \bar{\varphi}_{0}(p_{\bar{q}}) }{ \mathcal{A}_{K}(v_{\bar{q}}) }
    \\
    &= \ip{ \bar{\varphi}_{0}(p_{\bar{q}}) }{ T_{\bar{q}}\bar{\pi}(v_{\bar{q}}) }
    \\
    &= \ip{ p_{\bar{q}} }{ v_{\bar{q}} },
  \end{align*}
  where we used the fact that $\hl^{\bar{\mathcal{M}}}$ takes values in $\bar{\mathcal{M}} \defeq {\bf J}_{K}^{-1}(0)$, and also Eq.~\eqref{eq:barvarphi_0}.
  Hence $\hl^{\bar{\mathcal{M}}}_{\bar{q}} \circ \bar{\varphi}_{0}(p_{\bar{q}}) = p_{\bar{q}}$ and also, by Eq.~\eqref{eq:varphi_0}, $\hl^{\bar{\mathcal{M}}}_{\bar{q}} \circ \varphi_{0} \circ \pi_{0}(p_{\bar{q}}) = p_{\bar{q}}$.
\end{proof}


\bibliography{ChaplyginHJ}
\bibliographystyle{plainnat}

\end{document}